\newcolumntype{L}[1]{>{\raggedright\let\newline\\\arraybackslash\hspace{0pt}}m{#1}}
\newcolumntype{C}[1]{>{\centering\let\newline\\\arraybackslash\hspace{0pt}}m{#1}}
\newcolumntype{R}[1]{>{\raggedleft\let\newline\\\arraybackslash\hspace{0pt}}m{#1}}
\newtheorem{thm}{\textbf{Theorem}}
\newtheorem{lem}{\textbf{Lemma}}
\newtheorem{prop}{\textbf{Proposition}}[thm]
\def\BibTeX{{\rm B\kern-.05em{\sc i\kern-.025em b}\kern-.08em
    T\kern-.1667em\lower.7ex\hbox{E}\kern-.125emX}}
\begin{document}

\title{Low Complexity Frequency Domain Nonlinear\\ Self-Interference Cancellation for Flexible Duplex}

\author{Yonghwi Kim,~\IEEEmembership{Graduate Student Member, IEEE}, Kai-Kit Wong,~\IEEEmembership{Fellow, IEEE},\\ Jianzhong (Charlie) Zhang,~\IEEEmembership{Fellow, IEEE}, and Chan-Byoung Chae~\IEEEmembership{Fellow, IEEE}

\thanks{Y. Kim and C.-B. Chae are with the School of Integrated Technology, Yonsei University, Seoul, 03722, South Korea (e-mail: eric\_kim@yonsei.ac.kr; cbchae@yonsei.ac.kr).}%
\thanks{K. Wong is with the Department of Electronic and Electrical Engineering, University College London, U.K. (e-mail: kai-kit.wong@ucl.ac.uk). He is also affiliated with Yonsei Frontier Lab., Yonsei University, Seoul, 03722, South Korea (e-mail: kai-kit.wong@ucl.ac.uk).}
\thanks{J. Zhang is with the Advanced Communications Research Center, Samsung Research, Seoul, South Korea.  He is also with the Samsung Research America, Plano, TX, 75023 USA (e-mail: jianzhong.z@samsung.com).}

\thanks{Manuscript received XXX, XX, 2024; revised XXX, XX, 2024.}}
\markboth{IEEE Journal,~Vol.~XX, No.~XX, XXX~2024}{}
\maketitle

\begin{abstract}     

Nonlinear self-interference (SI) cancellation is essential for mitigating the impact of transmitter-side nonlinearity on overall SI cancellation performance in flexible duplex systems, including in-band full-duplex (IBFD) and sub-band full-duplex (SBFD). Digital SI cancellation (SIC) must address the nonlinearity in the power amplifier (PA) and the in-phase/quadrature-phase (IQ) imbalance from up/down converters at the base station (BS), in addition to analog SIC. In environments with rich signal reflection paths, however, the required number of delayed taps for time-domain nonlinear SI cancellation increases exponentially with the number of multipaths, leading to excessive complexity. This paper introduces a novel, low-complexity, frequency domain nonlinear SIC, suitable for flexible duplex systems with multiple-input and multiple-output (MIMO) configurations. The key approach involves decomposing nonlinear SI into a nonlinear basis and categorizing them based on their effectiveness across any flexible duplex setting. The proposed algorithm is founded on our analytical results of intermodulation distortion (IMD) in the frequency domain and utilizes a specialized pilot sequence. This algorithm is directly applicable to orthogonal frequency division multiplexing (OFDM) multi-carrier systems and offers lower complexity than conventional digital SIC methods. Additionally, we assess the impact of the proposed SIC on flexible duplex systems through system-level simulation (SLS) using 3D ray-tracing and proof-of-concept (PoC) measurement.
\end{abstract}

\begin{IEEEkeywords}
full-duplex (FD), flexible duplex, and self-interference cancellation (SIC).
\end{IEEEkeywords}

\section{Introduction}

\IEEEPARstart{F}{lexible} duplex, including in-band full-duplex (IBFD) and sub-band full-duplex (SBFD), such as cross-division duplex (XDD), is emerging to meet the demands of beyond 5th generation (B5G) and 6th generation (6G) cellular networks' requirements~\cite{FD6G,advD,GuestChae,3gppDupEn,kim2015survey,FDnet_hwi}. These systems provide flexible uplink (UL) and downlink (DL) resource allocation. This helps to overcome the performance limitations of traditional duplex technologies like time-division duplex (TDD) and frequency-division duplex (FDD), which have issues such as guard-band configuration and switching latency~\cite{alpha1,US-alpha,flex1}.

 {In 5G communications, significant path loss challenges cell coverage. Researchers are addressing this by exploring duplex enhancement~\cite{3gppDupEn} and beamforming technologies~\cite{CBF, RISNOMA_hwi}. Effective resource allocation is crucial to improving throughput, the cell coverage, and fairness among users~\cite{FDnet_hwi}. Recently, the standard has defined duplex enhancement in terms of SBFD operation and is actively addressing it as a work item within the 3rd generation partnership project (3GPP)~\cite{3gppDupEn}. Unlike the previous papers that dealt with static subcarrier allocation, a recent document~\cite{3gppDupEn} discusses dynamic SBFD. This means that UL/DL subcarrier allocation can be changed depending on the situation, and SIC technology needs to be advanced accordingly. IBFD systems fully overlap UL and DL spectrum to achieve maximum spectral efficiency~\cite{FD,FDchae}. SBFD systems, like XDD, adjust the UL spectrum ratio to extend the coverage of base stations (BS) in cellular networks~\cite{XDD}. As a next-generation technology, flexible duplex enhances both throughput and coverage.}

Managing UL/DL in IBFD and SBFD requires both analog and digital self-interference cancellation (SIC) methods. Analog SIC aims to prevent radio frequency (RF) chain saturation by reducing SI to manageable levels before it reaches the analog-to-digital converter (ADC)~\cite{analSIC}, employing techniques such as active circuits~\cite{ASICjw,ASICjsac} or beamforming cancellation (BFC)~\cite{IanBSI,IanMag}. Digital SIC compensates for RF chain nonlinearity, addressing issues like power amplifier (PA) distortions and mixer imbalances at the baseband~\cite{NLSI}.

Operating nonlinear digital SIC presents challenges due to its complexity~\cite{NLSI,IanMag,K_iter}. However, considering various crucial aspects for flexible duplex systems in the frequency domain can reduce this complexity. As discussed in Section~IV, nonlinearity behaves differently depending on the UL/DL subcarrier allocation, which operates in various duplex modes. The integration of passive analog SIC and BFC modifies the SI channel behavior, aiding in the suppression of less critical nonlinearities, especially in terms of frequency selectivity~\cite{analSIC,IanFreq}. This paper proposes a low-complexity method for processing digital SIC in the frequency domain, incorporating these diverse factors to enhance the overall efficiency of flexible duplex systems.


\subsection{Related Works on Frequency Domain Nonlinear SIC}

Digital SIC addresses RF chain impairments and the multipath nature of the SI channel, introducing frequency selectivity~\cite{NLSI,NLSIsim}. SBFD scenarios, not featuring UL/DL overlap, may still experience out-of-band emission (OOBE) as SI due to PA nonlinearity~\cite{XDD}. The aggregate SIC performance, achieving metrics like $-110$~dB cancellation, integrates both passive and active analog SICs with digital nonlinear SIC.

Nonlinear SI mitigation strategies comprise active analog SIC and digital SIC. Active analog SIC, which estimates delay taps for circuit implementation, might introduce its own impairments~\cite{ASICjw,ASICjsac,ASICfreq}. Digital SIC, depend on proper ADC operation at the receiver, accurately addresses nonlinearity by processing baseband signals~\cite{IanMag,NLSIsim}. Effective implementation of analog SIC, including BFC~\cite{IanBSI}, allows digital SIC to minimize residual SI to Rx noise levels.

Digital nonlinear SIC traditionally uses time domain approaches, demanding calculations across all sampling points and leading to significant complexity~\cite{K_FH}. Early research in digital SIC, addressing PA nonlinearity with the Parallel-Hammerstein (PH) model, primarily focus on time domain SIC~\cite{FDRs,NLSI}. Subsequent studies extended these models to account for IQ imbalance~\cite{K_iter,WideLinFD} and phase noise~\cite{SIC_PN,SIC_PN2}. 

Frequency domain digital SIC, a solution to reduce complexity, targeted each OFDM subcarrier specifically~\cite{K_FH}. The authors of \cite{compactFDd2d} demonstrated frequency domain SIC through PoC, with a focus on linear components. For further complexity reduction, nonlinear basis selection for SI coefficient estimation have been proposed in a heuristic manner~\cite{K_basis}. Additionally, \cite{FDC_iter} introduced a successive channel estimation method to reduce complexity than \cite{K_basis}.

Previous studies have focused on single FD mode, primarily addressing IBFD scenarios, with references such as \cite{XDD,202212access,SBFD_hwi} exploring SBFD scenarios. However, there has been limited exploration in flexible duplex configurations that enable dynamic allocation of UL/DL subcarriers. Low-complexity strategies~\cite{K_basis,FDC_iter} have predominantly focused on IBFD, relying on random training signals rather than a closed-form approach for flexible duplex scenarios.

\subsection{Contributions and Organization of This Paper}

\begin{table}[]
\centering
\caption{Comparison for FD digital SIC studies}
{%
\begin{tabular}{c|c|cc|cc}
\toprule
\multirow{2}{*}{\textbf{Ref.}} & \multirow{2}{*}{Duplex Mode} & \multicolumn{2}{c|}{Nonlinearity}                                          & \multicolumn{2}{c}{System Design}                                          \\ \cline{3-6} 
                       &                              & \multicolumn{1}{c|}{PA}           & IQ Imb.              & \multicolumn{1}{c|}{PoC}         & Pilot              \\ \hline
\cite{XDD}                    & SBFD                         & \multicolumn{1}{c|}{\checkmark} & -                         & \multicolumn{1}{c|}{\checkmark} & -                         \\
\cite{compactFDd2d}           & IBFD                         & \multicolumn{1}{c|}{-}                         & -                         & \multicolumn{1}{c|}{\checkmark} & \checkmark \\
\cite{K_basis}               & IBFD                         & \multicolumn{1}{c|}{\checkmark} & \checkmark & \multicolumn{1}{c|}{-}                         & \checkmark \\
\cite{FDC_iter}              & IBFD                         & \multicolumn{1}{c|}{\checkmark} & -                         & \multicolumn{1}{c|}{-}                         & -                         \\
This Work              & Flexible (All)               & \multicolumn{1}{c|}{\checkmark} & \checkmark & \multicolumn{1}{c|}{\checkmark} & \checkmark \\ \bottomrule
\end{tabular}%
\label{table.works}
}
\end{table}

While researchers have explored handling nonlinear SI in the frequency domain~\cite{K_basis,FDC_iter}, there has been limited exploration into running real-time digital SIC operations. As shown in Table~\ref{table.works}, previous studies have predominantly focused on fixed IBFD scenarios, neglecting the need to adapt to changes in UL/DL subcarrier allocations. Digital SIC effectively manages the variable nonlinearity resulting from such dynamic allocations. Although incorporating frequency-selective SI channels and UL/DL subcarrier allocations, certain nonlinear bases induced by PA nonlinearity and IQ imbalance do not significantly impact the efficacy of digital SIC, despite their computational complexity. Additionally, the use of certain training pilots for digital SIC, as described in~\cite{compactFDd2d,K_basis}, have yet to fully utilize the distinct properties of nonlinear SI, indicating a gap in leveraging the full potential of digital SIC strategies.

\begin{figure*}

	\begin{center}
		{\includegraphics[width=2.0\columnwidth,keepaspectratio]
			{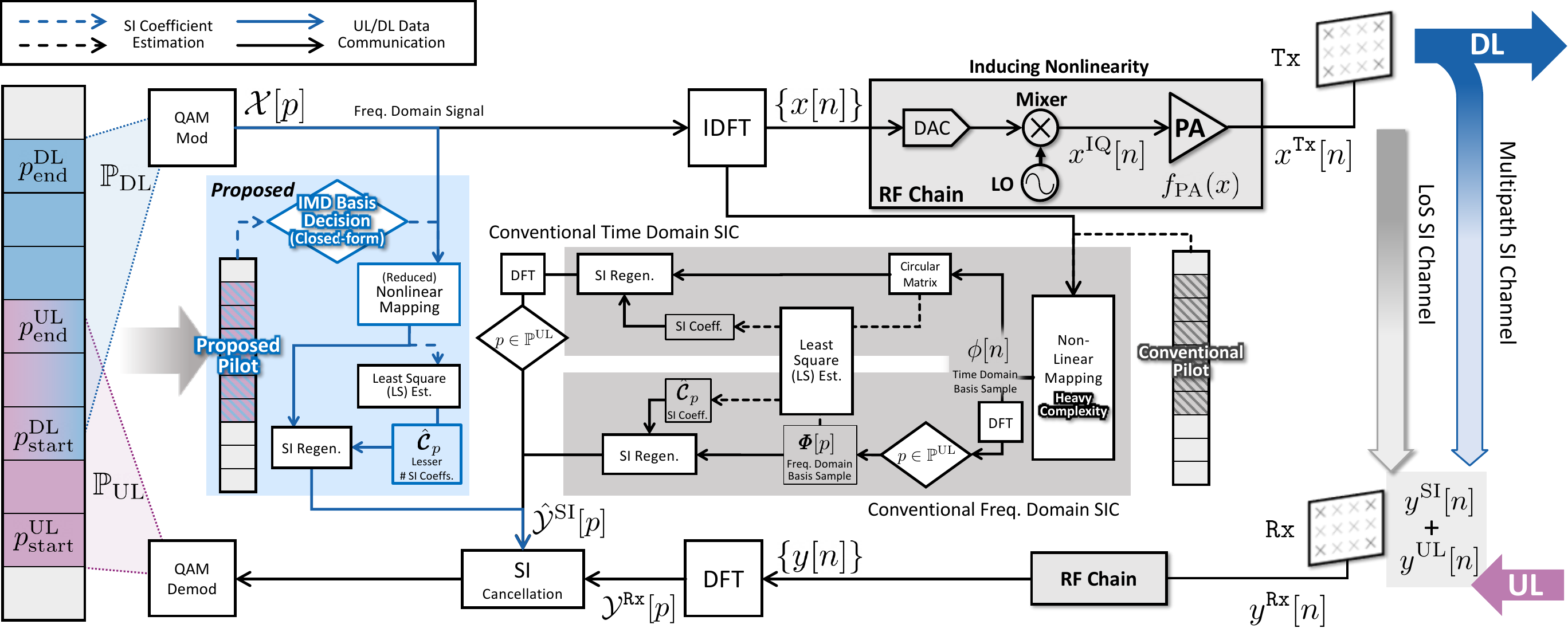}%
			\caption{The system model and assumptions: the procedure of nonlinear SIC for both the proposed method and the conventional method.}
			\label{fig.sysmodel}
		}
	\end{center}
\vspace{-15pt}
\end{figure*}

To address these issues, we propose a low-complexity frequency-domain nonlinear SIC method suitable for flexible duplex systems. The proposed method offers a frequency-domain SIC that is compatible with OFDM-MIMO systems. By using closed-form solutions and effective pilot symbols, our approach minimizes the computational steps required for running SIC and estimating SI coefficients.


\begin{itemize}

\item We conduct a theoretical analysis of nonlinear SI in the frequency domain, using theorems and propositions. We focus on intermodulation distortion (IMD) caused by PA and IQ mixers, examining its variations across different UL/DL configurations in flexible duplex systems.

\item We propose an algorithm and pilot signal designed to reduce the complexity of comprehensive SIC. This is achieved by distinguishing between the SI channel, PA nonlinearity, and IQ imbalance, leveraging the long-term characteristics. By enhancing the latest digital SIC research with the proposed closed-form solutions, which improves upon heuristic approaches in~\cite{K_basis,FDC_iter}, the proposed digital SIC technology further reduces complexity.

\item We evaluate the impact of the proposed frequency domain, low-complexity, nonlinear SIC at the network-level. The nonlinearity caused by PA was measured and applied through a software-defined radio (SDR) PoC. The frequency-selectivity of the SI channel and the multi-user UL/DL channel are measured by 3D ray-tracing to reflect the multipath characteristics in real-world environments.
\end{itemize}

This paper is organized as follows. In Section~II, we introduce the preliminaries on SIC and the flexible duplex to describe our system model and assumptions. In Section~III, we formulate our complexity reduction problem of the nonlinear digital SIC. In Section~IV, we explain the details of the proposed SIC methods based on lemmas, theorems and propositions. Simulation results and our conclusion are given in Sections V and VI.

\emph{Notations}: In this paper, $A$ and $a$ are scalars and $\mathbf{A}$ and $\mathbf{a}$ are both vector. $\mathcal{A}[p]$ and $\pmb{\mathcal{A}}[p]$ are in the frequency domain, $a[n]$ and $\mathbf{a}$ are in the time domain. $\mathbf{A}^\top, \mathbf{A}^\mathsf{H}$ denote the transpose and Hermitian of $\mathbf{A}$, respectively. $|\cdot|$ denotes the modulus of a complex number or the cardinality of a set. $\mathbb{C}$ denotes the complex number set, and $\mathbb{C}^{M\times N}$ denotes the space of $M\times N$ complex-valued matrices. $j\triangleq\sqrt{-1}$ is the imaginary unit. 

\section{System Model and Assumptions}

\subsection{Overall Flexible Duplex System}

We consider a flexible duplex scenario illustrated in Fig.~\ref{fig.sysmodel}. A BS operates an OFDM system with the number of subcarrier as $P$ and the subcarriers indexed by $p$, after sampled by subcarrier spacing, $\Delta f$. In the time domain, an OFDM symbol is sampled to index $n$ among $N$ samples, and each symbol index by $m$ among $M$ symbols. The symbols for SI estimation are represented as $m\in\mathbb{M}^\mathtt{train}$ and symbols $m\in\mathbb{M}^\mathtt{run}$ for the data communication that means running SIC. The discrete Fourier transform (DFT) and the inverse DFT (IDFT) operation transforms the domain of the signal between the time domain and the frequency domain.

The BS $\mathtt{Tx}$ transmit the DL signal, $\mathcal{X}_m[p]$ to DL-user equipment (UE). The BS $\mathtt{Rx}$, simultaneously, decodes the UL signal $\mathcal{Y}_m^\text{UL}[p]$ from UL-UE. The OFDM system transforms DL signal from $\mathcal{X}_m[p]$ to $x^\text{DL}_m[n]$, and UL signal from $y_m^\text{UL}[n]$ to $\mathcal{Y}_m^\text{UL}[p]$. In FD operating scenario, $\mathcal{X}_m[p]$ generates the SI as $\mathcal{Y}^\text{SI}_m[p]$, the overall received signal at $\mathtt{Rx}, \mathcal{Y}_m^\mathtt{Rx}[p]$ is as follows:
\begin{equation}
\begin{aligned}
\mathcal{Y}_m^\mathtt{Rx}[p]&= \mathcal{Y}_m^\text{UL}[p] + \mathcal{Y}_m^\text{SI}[p] + \mathcal{Z}[p],
\end{aligned}
\end{equation}
where $\mathcal{Z}[p]$ is the complex Gaussian noise at $\mathtt{Rx}$. The digital SIC targets to estmiate the SI-only received signal as $\hat{\mathcal{Y}}_m^\text{SI}$ in terms of $\mathcal{X}_m$ and cancel it out as $\mathcal{Y}^\text{UL}_m=\mathcal{Y}^\mathtt{Rx}_m-\hat{\mathcal{Y}}^\text{SI}_m$. 

\subsubsection{Subcarrier flexibility in frequency domain}
To model the flexibility of UL/DL allocation, the set of the system subcarrier, $\mathbb{P}$, is as follows:
\begin{equation}
\mathbb{P}=\{1,\cdots,p,\cdots,P\}.
\end{equation}
The system allocates DL and UL subcarrier set, $\mathbb{P}^\text{DL},\mathbb{P}^\text{UL}\subset\mathbb{P}$, as follows:
\begin{equation}
\begin{aligned}
\mathbb{P}^\text{DL}=\left\{p^\text{DL}_\text{start}, \cdots, p^\text{DL}_\text{end}\right\}, \quad \mathbb{P}^\text{UL}=\left\{p^\text{UL}_\text{start}, \cdots, p^\text{UL}_\text{end}\right\}, 
\label{eq.pULDL}
\end{aligned}
\end{equation}
where $p^\text{DL}_\text{start}$ and $p^\text{DL}_\text{end}$ represent the boundary indices for DL, while $p^\text{UL}_\text{start}$ and $p^\text{UL}_\text{end}$ correspond to those for UL. For DL, $\mathcal{X}_m[p_i]=0\quad \forall p_i\notin\mathbb{P}_\text{DL}$ holds.
When $\mathbb{P}^\text{UL}=\mathbb{P}^\text{DL}$ holds, the duplex mode is IBFD, while SBFD when $\mathbb{P}^\text{UL} \cap \mathbb{P}^\text{DL}=\varnothing$ holds. The flexible duplex exclusively supports the special duplex mode: partially overlapping UL and DL spectrum as $\mathbb{P}^\text{UL} \cap \mathbb{P}^\text{DL}=\mathbb{P}^\text{OV}$, where $\mathbb{P}^\text{OV}$ is set of subcarriers allocated for both UL and DL.

$\mathtt{Tx}$ transmits the signal, and $\mathcal{X}_m[p]$ satisfies the transmit power constraint as $E \left[\left|\mathcal{X}_m[p]\right|^2\right]=A_\text{digi}^2$, to meet the transmit power constraint in the digital domain. Before passing RF chain of $\mathtt{Tx}$, the cyclic prefix (CP) is appended in front of $x[n]$ by $N_\text{CP}$ as follows:
\begin{equation}
\begin{aligned}
x_m=\mathtt{IDFT}\left\{\mathcal{X}_m\right\}
\rightarrow x_m^\text{CP}&=\left[x_m[N-N_\text{CP}:N], x_m\right].
\end{aligned}
\end{equation}


\subsubsection{IQ imbalance from mixer} 

Prior to reaching the PA, the downlink baseband signal, $x_m$, passes through a local oscillator and mixer, introducing IQ imbalance~\cite{IQ_tsp,WideLinFD}. This imbalance is effectively captured by the signal $x^\text{IQ}_m (\mathcal{X}^\text{IQ}_m)$~\cite{K_basis,K_iter}, which can be represented using the IQ imbalance coefficient $b^\text{IQ}$ in both the frequency and time domains as follows:
\begin{equation}
\begin{aligned}
x^\text{IQ}_m[n]&=x_m[n]+b^\text{IQ}(x_m[n])^*\\
\rightarrow \mathcal{X}_m^\text{IQ}[p]&=\mathcal{X}_m[p]+b^\text{IQ}(\mathcal{X}_m[-p])^*.
\end{aligned}
\end{equation}
The image rejection rate (IRR), which serves as a measure of IQ imbalance, is quantified by $1/|b^\text{IQ}|^2$.

\subsubsection{PA nonlinearity} 
PA amplifies signal to target transmit power as follows:  
\begin{equation}
x_m^\mathtt{Tx}[n]=f_\text{PA}\left(x_m^\text{IQ}[n]\right),
\label{eq.xtx}
\end{equation}
where$f_\text{PA}$ is amplifying function of PA in analog domain. 

At $\mathtt{Rx}$, SI leaks to $\mathtt{Rx}$ through the time domain multipath channel, $h^\text{SI}$. The direct path between $\mathtt{Tx}$ and $\mathtt{Rx}$ generates SI as the direct coupling. Based on SI channel including multipath component, $h^\text{SI}\in\mathbb{C}^{1\times L}$, with the length of $L$, the received CP-added SI signal, $y^\text{SI,CP}_m[n]$ is as follows:
\begin{equation}
\begin{aligned}
y_m^\text{SI,CP}[n] &= h^\text{SI}[n]*x_m^\mathtt{Tx}[n] +z[n]\\
&= h^\text{SI}[n]*\left(f_\text{PA}(x_m^\text{CP}[n])\right)+z[n].
\end{aligned}
\label{eq.ysi}
\end{equation}
In the frequency domain, $\mathcal{Y}_m^\text{SI}$ based on $\mathcal{H}^\text{SI}=\mathtt{DFT}(h^\text{SI})$ is as follows: 
\begin{equation}
\mathcal{Y}_m^\text{SI}[p]=\mathcal{H}^\text{SI}[p]\tilde{\mathcal{X}}_m^\mathtt{Tx}[p],
\label{eq.runninglin}
\end{equation}
where $\tilde{\mathcal{X}}_m^\mathtt{Tx}$ is effective transmitted baseband signal. With DFT and IDFT, $\tilde{\mathcal{X}}_m^\mathtt{Tx}$ is as follows:
\begin{equation}
\begin{aligned}
\tilde{\mathcal{X}}_m^\mathtt{Tx}=\mathtt{DFT}\left\{f_\text{PA}(x^\text{IQ}_m[n])\right\}
=\mathtt{DFT}\left\{f_\text{PA}\left(\mathtt{IDFT}\left\{\mathcal{X}^\text{IQ}_m\right\}\right)\right\}.
\label{eq.fxtx}
\end{aligned}
\end{equation}

\subsection{Nonlinear and Parallel-Hammerstein Model-based SIC}
From (\ref{eq.runninglin}), the estimation of $\hat{\mathcal{Y}}_\text{SI}$ in digital SIC is as follows:
\begin{equation}
\mathcal{Y}_m^\text{UL}[p]=\mathcal{Y}_m^\mathtt{Rx}[p]-\hat{\mathcal{H}}[p]\tilde{\mathcal{X}}_m^\mathtt{Tx}[p],
\label{eq.SIClin}
\end{equation}
where $\hat{\mathcal{H}}^\text{SI}[p]$ is estimated SI channel. However, $\hat{\mathcal{H}}^\text{SI}$ cannot reflect the nonlinearity of SI signal. Nonlinear digital SIC, therefore, needs the modification of $\tilde{\mathcal{X}^\mathtt{Tx}_m}$ considering $x_m$ from (\ref{eq.xtx}).

A nonlinear PA generates IMD phenomena over frequency domain. The recent works on digital SIC assumes PA function $f_\text{PA}$ as polynomial as follows:
\begin{equation}
\begin{aligned}
f_\text{PA}(x_m^\text{IQ}[n])= \sum_{k=0}^{K_\text{max}} {a_{2k+1}\phi_{2k+1,m}[n]},
\label{eq.f_pa}
\end{aligned}
\end{equation}
where $\phi_{2k+1}[n]$ is the nonlinear basis with $(2k+1)$ order, $a_{2k+1}$ is the nonlinear coefficient for each basis, and $K_\text{max}$ is the maximum nonlinear order allowed by the BS. For simplicity, we represent PA coefficient vector, $\mathbf{a}$, as $\mathbf{a}=\left[a_1,a_3,\cdots,a_{2K_\text{max}+1}\right]^\top$. Parallel-Hammerstein (PH) model assumes the basis $\phi_{2k+1,m}$ as follows:
\begin{equation}
\begin{aligned}
\phi_{2k+1,m}[n] = x_m^\text{IQ}[n]\left|x^\text{IQ}_m[n]\right|^{2k}=\left(x^\text{IQ}_m[n]\right)^{k+1}\left((x^\text{IQ}_m[n])^*\right)^k.
\end{aligned}
\end{equation}
As the variation of PH model, some works assumes $\phi_{2k+1}$ as linear combination of  $\phi_0,\cdots,\phi_{2k-1}$~\cite{R_ortho, K_theo}.

The nonlinear basis to address IQ imbalance alongside PA can be defined by extending $\phi_{2k+1}$ to $\tilde{\phi}_{k_1,k_2}$ as follows:
\begin{equation}
\begin{aligned}
\tilde{\phi}_{k_1,k_2,m}[n]=x_m[n]^{k_1}(x_m[n]^*)^{k_2}.
\end{aligned}
\end{equation}
In this expanded model, the equation $\phi_{2k+1}=\tilde{\phi}_{k+1,k}$ holds. With conventional PH model, only one nonlinear basis sequence, $\phi_{2k+1}$, is needed for each $k$. This implies that for addressing $K_\text{max}$ levels of nonlinearity, the sequence is calculated $K_\text{max}$ times. With the introduction of $\tilde{\phi}_{k_1,k_2}$ as the nonlinear basis~\cite{K_basis}, for each $k$, sequences for all $(k_1,k_2)$ pairs satisfying $2k+1=k_1+k_2$ must be computed. Consequently, $2k$ sequence calculations are required, necessitating a total of $K_\text{max}(K_\text{max}+1)$ calculations. This approach significantly raises the complexity involved in computing the nonlinear basis.

\subsubsection{Frequency domain SIC}

We introduced a conventional time-domain digital SIC method using a circulant matrix of $|x_m^\text{IQ}|^{2k}x_m^\text{IQ}$, as described in \cite{SBFD_hwi}, where IQ imbalance is already conmpensated. On the other hand, the conventional frequency domain SIC estimates SI coefficients, $\hat{\pmb{\mathcal{C}}}_p\in\mathbb{C}^{1\times|\mathbb{K}p|}$, for each subcarrier $p$.
For each subcarrier $p$, $\mathbb{K}_p$ represents the set of selected nonlinear bases $\mathit{\phi}_{2k+1}$ or $\tilde{\phi}_{k_1,k_2}$.
The upper bound of $|\mathbb{K}_p|$ is $K_\text{max}$ when using $\phi_{2k+1}$, and $K_\text{max}(K_\text{max}+1)$ when using $\tilde{\phi}_{k_1,k_2}$.

From (\ref{eq.fxtx}), the nonlinear basis in frequency domain, $\mathit{\Phi}_{2k+1,m}$, is as follows:
\begin{equation}
\begin{aligned}
\mathit{\Phi}_{2k+1,m}=\mathtt{DFT}(\phi_{2k+1,m})
=\mathtt{DFT}\left(\left|x^\text{IQ}_m\right|^{2k}x^\text{IQ}_m\right).
\label{eq.genP}
\end{aligned}
\end{equation}
The frequency domain SIC estimates $\hat{\pmb{\mathcal{C}}}_p\in\mathbb{C}^{(K_\text{max}+1)\times 1}$ as follows:
\begin{equation}
\begin{aligned}
\hat{\pmb{\mathcal{C}}}_p=\text{argmin}\left(\pmb{\mathcal{Y}}^\mathtt{train}_p-\pmb{\mathit{\Phi}}^\mathtt{train}_p\hat{\pmb{\mathcal{C}}}_p\right),
\label{eq.estiC}
\end{aligned}
\end{equation}
where $\pmb{\mathcal{Y}}^\mathtt{train}_p\in\mathbb{C}^{|\mathbb{M}^\mathtt{train}|\times 1}$ and $\pmb{\mathit{\Phi}}^\mathtt{train}_p\in\mathbb{C}^{|\mathbb{M}^\mathtt{train}|\times |\mathbb{K}_p|}$ are as following:
\begin{equation}
\begin{rcases}
&\pmb{\mathcal{Y}}^\mathtt{train}_p=[\cdots,\mathcal{Y}_m[p],\cdots]^\top\\
&\pmb{\mathit{\Phi}}^\mathtt{train}_p=\left[\cdots,\pmb{\mathit{\Phi}}_{m}[p],\cdots\right]^\top
\end{rcases}
\forall m\in\mathbb{M}^\mathtt{train}.
\end{equation}
For each subcarriers, the vector of basis for symbol $m$, $\pmb{\mathit{\Phi}}_{m}[p]\in\mathbb{C}^{|\mathbb{K}_p|\times 1}$, is as follows:
\begin{equation}
\pmb{\mathit{\Phi}}_{m}[p]=\left[\mathcal{X}_m[p]\cdots\mathit{\Phi}_{2k+1,m}[p]\cdots\right]^\top\quad\forall k\in\mathbb{K}_p.
\end{equation}
In running SIC step, however, the system utilizes the transmit signal for every single symbol as $\pmb{\mathit{\Phi}}^\mathtt{train}_p\rightarrow\pmb{\mathit\Phi}_m[p]$. We define the set of $m$ that in running SIC step as $\mathbb{M}^\mathtt{run}$. From (\ref{eq.runninglin}), regeneration of $\hat{\mathcal{Y}}_m^\text{SI}$ is as follows:
\begin{equation}
\hat{\mathcal{Y}}^\text{SI}_m[p]=\pmb{\mathit\Phi}_m[p]\hat{\pmb{\mathcal{C}}}_p\quad\forall m\in\mathbb{M}^\mathtt{run}.
\end{equation}
When $x^\text{DL}_m[n]$ is given, the complexity of calculating each $k$th nonlinear basis, $\mathit{\Phi}_{2k+1,m}$, is as follows:
\begin{equation}
\mathit{\Phi}_{2k+1,m}=\mathtt{DFT}\left(|x^\text{IQ}_m[n]|^{2k}x^\text{IQ}_m[n]\right)\rightarrow\mathcal{O}\left(\frac{1}{2}P\log_2P+P\right). 
\end{equation}
Computing $\phi_{2k+1,m}=|x^\text{IQ}_m[n]|^{2k}x^\text{IQ}_m[n]$ results in $\mathcal{O}(P)$ complexity as it involves element-wise multiplication of $|x^\text{IQ}_m[n]|^2$ and $\phi_{2k-1,m}$. Computing $\mathit{\Phi}_{2k+1,m}=\mathtt{DFT}(\phi_{2k+1,m})$ has a complexity of $\mathcal{O}(\frac{1}{2}P\log_2P)$. 

\begin{figure*}
\begin{center}
    \subfigure[Analog beamformed SI channel.]{%
      \includegraphics[height=0.45\columnwidth,keepaspectratio]
      {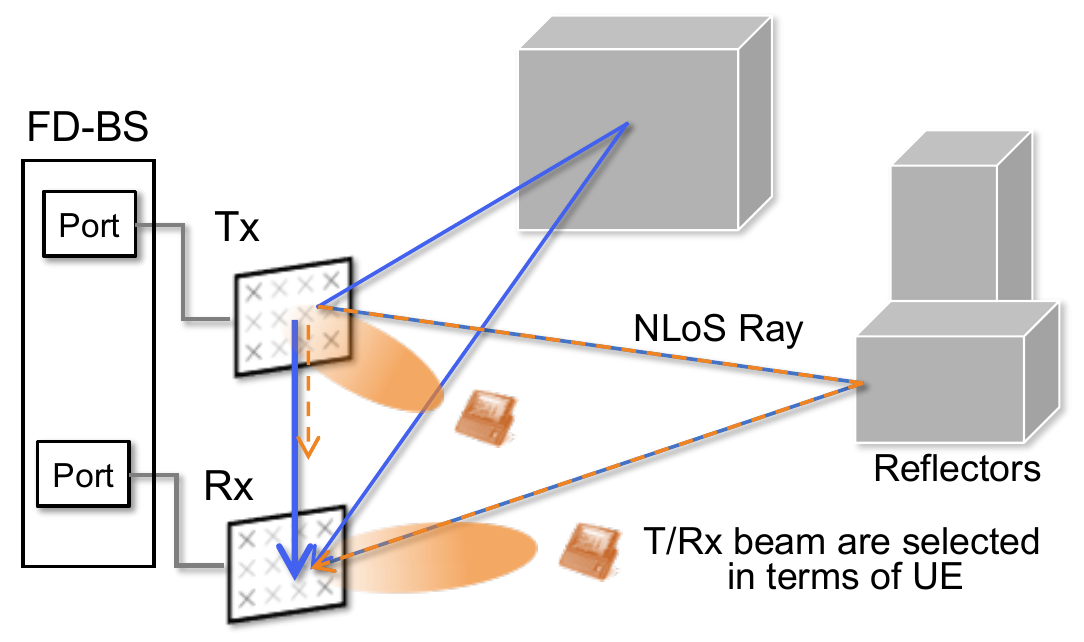}%
      \label{subfig.NLOS}
    }
    \subfigure[Time domain SI channel.]{%
      \includegraphics[height=0.45\columnwidth,keepaspectratio]
      {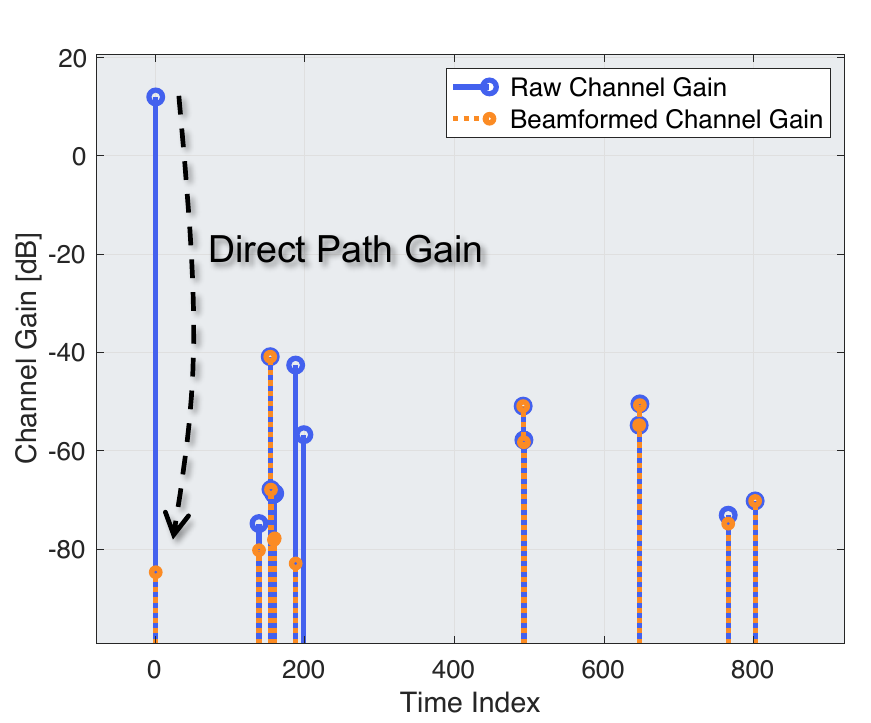}%
      \label{subfig.TimeD}
    }    
    \subfigure[Frequency domain SI channel.]{%
      \includegraphics[height=0.45\columnwidth,keepaspectratio]
      {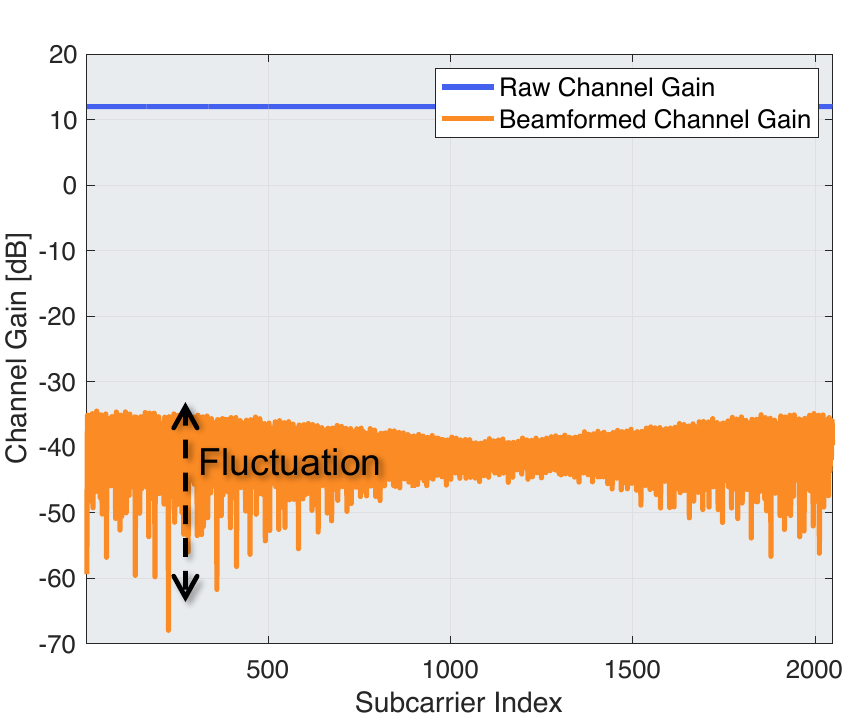}%
      \label{subfig.FreqD}
    }
 \end{center}
	\caption{Illustration of frequency-selectivity of multipath SI channel, based on analog beamforming.}
	\label{fig.FreqSel}
\vspace{-15pt}
\end{figure*}

\subsection{Intermodulation Distortion in Frequency Domain}

To analyze the SIC method and IMD phenomena in the frequency domain, we can rewrite each basis, $\phi_{2k+1}[n]$, based on OFDM as follows:
\begin{equation}
\begin{aligned}
&\phi_{2k+1,m}[n]=\left(x^\text{IQ}_m[n]\right)^{k+1}\left(x^\text{IQ}_m[n]^*\right)^{k}\\
&= \left(\frac{1}{P}\sum \mathcal{X}^\text{IQ}_m[p]e^{jw_pn}\right)^{k+1}\left\{\left(\frac{1}{P}\sum \mathcal{X}^\text{IQ}_m[p]e^{jw_pn}\right)^*\right\}^k\\
&=\frac{1}{P^{2k+1}}\left(\sum \mathcal{X}^\text{IQ}_m[p]e^{jw_pn}\right)^{k+1}\left(\sum \left(\mathcal{X}^\text{IQ}_m[p]e^{-jw_pn}\right)^*\right)^{k}\\
&=  \frac{1}{P^{2k+1}}\sum_{q_1,\cdots,q_{2k+1}\in\mathbb{P}}\left(\prod_{i=1}^{k+1} \mathcal{X}^\text{IQ}_m[q_i]\prod_{i=k+2}^{2k+1}\left(\mathcal{X}^\text{IQ}_m[q_i]\right)^*\right)e^{jwn},
\end{aligned}
\end{equation}
where $w_p$ and $w$ are as follows:
\begin{equation}
w_p =  \frac{2\pi p}{P},\quad w = \frac{2\pi}{P}\left(\sum_{i=1}^{k+1}q_i- \sum_{i=k+2}^{2k+1}{q_i}\right).
\end{equation}
For SIC, we only have to obtain the nonlinear basis of frequency range of UL. Applying DFT operation to $\phi_{2k+1}$, we can get nonlinear basis of UL band, $\mathit\Phi_{2k+1}=\mathtt{DFT}\left(\phi_{2k+1}\right)$, as follows:
\begin{equation}
\begin{aligned}
&\mathit\Phi_{2k+1,m}[p]=(1/{P^{2k+1}})\\
&\times\sum_{(q_1,\cdots,q_{2k+1})\in\mathbb{Q}^{2k+1}_p}\left(\prod_{i=1}^{k+1} \mathcal{X}^\text{IQ}_m[q_i]\prod_{i=k+2}^{2k+1}\left(\mathcal{X}^\text{IQ}_m[q_i]\right)^*\right),
\label{eq.PhiUL}
\end{aligned}
\end{equation}
where $\mathbb{Q}^{2k+1}_p$ is the IMD set of sequence $(q_1,q_2\cdots,q_{2k+1})$ for each $p\in\mathbb{P}^\text{UL}$. We can obtain the condition of $\mathbb{Q}^{2k+1}_p$ in terms of $p$ as follows: 
\begin{equation}
\begin{aligned}
\mathbb{Q}^{2k+1}_p=&\left\{(q_1,q_2,\cdots,q_{2k+1})|\sum_{i=1}^{k+1}q_i-\sum_{i=k+2}^{2k+1}q_i=\tilde{p}\right\},\\
\tilde{p} =& p+r P \quad \forall r\in \mathbb{R},
\label{eq.setQ}
\end{aligned}
\end{equation}
where $r$ is the arbitrary real number. As $\tilde{p}$ has periodicity, the overall procedure to obtain $\mathit{\Phi}_{2k+1,m}$ needs to consider the same characteristic.
For example, the 3rd nonlinear basis is as follows:
\begin{equation}
\begin{aligned}
\mathit\Phi_{3,m}[p]=\frac{1}{P^3}\sum_{(q_1,q_2,q_3)\in\mathbb{Q}^{3}_p}\left(\mathcal{X}^\text{IQ}_m[q_1]\mathcal{X}^\text{IQ}_m[q_2]\left(\mathcal{X}^\text{IQ}_m[q_3]\right)^*\right).
\label{eq.IMDbase}
\end{aligned}
\end{equation}

\subsection{MIMO SI Channel Modeling \& Analog Beamforming}
 {The system under study incorporates the analog beamforming. Analog beamforming is implemented at both $\mathtt{Tx}$ and $\mathtt{Rx}$, across $N_\text{AE}$ antenna elements (AE) at each location. }

\subsubsection{Multipath MIMO SI channel}

 {To model the MIMO SI channel, we approached each ray originating from Tx by allowing the number of reflections and diffractions to vary, with the total number of rays, $N_\text{ray}$, determined based on which rays ultimately reach Rx. Since each ray arrives at Rx at different times, the multipath channel exhibits a multi-tap structure with frequency-selective characteristics. To accurately capture this behavior in a MIMO channel, we modeled and generated a multipath MIMO channel by calculating the array response based on the number of multipaths between Tx and Rx, along with the AoA and AoD of each ray.}

 {In further modeling the SI channel within this multipath environment, we define each ray by its index $\ell$, where $N_\text{ray}$ represents the total number of rays. For $\ell=1$, we consider the line-of-sight (LoS) scenario, while for $\ell\geq2$, we account for non-line-of-sight (NLoS) conditions. The time index for the LoS ray is denoted as $n^\text{LoS}$. Based on this setup, we define the multipath MIMO SI channel in the time domain, $\mathbf{H}^\text{SI}[n]\in\mathbb{C}^{N_\text{AE}\times N_\text{AE}}$, as follows:
\begin{equation}
\mathbf{H}^\text{SI}[n]=\sum_{\ell=1}^{N_\text{ray}}g_\ell \delta^\mathtt{imp}(nT_\text{sampling}-\tau_\ell)\mathbf{e}_\mathtt{Rx}(\theta_\text{AoA})\mathbf{e}_\mathtt{Tx}^\mathsf{H}(\theta_\text{AoD}),
\label{eq.MIMOa}
\end{equation}
where $g_\ell$ and $\tau_\ell$ are channel gain and delay time of the $\ell$th ray, respectively. And $T_\text{sampling}=1/(N\Delta f)$ is sampling interval, function of subcarrier spacing, $\delta^\mathtt{imp}$ is the dirac-delta impluse function, and $\mathbf{e}_\mathtt{Rx}(\theta_\text{AoA})$, $\mathbf{e}_\mathtt{Tx}(\theta_\text{AoD})$ are array response.}

 {For the channel model, models such as 3GPP TR~38.901 can be applied, but particularly in the case of $\ell=1$, where the close distance between Tx and Rx results in a near-field LoS SI channel, as reflected in the above model. For the primary LoS ray expected to arrive first, a MIMO channel was created using the array response, moving away from the far-field wavefront assumption~\cite{IanBSI,IanMag}.}

\subsubsection{Analog beamforming system}
The analog beamforming functions as a steering vector based on the phase shifter at the BS. The analog beams are represented as $\mathbf{f}\in\mathbb{C}^{N_\text{AE}\times 1}$ for $\mathtt{Tx}$ and $\mathbf{w}\in\mathbb{C}^{N_\text{AE}\times 1}$ for $\mathtt{Rx}$, and both satisfy the unit modulus constraint. Optimization of DL channel gain involves pairing $\mathtt{Tx}$ of the BS and $\mathtt{Rx}$ of the DL-UE with the analog beams, denoted as $\mathbf{f}^\text{DL}_\text{BS}$ and $\mathbf{w}^\text{DL}_\text{UE}$. For UL, the analog beam alignment process involves $\mathbf{w}^\text{UL}_\text{BS}$ to $\mathtt{Rx}$ of the BS, and from $\mathtt{Rx}$ of the UL-UE, represented by $\mathbf{f}^\text{UL}_\text{UE}$.

\subsubsection{Analog beamforming and freqeuncy-selectivity of SI}
The SI channel, $\mathbf{H}^\text{SI}[n]\in\mathbb{C}^{N_\text{AE}\times N_\text{AE}}$ in time domain, and $\pmb{\mathcal{H}}^\text{SI}[p]\in\mathbb{C}^{N_\text{AE}\times N_\text{AE}}$ in the frequency domain, means the signal transmitted from AEs of the BS $\mathtt{Tx}$ returns to AEs of the BS $\mathtt{Rx}$. When $N_\text{RF}=1$, The effective SI channel in the time domain, $h^\text{SI}[n]$, is as follows:
\begin{equation}
h^\text{SI}[n]=\left(\mathbf{w}^\text{UL}_\text{BS}\right)^\mathsf{H}\mathbf{H}^\text{SI}[n]\mathbf{f}^\text{DL}_\text{BS}.
\end{equation}
The effective SI channel in the frequency domain, $\mathcal{H}^\text{SI}[p]$, is as follows:
\begin{equation}
\mathcal{H}^\text{SI}[p]=\left(\mathbf{w}^\text{UL}_\text{BS}\right)^\mathsf{H}\pmb{\mathcal{H}}^\text{SI}[p]\mathbf{f}^\text{DL}_\text{BS}.
\end{equation}

The impact of the analog beamforming on the SI channel within a multipath environment can be observed as depicted in Fig.\ref{fig.FreqSel},. Determining the analog beam is unique to each DL-UE and UL-UE, leading to frequency-selectivity naturally associated with the analog beam.

Fig.\ref{subfig.NLOS} illustrates the coexistence of LoS and NLoS paths due to multipath in the SI channel. Fig.\ref{subfig.TimeD} provides a plot of $h^\text{SI}[n]$ over time step $n$, showing that the direct path gain tends to be the most significant. The direct path gain decreases in response to the analog beam while the NLoS gain increases. The decrease in the LoS gain results in frequency-selectivity within the SI channel in the frequency domain, $\mathcal{H}^\text{SI}[p]$, as shown in Fig.\ref{subfig.FreqD}.

\section{Problem Formulation}

The main objective of this study is to develop a low-complexity SIC method suitable for flexible duplex systems. While recent research on low-complexity SIC has primarily focused on IBFD, there are opportunities to adapt such methods for flexible duplex scenarios. From the perspective of frequency domain SIC, the SIC process can be divided into two main steps: the \emph{estimation step} and the \emph{running step}. To achieve low complexity, it is essential to minimize the operations involved in generating $\pmb{\mathit{\Phi}}_m[p]$ during the running step and to reduce the complexity of estimating $\hat{\pmb{\mathcal{C}}}_p$ in the estimation step.

\subsubsection{Estimation step} estimation of $\hat{\pmb{\mathcal{C}}_p}$ when $m\in\mathbb{M}^\mathtt{train}$. The system compose $\pmb{\mathit{\Phi}}_p^\mathtt{train}$ from given $\mathcal{X}_{m}$. One of the multi-variable regression method, least square (LS) method estimates $\pmb{\mathcal{C}}_p$ to satisfies (\ref{eq.estiC}) as follows:
\begin{equation}
\begin{aligned}
&\hat{\pmb{\mathcal{C}}}_p=\left((\pmb{\mathit{\Phi}}^\mathtt{train}_p)^\mathsf{H}\pmb{\mathit{\Phi}}^\mathtt{train}_p\right)^{-1}(\pmb{\mathit{\Phi}}^\mathtt{train}_p)^\mathsf{H}\pmb{\mathcal{Y}}^\mathtt{train}_p,\\
&\rightarrow\mathcal{O}\left(\left(\max_{p\in\mathbb{P}^\text{UL}}|\mathbb{K}_p|\right)\left(\frac{1}{2}P\log_2P+P\right)+\sum_{p\in\mathbb{P}^\text{UL}}|\mathbb{K}_p|^3\right).
\end{aligned}
\label{eq.complex1}
\end{equation}
As represented in (\ref{eq.complex1}), the complexity per symbol is primarily denoted by $\mathcal{O}\left(\sum|\mathbb{K}_p|^3\right)$. To inherently reduce complexity, the size of $\pmb{\mathit{\Phi}}_p^\mathtt{train}$ can be diminished by judiciously selecting the basis function $\mathit{\Phi}_{2k+1,m}[p]$ from $\mathbb{K}_p$.

\subsubsection{Running step} regeneration of $\hat{\mathcal{Y}}^\text{SI} = \pmb{\mathit{\Phi}}_{m}[p]\hat{\pmb{\mathcal{C}}}_p$ when $m\in~\mathbb{M}^\mathtt{run}$. Given $\hat{\pmb{\mathcal{C}}}_p$, the system needs to generate $\pmb{\mathit{\Phi}}_{m}[p]$ from $\mathcal{X}_m$ at every symbol as in (\ref{eq.genP}). 
Including the mapping of the nonlinear basis, the complexity of running step is as follows:
\begin{equation}
\begin{aligned}
&\hat{\mathcal{Y}}^\text{SI} = \pmb{\mathit{\Phi}}_{m}[p]\hat{\pmb{\mathcal{C}}}_p,\\
&\rightarrow\mathcal{O}\left(\left(\max_{p\in\mathbb{P}^\text{UL}}|\mathbb{K}_p|\right)\left(\frac{1}{2}P\log_2P+P\right)+\sum_{p\in\mathbb{P}^\text{UL}}|\mathbb{K}_p|\right).
\label{eq.complex2}
\end{aligned}
\end{equation}
Overall, effectively defining and managing $\mathbb{K}_p$ is crucial for minimizing complexity in both the estimation and the running step.


\subsection{IMD Nonlinear Basis in Flexible Duplex Scenario}

The complexity depends on the number of $\mathit{\Phi}_{2k+1,m}[p]$ elements included in $\pmb{\mathit{\Phi}}_p^\mathtt{train}$.
The power of the IMD is determined by the allocation of the DL band, denoted as $\mathbb{P}^\text{DL}$.
An analysis of the IMD generated by the nonlinear basis, based on the general $\mathbb{P}^\text{DL}$ of flexible duplex, can modify $\pmb{\mathit{\Phi}}_{m}[p]$, as well as $\mathbb{K}_p$, thereby reducing the complexity of digital SIC.

Previous studies have proposed some low-complexity strategies, such as basis modification. One approach involved orthogonalizing $\mathit{\Phi}_{2k+1,m}$ in the frequency domain and estimating the SI coefficient on the adjusted basis~\cite{FDC_iter}. Another method determined the inclusion of $\mathit{\Phi}_{2k+1,m}$ in $\pmb{\mathit{\Phi}}_m[p]$ for SI coefficient estimation based on its influence~\cite{K_basis}.
The SI power due to the $2k+1$th nonlinear basis, denoted $\hat{\mathcal{I}}_{2k+1}[p]$ is as follows:
\begin{equation}
\hat{\mathcal{I}}_{2k+1}[p]=\hat{a}_{2k+1}^2\mu_{2k+1}[p]\left|\hat{\mathcal{H}}^\text{SI}[p]\right|^2.
\label{eq.Ebase}
\end{equation}
The inclusion of $\mathit{\Phi}_{2k+1,m}$ in $\pmb{\mathit{\Phi}}_m[p]$ is determined as follows: 
\begin{equation}
\begin{aligned}
\hat{\mathcal{I}}_{2k+1}[p]>\Gamma 
\rightarrow k\in\mathbb{K}_p,
\end{aligned}
\label{eq.inclusion}
\end{equation}
where $\Gamma$ is threshold based on the noise level, $E\left[|\mathcal{Z}[p]|^2\right]$, and 
$\mathbb{K}_p$ represents the set of $\mathit{\Phi}_{2k+1}$ included in the estimation and running SIC process.

However, both \cite{K_basis}, \cite{FDC_iter} have yet to derive a closed-form solution for $E\left[|\mathit{\Phi}_{2k+1,m}[p]|^2\right]$, necessitating additional operations and signalling. We investigate our novel idea for the low-complexity SIC as follows. The effective IMD nonlinear basis, $\mathit{\Phi}_{2k+1}$ from (\ref{eq.PhiUL}), necessitates frequency domain analysis for adapting to various flexible duplex situations. The closed-form result of IMD nonlinear basis reduces complexity by adaptively determining the basis for each subcarrier. 

Additionally, to utilize this method, prior knowledge of $\hat{a}_{2k+1}$ is necessary. The following subsection will detail the process for estimating the PA coefficient, which remains constant irrespective of the fluctuating channel, in advance through the LoS SI channel.

\subsection{Leveraging Beamformed SI Channel to Decomposing the Nonlinearity}
Initial approaches combine the SI channel $\mathcal{H}^\text{SI}$ and $\mathbf{a}$ to estimate $\hat{\pmb{\mathcal{C}}}_p=\hat{\mathbf{a}}\hat{\mathcal{H}}^\text{SI}[p]$. Separating the SI channel and the $\mathtt{T/Rx}$ nonlinearity offers another approach. This two-step process estimates $\mathcal{H}^\text{SI}$, subsequently evaluate $\mathbf{a}$ as follows:
\begin{equation}
\hat{\mathbf{a}}=\text{argmin}\left(\mathcal{Y}[p]-\left(\mathcal{H}^\text{SI}[p]\pmb{\mathit{\Phi}}_p\right)\hat{\mathbf{a}}\right).
\label{eq.estia}
\end{equation}
An implementation strategy is to iteratively calculate $\mathcal{H}^\text{SI}[p]$, and $\mathbf{a}$ in a rotating manner~\cite{K_iter}. This method necessitates guard band information when estimating $\mathbf{a}$, thus demanding supplemental measurement. 

\begin{figure*}
\begin{center}
    \subfigure[Theorem 2 for $(2k+1)$th IMD basis.]{%
      \includegraphics[width=0.69\columnwidth,keepaspectratio]
      {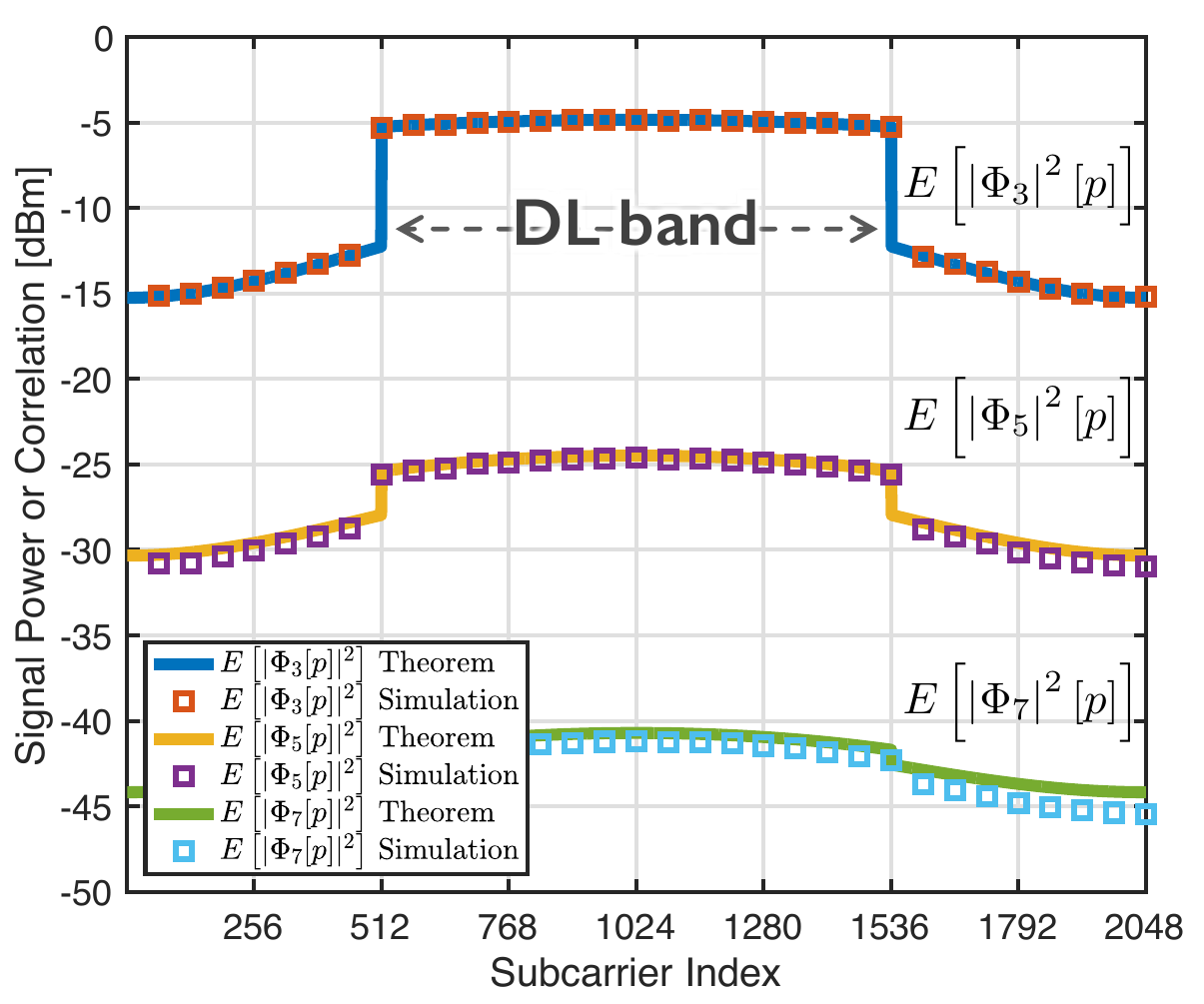}%
      \label{subfig.Esub}
    }
    \subfigure[Proposition 2.1 in terms of various DL band.]{%
      \includegraphics[width=0.69\columnwidth,keepaspectratio]
      {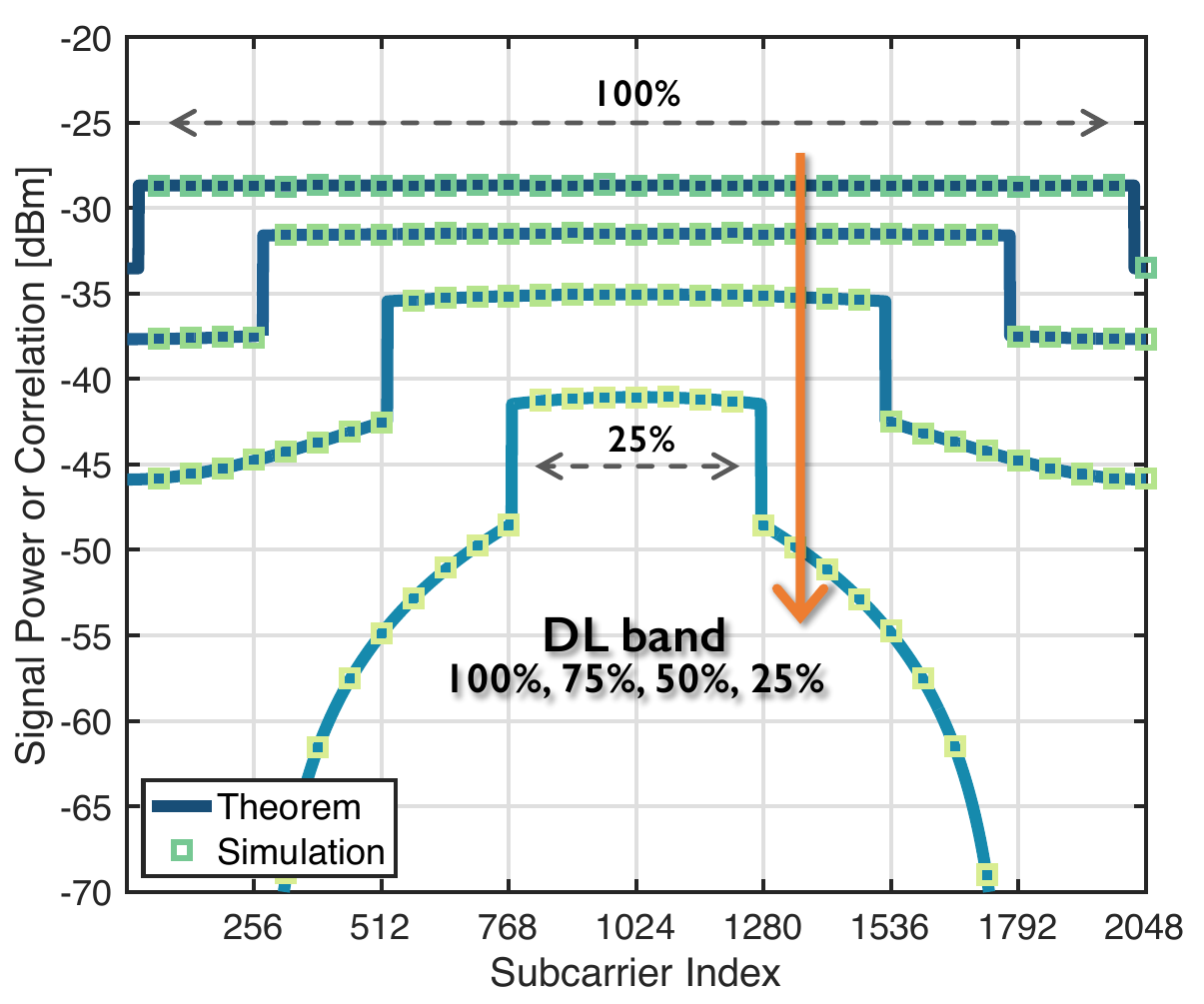}%
      \label{subfig.Ssub}
    }    
    \subfigure[Summation for multiple subcarriers.]{%
      \includegraphics[width=0.60\columnwidth,keepaspectratio]
      {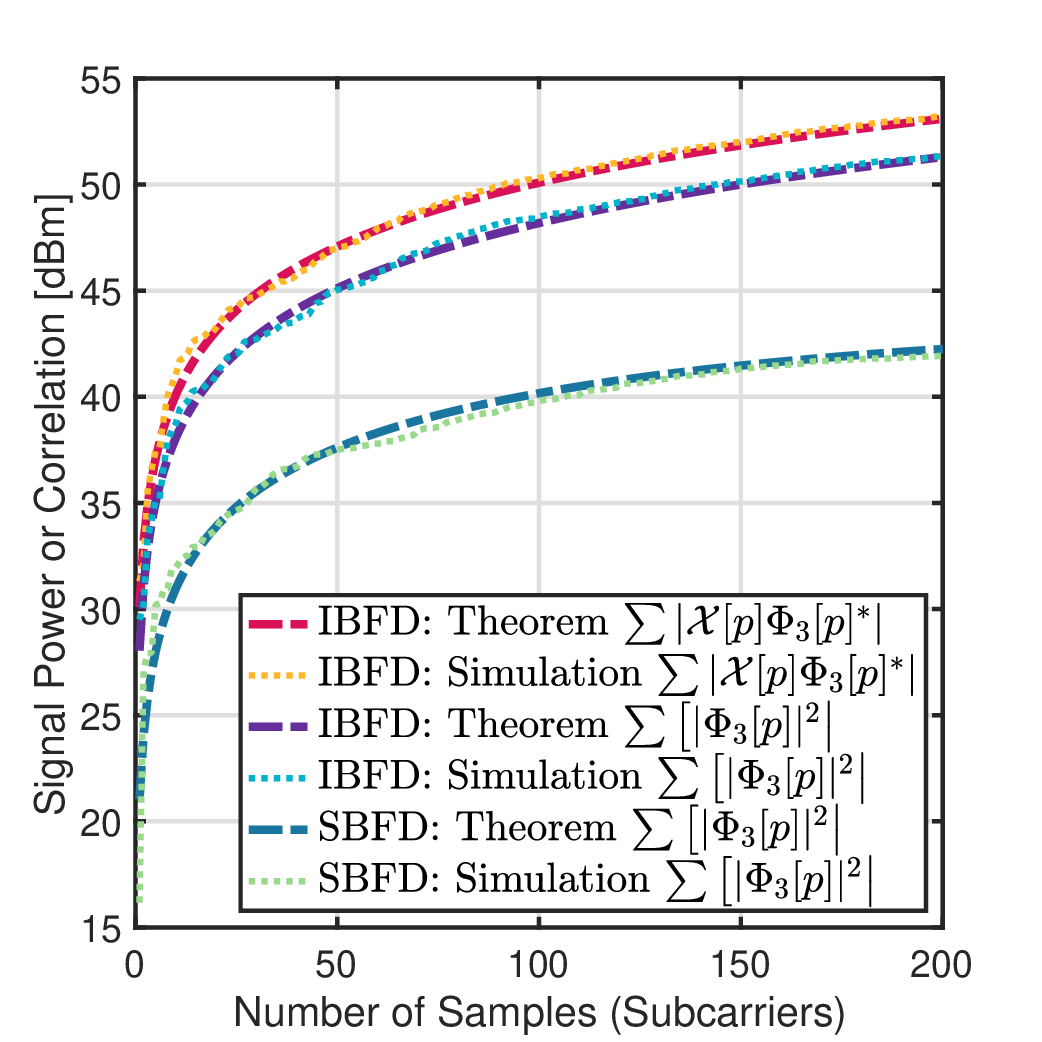}%
      \label{subfig.Ssub}
    }
 \end{center}
	\caption{The results show the exactness of the proposed nonlinear SI analysis in the flexible duplex. A comparison about $3$rd, $5$th, and $7$th nonlinear basis was conducted between Theorem 2, and numerical results. Furthermore, an examination between closed-form of Proposition 2.1 and numerical results was conducted, in terms of the size of the DL band for the $3$rd IMD basis.}
	\label{fig.TheoVal}
\vspace{-15pt}
\end{figure*}

The digital SIC estimation step relies on the SI generated by the DL signal $\mathcal{X}_m$ at $\mathtt{Tx}$, transformed to $\pmb{\mathit{\Phi}}_p^\mathtt{train}$, and received as $\mathcal{Y}^\mathtt{Rx}=\mathcal{Y}^\text{SI}$ at $\mathtt{Rx}$, when only a DL signal is present without a UL signal. In this scenario, generating a pilot signal capable of producing the nonlinear basis $\mathit{\Phi}_{2k+1}$ efficiently, without sacrificing estimation accuracy, is crucial. A new process enables the separate estimation of the SI coefficient's components: $\mathcal{H}^\text{SI}$, $a_{2k+1}$, and $b^\text{IQ}$. Traditionally, estimating all three components requires a complexity of $\mathcal{O}(P^\text{UL}(K_\text{max}(K_\text{max}+1))^3)$ with the given $\pmb{\mathit{\Phi}}_p^\mathtt{train}$. However, by isolating $\pmb{\mathcal{C}}_p$ and $b^\text{IQ}$, the complexity can drop to $\mathcal{O}(P^\text{UL}(K_\text{max}+1)^3)$. If $a_{2k+1}$ is frequency-independent, further simplification to $\mathcal{O}(P^\text{UL}+|\mathbb{K}_p|)$ is achievable by distinguishing $\mathcal{H}^\text{SI}$ and $a_{2k+1}$. This separation allows the application of conventional OFDM channel estimation techniques, accommodating changes in the SI channel over time.

The analog beamforming or BFC can interpret the SI channel through the LoS component, unlike other analog SICs. 
We aims to design a pilot to estimate nonlinearity, $\mathbf{a}$ and $b^\text{IQ}$, based on the beamformed LoS SI channel, $\mathcal{H}^\text{LoS}$.
Estimating and distinguishing the $\mathbf{a}$ and $b^\text{IQ}$ from $\pmb{\mathcal{C}}_p$, therefore, can lower the complexity.

\section{Proposed Method for Nonlinear, \\low-complexity, frequency domain SIC}

In this section, we conduct a detailed analysis of nonlinear SI using the formulated theorems. Considering the flexible duplex, we evaluate SI from IMD in the frequency domain based on analysis of the nonlinear basis, $\mathit\Phi_{2k+1,m}$. We introduce a low-complexity SIC algorithm and a suitable pilot pattern to reduce complexity, as demonstrated in both (\ref{eq.complex1}) and (\ref{eq.complex2}).

\subsection{Nonlinear SI Analysis for Flexible Duplex: Reducing $|\mathbb{K}_p|$}

In flexible duplex scenarios with variable UL/DL subcarriers, our goal is to estimate the shape of SI using the expected value $\mu_{2k+1}=E\left[\left|\mathit\Phi_{2k+1,m}[p]\right|^2\right]$. Traditional methods view $E\left[\left|\mathit\Phi_{2k+1,m}[p]\right|^2\right]$ as an expectation over $m$, which requires the transmission and reception of many symbols for estimation. Using the following theorems and propositions, we offer a near-closed-form solution for $\mu_{2k+1}$, which changes according to $\mathbb{P}^\text{UL}$ and $\mathbb{P}^\text{DL}$. To support this, we first reveal the properties of $\mathit\Phi_{2k+1,m}[p]$. By utilizing these properties, we derive the characteristics of $\mathbb{Q}^{2k+1}_p$.

\begin{lem}
The effective $2k+1$th IMD nonlinear basis, $\mathit\Phi_{2k+1}[p]$, satisfies following recursive equation:
\begin{equation}
\begin{aligned}
&\mathit\Phi_{2k+1,m}[p]=\textup{IMD}\left(\mathcal{X}_m^\textup{IQ},\mathit{\Phi}_{2k+1,m}\right)\\
&=\sum_{q_1,q_2\in\mathbb{P}^\textup{DL}}\mathcal{X}^\textup{IQ}_m[q_1]\mathcal{X}^\textup{IQ}_m[q_2]\mathit\Phi^*_{2k-1,m}[q_1+q_2-p],
\label{eq.lem1.a}
\end{aligned}
\end{equation}
where $q_1,q_2\in\mathbb{P}^\textup{DL}$ are the indices of auxiliary subcarriers. With a similar concept, we can represent the alternative form of $\mathit\Phi_{2k+1}[p]$ as follows:
\begin{equation}
\begin{aligned}
&\mathit\Phi_{2k+1,m}[p]=\\
&\sum_{q_1,q_{k+2}\in\mathbb{P}^\textup{DL}}\mathcal{X}^\textup{IQ}_m[q_1]\left(\mathcal{X}^\textup{IQ}_m[q_{k+2}]\right)^*\mathit\Phi_{2k-1,m}[p-q_1+q_{k+2}].
\end{aligned}
\label{eq.lem1.b}
\end{equation}
\end{lem}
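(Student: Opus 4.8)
The plan is to expand the definition of $\mathit\Phi_{2k+1,m}[p]$ from (\ref{eq.PhiUL}) and regroup the $(2k+1)$-fold summation over $(q_1,\dots,q_{2k+1})\in\mathbb{Q}^{2k+1}_p$ by splitting off the first two summation indices. Recall from (\ref{eq.setQ}) that the constraint defining $\mathbb{Q}^{2k+1}_p$ is $\sum_{i=1}^{k+1}q_i-\sum_{i=k+2}^{2k+1}q_i=\tilde p$, understood modulo $P$. For (\ref{eq.lem1.a}), I would peel off $q_1$ and $q_2$ — both belonging to a ``plus'' block (so both lie in $\mathbb{P}^\text{DL}$, since $\mathcal{X}_m^\text{IQ}$ is supported there up to the IQ image, which I address below) — and observe that the remaining $2k-1$ indices $(q_3,\dots,q_{2k+1})$ consist of $k-1$ plus-indices and $k$ minus-indices, exactly the structure of a $(2k-1)$th-order IMD term. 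The residual constraint on them is $\sum_{i=3}^{k+1}q_i-\sum_{i=k+2}^{2k+1}q_i=\tilde p-q_1-q_2$; comparing with the definition of $\mathbb{Q}^{2k-1}$, this is precisely the index set for the conjugated basis $\mathit\Phi_{2k-1,m}^*[\,q_1+q_2-p\,]$, because conjugating swaps the roles of the plus and minus blocks and negates the target frequency. Matching the $1/P^{2k+1}$ prefactor against $1/P^{2k-1}$ (the two extra $q$-sums over $\mathbb{P}^\text{DL}$ each carry a $1/P$) closes the identity.

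For the alternative form (\ref{eq.lem1.b}), I would instead peel off one plus-index $q_1$ and one minus-index $q_{k+2}$. The leftover $2k-1$ indices then have $k$ plus-indices and $k-1$ minus-indices — the signature of $\mathit\Phi_{2k-1,m}$ itself (not its conjugate) — with residual target $\tilde p - q_1 + q_{k+2}$, which gives the argument $p-q_1+q_{k+2}$ claimed. The bookkeeping is the same as before: verify the sign pattern of the frequency constraint, verify that the factored-out indices carry $\mathcal{X}^\text{IQ}_m[q_1](\mathcal{X}^\text{IQ}_m[q_{k+2}])^*$, and match prefactors.

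The main obstacle I anticipate is handling the support and periodicity subtleties cleanly. First, $\mathcal{X}^\text{IQ}_m$ is not supported exactly on $\mathbb{P}^\text{DL}$ — by the IQ-imbalance relation $\mathcal{X}^\text{IQ}_m[p]=\mathcal{X}_m[p]+b^\text{IQ}(\mathcal{X}_m[-p])^*$ it also picks up the mirrored band $-\mathbb{P}^\text{DL}$; I would either absorb this by interpreting $\mathbb{P}^\text{DL}$ in the lemma as the effective support of $\mathcal{X}^\text{IQ}_m$, or note that the identity holds verbatim once the summation ranges are read as ``over the support of $\mathcal{X}^\text{IQ}_m$.'' Second, the target index $\tilde p = p + rP$ is only defined modulo $P$, so the arguments $q_1+q_2-p$ and $p-q_1+q_2$ appearing inside $\mathit\Phi_{2k-1,m}$ must likewise be reduced modulo $P$; I would state once that all subcarrier arithmetic is mod-$P$ and that $\mathit\Phi_{2k\pm1,m}[\cdot]$ is extended $P$-periodically, after which the index shifts are consistent. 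With those conventions fixed, the proof is a direct re-indexing of the multiple sum and the verification is routine.
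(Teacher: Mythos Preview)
Your proposal is correct and follows essentially the same route as the paper's Appendix~A: expand the multi-sum from (\ref{eq.PhiUL}), split off either two plus-indices $q_1,q_2$ or one plus-index $q_1$ and one minus-index $q_{k+2}$, and recognise the residual $(2k{-}1)$-fold sum as $\mathit\Phi_{2k-1,m}^*[q_1+q_2-p]$ or $\mathit\Phi_{2k-1,m}[p-q_1+q_{k+2}]$ via the constraint in (\ref{eq.setQ}). Your explicit bookkeeping of the $1/P^{2}$ prefactor, the mod-$P$ periodicity, and the IQ-image support of $\mathcal{X}^\text{IQ}_m$ is in fact more careful than the paper's own argument, which suppresses these points.
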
 
\begin{proof}
See Appendix A.
\end{proof}


\begin{thm}
The size of the $(2k+1)$th IMD set for sequence $q_i$ sequence, $\left|\mathbb{Q}^{2k+1}_p\right|$, the number of $(q_1,\cdots,q_{2k+1})$ that satisfies $\sum_{i=1}^{k+1} q_i-~\sum_{j=1}^{2k+1} q_j = p$, to utilize the proposed SI pilot signal is as follows:
\begin{equation}
\begin{aligned}
\left|\mathbb{Q}^{2k+1}_p\right|
&=\sum_{\rho\in\mathbb{P}}\Lambda^\textup{DL}[p+\rho]\left|\mathbb{Q}^{2k-1}_\rho\right|,
\label{eq.theo1}
\end{aligned}
\end{equation}
where $\Lambda^\textup{DL}$ is the triangular function for $\mathbb{P}^\textup{DL}$ as follows:
\begin{equation}
\begin{aligned}
\Lambda^\textup{DL}[p]=
\begin{cases}
p-2p^\textup{DL}_\textup{start}
& 2p^\textup{DL}_\textup{start}<p<p^\textup{DL}_\textup{start}+p^\textup{DL}_\textup{end}\\
-p+2p^\textup{DL}_\textup{end}
& p^\textup{DL}_\textup{start}+p^\textup{DL}_\textup{end}<p<2p^\textup{DL}_\textup{end}\\
0 & \textup{otherwise}
\end{cases}.
\label{eq.tri}
\end{aligned}
\end{equation}
\label{thm.NQ}
\end{thm}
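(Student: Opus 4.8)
The plan is to prove Theorem~\ref{thm.NQ} by induction on $k$, using the recursive structure of the IMD basis established in Lemma~1. The base case is $k=0$: here $\mathbb{Q}^1_p$ is the set of single indices $q_1$ with $q_1 \equiv p \pmod P$ and $q_1 \in \mathbb{P}^\textup{DL}$, so $|\mathbb{Q}^1_p| = \Lambda^{\textup{DL},(1)}[p]$ for an appropriate indicator-type function; I would check that the recursion in (\ref{eq.theo1}) reduces correctly at $k=1$ to a convolution of two DL-band indicators, which is exactly the triangular function $\Lambda^\textup{DL}$ in (\ref{eq.tri}). The width and breakpoints $2p^\textup{DL}_\textup{start}$, $p^\textup{DL}_\textup{start}+p^\textup{DL}_\textup{end}$, $2p^\textup{DL}_\textup{end}$ are precisely those of the self-convolution of the box function on $\{p^\textup{DL}_\textup{start},\dots,p^\textup{DL}_\textup{end}\}$, so this identification is the anchor of the argument.

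For the inductive step, I would take the recursive identity (\ref{eq.lem1.a}) from Lemma~1, namely that $\mathit\Phi_{2k+1,m}[p]$ is built from $\mathit\Phi_{2k-1,m}[q_1+q_2-p]$ summed over $q_1,q_2\in\mathbb{P}^\textup{DL}$, and read off the corresponding statement at the level of \emph{index sets} rather than signal values: a sequence $(q_1,\dots,q_{2k+1})$ lies in $\mathbb{Q}^{2k+1}_p$ iff we can pick two of its ``plus'' coordinates, say contributing a sum $s = q_1+q_2$ with both in $\mathbb{P}^\textup{DL}$, and the remaining $2k-1$ coordinates form a sequence in $\mathbb{Q}^{2k-1}_\rho$ where $\rho = s - p$ (up to the mod-$P$ periodicity in (\ref{eq.setQ})). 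Counting: for each value $\rho$, the number of ways to choose the two DL indices summing to $\rho+p$ is the self-convolution $(\mathbf{1}_{\mathbb{P}^\textup{DL}} * \mathbf{1}_{\mathbb{P}^\textup{DL}})[\rho+p] = \Lambda^\textup{DL}[\rho+p]$, and the number of completions is $|\mathbb{Q}^{2k-1}_\rho|$ by the induction hypothesis. Summing over $\rho\in\mathbb{P}$ gives exactly (\ref{eq.theo1}).

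The main obstacle I anticipate is bookkeeping the combinatorial correspondence cleanly: the coordinates $q_1,\dots,q_{2k+1}$ are ordered and split into $k+1$ ``conjugated-positive'' and $k$ ``conjugated-negative'' slots, so when I peel off a pair I must be careful about (i) which two positions I remove — this should be a fixed canonical choice (e.g.\ the first two positive-sign positions, matching how Lemma~1 is written) so that no overcounting factor appears, and (ii) the wrap-around: $\tilde p = p + rP$ means the convolution and the recursion both live on $\mathbb{Z}/P\mathbb{Z}$, so I should either work modulo $P$ throughout or argue that for the DL band of interest the relevant sums stay in a single period, which keeps $\Lambda^\textup{DL}$ a genuine (non-aliased) triangular function. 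A secondary point is verifying that $\Lambda^\textup{DL}$ as defined in (\ref{eq.tri}) — with strict inequalities and the linear ramps $p - 2p^\textup{DL}_\textup{start}$ and $-p + 2p^\textup{DL}_\textup{end}$ — agrees with the discrete self-convolution of $\mathbf{1}_{\mathbb{P}^\textup{DL}}$ at every integer point, including the peak at $p = p^\textup{DL}_\textup{start}+p^\textup{DL}_\textup{end}$; this is a short direct count of $|\{(q_1,q_2)\in(\mathbb{P}^\textup{DL})^2 : q_1+q_2 = p\}|$.

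Once (\ref{eq.theo1}) is in hand, I would remark that unrolling the recursion expresses $|\mathbb{Q}^{2k+1}_p|$ as the $(k)$-fold discrete convolution of $\Lambda^\textup{DL}$ with the $\ell^1$-normalized box, which both explains the ``triangular/spline'' shape of the IMD power profile seen in Fig.~\ref{fig.TheoVal} and sets up Proposition~2.1, where $\mu_{2k+1}[p]$ is obtained by weighting this count with the per-term signal power $A_\textup{digi}^{2(2k+1)}/P^{2(2k+1)}$. That connection is not needed for Theorem~\ref{thm.NQ} itself, so I would keep it to a one-line closing remark.
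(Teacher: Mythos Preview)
Your approach is essentially the same as the paper's: invoke Lemma~1 to peel off two ``plus'' coordinates $q_1,q_2\in\mathbb{P}^\textup{DL}$, set $\rho=q_1+q_2-p$, identify the number of such pairs with the discrete self-convolution $\Lambda^\textup{DL}[p+\rho]$, and multiply by $|\mathbb{Q}^{2k-1}_\rho|$. The paper's proof is exactly this, stated in three sentences. One small point: the induction scaffolding you set up is not needed, because the theorem asserts a \emph{recursion}, not a closed form --- the quantity $|\mathbb{Q}^{2k-1}_\rho|$ appears on the right-hand side by definition, not by an inductive hypothesis, so there is nothing to induct on; your careful bookkeeping about canonical position choice and the mod-$P$ aliasing is sound and more explicit than the paper, but the base case and inductive wrapper can be dropped.
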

\begin{proof}
From (\ref{eq.lem1.a}), we define the index, $\rho=q_1+q_2-p$. The number of pair, $(q_1,q_2)$, satisfying $q_1+q_2=p+\rho$, is triangular function in terms of $\mathbb{P}^\text{DL}$ as in (\ref{eq.tri}). By multiplying the, $\left|\mathbb{Q}_\rho^{2k-1}\right|$ for each $\Lambda^\text{DL}[p+\rho]$, the theorem follows.
\end{proof}

\begin{table*}[]
\caption{Complexity Analysis of the frequency domain, Nonlinear, SIC algorithm}
\centering
\resizebox{\textwidth}{!}{%
\renewcommand{\arraystretch}{1.4}
\begin{tabular}{c|c|cc|cc}
\toprule
\multirow{2}{*}{\textbf{Step}}                                                                                     & \multirow{2}{*}{\textbf{Description}}                                & \multicolumn{2}{c|}{\textbf{Proposed Method}}                                                                                                                                                                                                           & \multicolumn{1}{c|}{\multirow{2}{*}{\cite{K_basis}}} & \multirow{2}{*}{\cite{FDC_iter}} \\ \cline{3-4}
                                                                                                          &                                                             & \multicolumn{1}{c|}{PA Coefficient, $\mathbf{a}$}                                                                                          & Freq. SI Channel $\mathcal{H}^\text{SI}$                                                                                             & \multicolumn{1}{c|}{}                      &                       \\ \midrule
\multirow{5}{*}{\begin{tabular}[c]{@{}c@{}}Common\\ Step\\ to Obtain \\ $\mathit{\Phi}_{2k+1,m}$\end{tabular}} & {$\mathcal{X}_m^\text{IQ}[p]=\mathcal{X}_m[p]+b^\text{IQ}\mathcal{X}_m[-p]$} &  \multicolumn{4}{c}{$P^\text{DL}$}\\
\cline{2-6}
& $x^\text{IQ}_m=\mathtt{IDFT}(\mathcal{X}^\text{IQ}_m)$                        & \multicolumn{1}{c|}{\multirow{3}{*}{\begin{tabular}[c]{@{}c@{}}$0$\\ (Proposition 2.2)\end{tabular}}}                                & \multirow{3}{*}{\begin{tabular}[c]{@{}c@{}}$(P^\text{DL})^2/2$\\ $P+P\log_2P$\end{tabular}} & \multicolumn{2}{c}{$\frac{1}{2}P\log_2 P$}                         \\ \cline{2-2} \cline{5-6} 
                                                                                                          & $\phi_{2k+1,m}=|x_m^\text{IQ}|^{2k}x_m^\text{IQ}$                            & \multicolumn{1}{c|}{}                                                                                                          &                                                                                                               & \multicolumn{1}{c|}{$PK_\text{max}$}                  & $P$                   \\ \cline{2-2} \cline{5-6} 
                                                                                                          & $\mathit\Phi_{2k+1,m}=\mathtt{DFT}(\phi_{2k+1,m})$               & \multicolumn{1}{c|}{}                                                                                                          &                                                                                                               & \multicolumn{2}{c}{$\frac{1}{2}K_\text{max}P\log_2 P$}                        \\ \cline{2-6} 
                                                                                                          & \begin{tabular}[c]{@{}c@{}}Transform $\mathit\Phi_{2k+1,m}$\end{tabular} & \multicolumn{1}{c|}{\begin{tabular}[c]{@{}c@{}}$K_\text{max}+1$\\ \end{tabular}} & \begin{tabular}[c]{@{}c@{}}$\sum|\mathbb{K}_p|<K_\text{max}P$\\ \end{tabular}                        & \multicolumn{1}{c|}{-}                     & $P^\text{UL}(K_\text{max}+1)^2$  \\ \hline
\begin{tabular}[c]{@{}c@{}}Estimating \\ SI Coefficient\end{tabular}                                                                             & $\hat{C}[p]=\text{Estimator}({\pmb{\mathit\Phi}}_p,\mathcal{Y})$     & \multicolumn{1}{c|}{$(K_\text{max}+1)^3$}                                                                                                 & \begin{tabular}[c]{@{}c@{}}$P^\text{UL}$\\ \end{tabular}                            & \multicolumn{1}{c|}{$P^\text{UL}(K_\text{max}+1)^3$}  & $P^\text{UL}(K_\text{max}+1)R$   \\ \hline
Running SIC                                                                                               & $\hat{\mathcal{Y}}[p]={\pmb{\mathit\Phi}}_m[p]\hat{\pmb{\mathcal{C}}}_p$            & \multicolumn{2}{c|}{$P^\text{UL}+\sum|\mathbb{K}_p|<P^\text{UL}(K_\text{max}+1)$}                                                                                                                                                                         & \multicolumn{1}{c|}{$P^\text{UL}(K_\text{max}+1)$}    & $P^\text{UL}(K_\text{max}+1)R$   \\ \bottomrule
\end{tabular}%
}
\label{table.complexity}
\end{table*}

\begin{prop}
The size of the 3rd IMD set in terms of UL subcarrier $p\in\mathbb{P}^\textup{UL}$, $\left|\mathbb{Q}^3_p\right|$, is as follows:
\begin{equation}
\begin{aligned}
\left|\mathbb{Q}^{3}_p\right|=Q(p-P)+Q(p)+Q(p+P),
\end{aligned}
\end{equation}
where $Q(p)$ is as follows:
\begin{equation}
\begin{aligned}
Q(p) = \begin{cases}
   \frac{1}{2}\left(p+p^\text{DL}_\text{end}-2p^\text{DL}_\text{start}\right)^2 & 2p^\text{DL}_\text{start}-p^\text{DL}_\text{end} < p \leq p^\text{DL}_\text{start}\\
  Q_\text{IB}(p) & p^\text{DL}_\text{start}\leq p<p^\text{DL}_\text{end}\\
  \frac{1}{2}\left(2p^\text{DL}_\text{end}-p^\text{DL}_\text{start}-p\right)^2 & p^\text{DL}_\text{end} < p \leq 2p^\text{DL}_\text{end}-p^\text{DL}_\text{start}
\end{cases}\\
Q_\text{IB}(p)= \left(p^\text{DL}_\text{end}-p^\text{DL}_\text{start}\right)^2-\frac{1}{2}\left(p-p^\text{DL}_\text{start}\right)^2-\frac{1}{2}\left(p^\text{DL}_\text{end}-p\right)^2.
\label{eq.Q3p}
\end{aligned}
\end{equation}
\end{prop}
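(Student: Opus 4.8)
The plan is to specialize Theorem~\ref{thm.NQ} (equivalently, the recursion (\ref{eq.lem1.a}) of Lemma~1) to the case $k=1$ and then evaluate the resulting one‑dimensional sum in closed form. Throughout the sketch write $a=p^\text{DL}_\text{start}$, $b=p^\text{DL}_\text{end}$, and $W=b-a$ for the DL bandwidth.

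First I would note that $\left|\mathbb{Q}^1_\rho\right|$ is simply the indicator that $\rho$, reduced modulo $P$ as in (\ref{eq.setQ}), lies in $\mathbb{P}^\text{DL}$, since $\mathbb{Q}^1_\rho$ consists of the single‑tuple solutions $q_1=\tilde\rho$ with $q_1\in\mathbb{P}^\text{DL}$. Substituting this into (\ref{eq.theo1}) collapses the sum over $\rho\in\mathbb{P}$ and gives $\left|\mathbb{Q}^3_p\right|=\sum_r\sum_{q_3\in\mathbb{P}^\text{DL}}\Lambda^\text{DL}[p+rP+q_3]$; the same expression is obtained directly by conditioning on $q_3$ in the defining constraint $q_1+q_2-q_3=p+rP$ and counting the pairs $(q_1,q_2)$ with $q_1+q_2=p+rP+q_3$ via the triangular function $\Lambda^\text{DL}$ of (\ref{eq.tri}). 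Setting $Q(u):=\sum_{q_3\in\mathbb{P}^\text{DL}}\Lambda^\text{DL}[u+q_3]$ then yields $\left|\mathbb{Q}^3_p\right|=\sum_r Q(p+rP)$, and since $\Lambda^\text{DL}$ is supported on an interval of length $2W$ only the shifts $r\in\{-1,0,1\}$ contribute for the band allocations considered, which is $\left|\mathbb{Q}^3_p\right|=Q(p-P)+Q(p)+Q(p+P)$. It then remains to identify $Q$.

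Second, I would evaluate $Q(u)$ by replacing the sum over the $W$ indices $q_3\in[a,b]$ with the integral $Q(u)=\int_a^b\Lambda^\text{DL}(u+t)\,dt$, the continuum approximation already used in deriving $\Lambda^\text{DL}$ itself. Geometrically this is the area of the tent $\Lambda^\text{DL}$ — supported on $[2a,2b]$, rising as $x-2a$ on $[2a,a+b]$, falling as $2b-x$ on $[a+b,2b]$, with apex $W$ at $a+b$ — seen through a sliding window $[u+a,\,u+b]$ of width exactly $W$. Because the window width equals each ramp's width, there are precisely three non‑degenerate regimes, with breakpoints exactly where an endpoint of the window meets the apex: (i) for $2a-b<u\le a$ the window sees only the rising ramp truncated at its right end, so $Q(u)=\int_{2a}^{u+b}(x-2a)\,dx=\tfrac12(u+b-2a)^2$; (ii) for $a\le u<b$ the window straddles the apex and, splitting the integral at $x=a+b$, $Q(u)=\int_{u+a}^{a+b}(x-2a)\,dx+\int_{a+b}^{u+b}(2b-x)\,dx=W^2-\tfrac12(u-a)^2-\tfrac12(b-u)^2=Q_\text{IB}(u)$; (iii) for $b<u\le 2b-a$ the window sees only the falling ramp truncated at its left end, $Q(u)=\int_{u+a}^{2b}(2b-x)\,dx=\tfrac12(2b-a-u)^2$; and $Q(u)=0$ otherwise. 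This is exactly (\ref{eq.Q3p}), which proves the proposition.

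The computation is otherwise routine; the one thing to get right is the case split — in particular verifying that $u=a$ and $u=b$ are the only breakpoints, which follows from the coincidence of the window width with the ramp width (so the window is never strictly interior to a single ramp), and that the three quadratic branches agree there, e.g. (i) and (ii) both give $\tfrac12 W^2$ at $u=a$ and (ii) and (iii) both give $\tfrac12 W^2$ at $u=b$, while (i) and (iii) vanish at $u=2a-b$ and $u=2b-a$. A secondary point worth a line is the justification of truncating the periodic sum to $r\in\{-1,0,1\}$ and of the sum‑to‑integral passage, both of which rely only on the DL band lying within the system band and on the fine‑subcarrier‑grid assumption already in force for Theorem~\ref{thm.NQ}.
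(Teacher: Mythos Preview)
Your proof is correct and follows essentially the same approach as the paper: specialize Theorem~\ref{thm.NQ} at $k=1$ using $|\mathbb{Q}^1_\rho|=\mathbf{1}\{\rho\in\mathbb{P}^\text{DL}\}$ to obtain $|\mathbb{Q}^3_p|=\sum_{\rho\in\mathbb{P}^\text{DL}}\Lambda^\text{DL}[p+\rho]$ (with periodic shifts), then evaluate this sum of the triangular function piecewise. Your treatment is in fact more explicit than the paper's---you spell out the truncation to $r\in\{-1,0,1\}$, the sum-to-integral passage, the geometric case split, and the continuity checks at the breakpoints---but the underlying argument is the same.
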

\begin{proof}
For the case of $k=1$, the condition $\mathbb{Q}^1_p=\mathbb{P}^\text{DL}$ holds. Equation (\ref{eq.theo1}) can be reshaped to $|\mathbb{Q}^3_p|=~\sum_{\rho\in\mathbb{P}^\text{DL}}\Lambda^\text{DL}[p+\rho]$. With reference to (\ref{eq.tri}), the summing of $\Lambda^\text{DL}[p+\rho]$ for certain $\rho$ as follows:
\begin{equation}
\max(p+p^\text{DL}_\text{start},2p^\text{DL}_\text{start})<p+\rho<\min(p+p^\text{DL}_\text{end},2p^\text{DL}_\text{end}).
\end{equation}
Aggregating the affective sections of $\Lambda^\text{DL}[p+\rho]$ according to equation (\ref{eq.tri}) results in equation (\ref{eq.Q3p}), and the proposition follows.
\end{proof}

\begin{thm}
The expectation of the $(2k+1)$th nonlinear basis over $m$, $\mu_{2k+1}[p]=E\left[\left|{\mathit\Phi_{2k+1,m}[p]}\right|^2\right]$, to predict nonlinear SI, $\hat{\mathcal{I}}_{2k+1}$, is as follows:
\begin{equation}
\begin{aligned}
\mu_{2k+1}[p]
=&\frac{2k(2k-1)B_\textup{IQ}^2}{P^4}\sum_{\rho\in\mathbb{P}}\Lambda^\textup{DL}[p+\rho]\mu_{2k-1}[\rho]\\
&+\frac{(k+1)^2B_\textup{IQ}^2}{P^4}\left|\mathbb{P}^\textup{DL}\right|^2\mu_{2k-1}[p],
\end{aligned}
\label{eq.thm.mu}
\end{equation}
where $B_\textup{IQ}=(1+(b^\textup{IQ})^2)A_\textup{digi}^2$.
\end{thm}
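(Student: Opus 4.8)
I would prove (\ref{eq.thm.mu}) by induction on $k$. The base case $k=0$ is $\mu_1[p]=E\!\left[|\mathit\Phi_{1,m}[p]|^2\right]=E\!\left[|\mathcal{X}^\textup{IQ}_m[p]|^2\right]$, which equals $B_\textup{IQ}$ once the pilot power constraint $E[|\mathcal{X}_m[p]|^2]=A_\textup{digi}^2$ and the image relation $\mathcal{X}^\textup{IQ}_m[p]=\mathcal{X}_m[p]+b^\textup{IQ}(\mathcal{X}_m[-p])^*$ are substituted (the cross term vanishes because the pilot tones are zero-mean and uncorrelated across distinct subcarriers, in particular across $p$ and $-p$). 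For the inductive step I would start from the recursion of Lemma 1 in the form (\ref{eq.lem1.a}), which exposes two ``fresh'' downlink tones $q_1,q_2\in\mathbb{P}^\textup{DL}$ and writes $\mathit\Phi_{2k+1,m}[p]$ as a $\mathbb{P}^\textup{DL}\times\mathbb{P}^\textup{DL}$--weighted combination of $\mathit\Phi^*_{2k-1,m}[q_1+q_2-p]$, carrying the $P^{-2}$ normalization that (\ref{eq.PhiUL}) attaches per recursion level. Forming $|\mathit\Phi_{2k+1,m}[p]|^2$ and taking the expectation over the pilot realizations then yields a fourfold sum over $(q_1,q_2,q_1',q_2')\in(\mathbb{P}^\textup{DL})^4$ of expectations of products of pilot tones multiplied by $\mathit\Phi^*_{2k-1,m}[q_1+q_2-p]\,\mathit\Phi_{2k-1,m}[q_1'+q_2'-p]$.

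The heart of the argument is classifying which of these terms survive and at what order. Because the pilot has zero mean and is i.i.d.\ across subcarriers, only ``balanced'' tone patterns contribute, and each surviving second moment is promoted from $A_\textup{digi}^2$ to $B_\textup{IQ}=(1+(b^\textup{IQ})^2)A_\textup{digi}^2$ by the image tone introduced through IQ imbalance. The contributing configurations then split into two families. In the first, the fresh tones of one copy are contracted against the fresh tones of the partner copy; the arguments of the two surviving $\mathit\Phi_{2k-1}$ factors collapse to a common value $\rho=q_1+q_2-p$, and counting the pairs $(q_1,q_2)\in(\mathbb{P}^\textup{DL})^2$ with $q_1+q_2=p+\rho$ reproduces exactly the triangular weight $\Lambda^\textup{DL}[p+\rho]$ of (\ref{eq.tri}) --- the same elementary count already used to prove Theorem~1 --- leaving a contribution proportional to $\sum_{\rho\in\mathbb{P}}\Lambda^\textup{DL}[p+\rho]\,\mu_{2k-1}[\rho]$. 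In the second family, the fresh tones are absorbed into the $\mathit\Phi_{2k-1}$ body of the opposite copy; this pins the surviving argument at $p$ and leaves each fresh tone free to range over all of $\mathbb{P}^\textup{DL}$, producing a contribution proportional to $|\mathbb{P}^\textup{DL}|^2\,\mu_{2k-1}[p]$. Squaring the per-level $P^{-2}$ normalization gives the $P^{-4}$ prefactor, and substituting the induction hypothesis for $\mu_{2k-1}$ closes the recursion.

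The step I expect to be the main obstacle is the multiplicity bookkeeping: one must count, separately for each family, in how many ways the $k+1$ unconjugated and $k$ conjugated tone slots of each copy of $\mathit\Phi_{2k+1}$ can realize that contraction pattern, accounting for the fact that the choice of the two ``fresh'' slots is not canonical and that $\mathit\Phi_{2k-1}$ shares randomness with the fresh tones. Carrying this out should yield the weight $2k(2k-1)$ for the $\Lambda^\textup{DL}$ term and $(k+1)^2$ for the $|\mathbb{P}^\textup{DL}|^2$ term. A secondary technical point is to verify that the remaining contributions --- the ``anomalous'' image correlations $E[\mathcal{X}^\textup{IQ}_m[q]\mathcal{X}^\textup{IQ}_m[-q]]\propto b^\textup{IQ}$ and the diagonal coincidences such as $q_1=q_2$ --- are of strictly lower order and may be discarded; this is precisely the sense in which (\ref{eq.thm.mu}) is a near-closed-form rather than an exact identity, and the numerical agreement in Fig.~\ref{fig.TheoVal} is what confirms the discarded terms are negligible.
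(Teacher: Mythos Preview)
Your proposal is correct and follows essentially the same path as the paper's Appendix~B: expand $|\mathit\Phi_{2k+1}|^2$ via the Lemma~1 recursion, invoke the orthogonality $E[\mathit\Phi_{2k-1,m}[\rho]\mathit\Phi^*_{2k-1,m}[\rho']]=\mu_{2k-1}[\rho]\,\delta_{\rho\rho'}$, and split the surviving contractions into the same two families carrying multiplicities $2k(2k-1)$ and $(k+1)^2$. The paper resolves your anticipated bookkeeping obstacle by a relabeling of tone slots (auxiliary indices $\tilde{q}_\ell,\tilde{q}'_\ell$ that mix the two copies, with $q_{2k+1},q'_{2k+1}$ treated as determined), and---exactly as you note---the identity is approximate and is validated numerically in Fig.~\ref{fig.TheoVal}.
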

\begin{proof}
See Appendix B. 
\end{proof}
For the 3rd nonlinear basis, a perfect closed-form of $\mu_3$ exists, as the following proposition. 

\begin{prop}
The expectation of the 3rd nonlinear basis, $\mu_{3}$, in closed-form is as follows:
\begin{equation}
\mu_3[p]=\frac{B_\textup{IQ}^3}{P^4}\left(2\left|\mathbb{Q}^3_p\right|+4\left|\mathbb{P}^\textup{DL}\right|^2\right).
\end{equation}
\end{prop}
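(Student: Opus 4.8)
The plan is to derive $\mu_3$ simply by specializing the recursion of Theorem~2, equation~\eqref{eq.thm.mu}, to $k=1$ and then substituting the elementary base case $\mu_1$. At $k=1$ the two prefactors collapse to $2k(2k-1)=2$ and $(k+1)^2=4$, so \eqref{eq.thm.mu} reads
\begin{equation}
\mu_3[p]=\frac{2B_\text{IQ}^2}{P^4}\sum_{\rho\in\mathbb{P}}\Lambda^\text{DL}[p+\rho]\,\mu_1[\rho]+\frac{4B_\text{IQ}^2}{P^4}\left|\mathbb{P}^\text{DL}\right|^2\mu_1[p].
\end{equation}
Everything then hinges on the fact that $\mu_1$ is known in closed form, which is what turns the ``near-closed-form'' recursion of Theorem~2 into a genuine closed form at order $3$.

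Next I would pin down $\mu_1$. Since $\mathit{\Phi}_{1,m}=\mathcal{X}^\text{IQ}_m$ with $\mathcal{X}^\text{IQ}_m[p]=\mathcal{X}_m[p]+b^\text{IQ}(\mathcal{X}_m[-p])^*$, and the two summands are zero-mean, independent, and each of second moment $A_\text{digi}^2$ on the DL support, the cross term vanishes and
\begin{equation}
\mu_1[p]=E\!\left[\left|\mathcal{X}^\text{IQ}_m[p]\right|^2\right]=\bigl(1+(b^\text{IQ})^2\bigr)A_\text{digi}^2=B_\text{IQ}
\end{equation}
on its support, which is exactly $\mathbb{Q}^1_p=\mathbb{P}^\text{DL}$ (the same set used in the proof of Proposition~2.1), and $0$ elsewhere. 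Plugging this in, the first term becomes $\tfrac{2B_\text{IQ}^3}{P^4}\sum_{\rho\in\mathbb{P}^\text{DL}}\Lambda^\text{DL}[p+\rho]$; by the reshaping already carried out in the proof of Proposition~2.1, namely $\left|\mathbb{Q}^3_p\right|=\sum_{\rho\in\mathbb{P}^\text{DL}}\Lambda^\text{DL}[p+\rho]$ (the $k=1$ instance of Theorem~\ref{thm.NQ}, with the periodic replicas $\tilde p=p+rP$ of \eqref{eq.setQ} folded in as in \eqref{eq.Q3p}), this equals $\tfrac{2B_\text{IQ}^3}{P^4}\left|\mathbb{Q}^3_p\right|$, while the second term is $\tfrac{4B_\text{IQ}^3}{P^4}\left|\mathbb{P}^\text{DL}\right|^2$. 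Adding the two yields $\mu_3[p]=\tfrac{B_\text{IQ}^3}{P^4}\bigl(2\left|\mathbb{Q}^3_p\right|+4\left|\mathbb{P}^\text{DL}\right|^2\bigr)$, which is the claim.

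The substitution itself is essentially one line; the only delicate points are bookkeeping. I expect the main obstacle to be the handling of support and periodicity: verifying that $\mu_1$ is correctly supported on $\mathbb{P}^\text{DL}$ with the IQ-image contribution included, so that the uniform constant $B_\text{IQ}$ is legitimate rather than $A_\text{digi}^2$ on part of the band, and that the periodic shifts $p,\,p\pm P$ entering $\left|\mathbb{Q}^3_p\right|$ are tracked consistently when the $\Lambda^\text{DL}$-sum is folded — precisely the mechanism that produces the three-term form $Q(p-P)+Q(p)+Q(p+P)$ of Proposition~2.1. One should also flag the regime of validity: the $\mu_1[p]$ factor multiplying $\left|\mathbb{P}^\text{DL}\right|^2$ is nonzero only when $p\in\mathbb{P}^\text{DL}$ (e.g.\ in the overlap band $\mathbb{P}^\text{OV}$), which is implicit in stating $\mu_3[p]$ in the given unconditioned form.
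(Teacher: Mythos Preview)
Your proposal is correct and follows essentially the same route as the paper: specialize Theorem~2 to $k=1$, compute $\mu_1[p]=B_\text{IQ}$ on $\mathbb{P}^\text{DL}$ and $0$ otherwise, substitute, and identify $\sum_{\rho\in\mathbb{P}^\text{DL}}\Lambda^\text{DL}[p+\rho]=\left|\mathbb{Q}^3_p\right|$ via Theorem~\ref{thm.NQ}. Your closing remark that the $\left|\mathbb{P}^\text{DL}\right|^2$ term carries the factor $\mu_1[p]$ and hence is nonzero only for $p\in\mathbb{P}^\text{DL}$ is a useful caveat that the paper leaves implicit.
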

\begin{proof}
For (\ref{eq.thm.mu}), we can substitute $2k+1=1\rightarrow k=1$, and replace $\mu_1$ as follows:
\begin{equation}
\mu_1[p]=E\left[\left|\mathcal{X}_m^\text{IQ}[p]\right|^2\right]=
\begin{cases}
B_\text{IQ}=(1+|b^\text{IQ}|^2)A_\text{digi}^2\quad&p\in\mathbb{P}^\text{DL}\\
0\quad &p\notin\mathbb{P}^\text{DL}.
\end{cases}
\end{equation}
Then we can sum up with the equation as follows:
\begin{equation}
\begin{aligned}
\mu_3[p]&=\frac{2B_\textup{IQ}^2}{P^4}\sum_{\rho\in\mathbb{P}}\Lambda^\text{DL}[p+\rho]\mu_1[\rho]+\frac{4A_\text{digi}^4}{P^4}\left|\mathbb{P}^\text{DL}\right|^2\mu_{1}[p]\\
&=\frac{2B_\textup{IQ}^2}{P^4}\sum_{\rho\in\mathbb{P}^\text{DL}}\Lambda^\text{DL}[p+\rho]B_\textup{IQ}+\frac{4A_\text{digi}^4}{P^4}\left|\mathbb{P}^\text{DL}\right|^2B_\textup{IQ}.
\end{aligned}
\end{equation}
As $\sum_{\rho\in\mathbb{P}^\text{DL}}\Lambda^\text{DL}[p+\rho]=\left|\mathbb{Q}^3_p\right|$, the proposition follows.
\end{proof}
In the same step as Proposition~2.1, a closed-form for $\mu_{2k+1}$ after the $5$th order can also be derived. Fig.~\ref{fig.TheoVal} presents the validation of the theorems and the propositions compared with the numerical-based values.
\begin{figure}[t]
	\centering
	{\includegraphics[width=1.0\columnwidth]{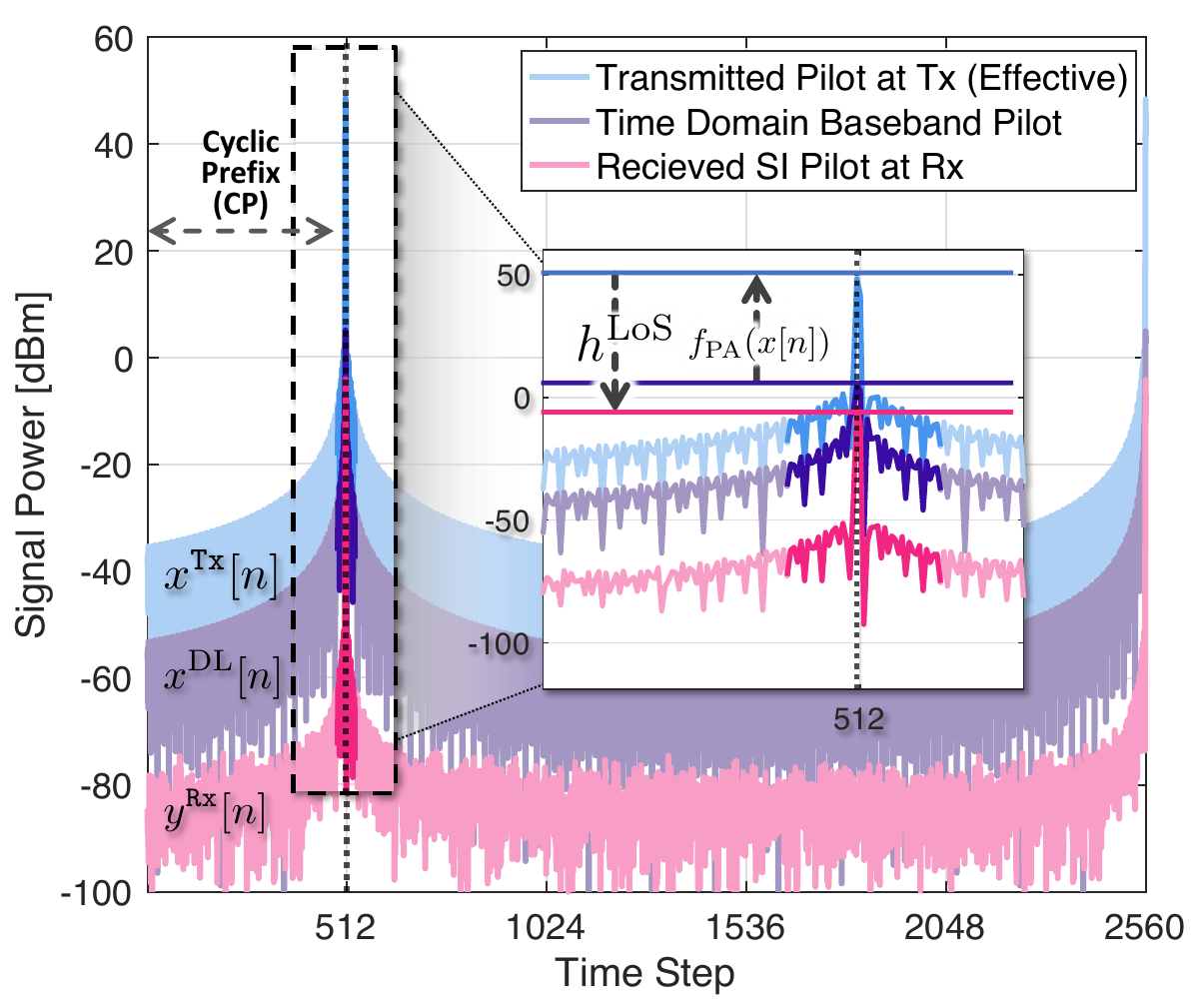}%
}
	\caption{The figure shows the procedure of $\mathtt{T/Rx}$ nonlinearity estimation, especially the power amplifier (PA) nonlinearity. The system estimate the nonlinearity with the proposed pilot pattern, $\mathcal{X}^\mathtt{imp}_m$.}
	\label{fig.algo_PA}
\vspace{-15pt}
\end{figure}

\subsection{Algorithm and Pilot Pattern for Low-complexity SIC: Reducing $\sum|\mathbb{K}_p|$}
By defining an impulse-like pilot signal for symbols $m\in\mathbb{M}^\mathtt{imp}$, our proposed method estimates the nonlinearity of the transmitter ($\mathtt{Tx}$) using beamforming information expressed as $h^\text{LoS}=\mathbf{w}^\text{UL}_\text{BS}\mathbf{H}^\text{LoS}\mathbf{f}^\text{DL}_\text{BS}$, depicted in Fig.~\ref{fig.algo_PA}. This process ensures that $\mathbb{M}^\mathtt{imp}$ is a subset of $\mathbb{M}^\mathtt{train}$. For symbols with $m\in\mathbb{M}^\mathtt{train}-\mathbb{M}^\mathtt{imp}$, the self-interference (SI) channel $\mathcal{H}^\text{SI}[p]$ is estimated. The system employs a novel pilot pattern to estimate nonlinearity, as outlined in the following proposition.
\begin{prop}
The $(2k+1)$th nonlinear basis of SI pilot, when the pilot allocated as $\mathcal{X}^\mathtt{imp}_m[p]=A_\textup{digi}e^{j\Omega p}$, is as follows:
\begin{equation}
\begin{aligned}
\mathit\Phi^\mathtt{imp}_{2k+1,m}[p]=\left|\mathbb{Q}^{2k+1}_p\right|A_\textup{digi}^{2k+1}|1+b^\textup{IQ}|^{2k}(1+b^\textup{IQ})e^{j\Omega p}.
\end{aligned}
\end{equation} 
\label{thm.SQ}
\end{prop}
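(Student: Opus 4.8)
The plan is to prove Proposition~\ref{thm.SQ} by direct substitution of the impulse-like pilot into the frequency-domain IMD expansion~(\ref{eq.PhiUL}), exploiting the fact that such a pilot turns the effective transmit spectrum $\mathcal{X}^\textup{IQ}_m$ into a single pure complex exponential in the subcarrier index. Once that is established, every one of the $|\mathbb{Q}^{2k+1}_p|$ terms in the IMD sum collapses to one and the same constant, so the sum reduces to that constant scaled by $|\mathbb{Q}^{2k+1}_p|$, which is exactly the claimed form.

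First I would evaluate $\mathcal{X}^\textup{IQ}_m$ on the pilot. Inserting $\mathcal{X}^\mathtt{imp}_m[p]=A_\textup{digi}e^{j\Omega p}$ into the IQ-imbalance model $\mathcal{X}^\textup{IQ}_m[p]=\mathcal{X}_m[p]+b^\textup{IQ}(\mathcal{X}_m[-p])^*$, and using that $A_\textup{digi}$ is a real scalar and $(e^{-j\Omega p})^*=e^{j\Omega p}$ for real $\Omega$, the reflected–conjugate term lands in phase with the direct term, giving $\mathcal{X}^\textup{IQ}_m[p]=A_\textup{digi}(1+b^\textup{IQ})e^{j\Omega p}$. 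Substituting this into~(\ref{eq.PhiUL}), each of the $k+1$ non-conjugated factors contributes $A_\textup{digi}(1+b^\textup{IQ})e^{j\Omega q_i}$ and each of the $k$ conjugated factors contributes $A_\textup{digi}(1+b^\textup{IQ})^*e^{-j\Omega q_i}$, so (up to the normalization prefactor of~(\ref{eq.PhiUL})) every summand equals $A_\textup{digi}^{2k+1}(1+b^\textup{IQ})^{k+1}\bigl((1+b^\textup{IQ})^*\bigr)^{k}\,e^{j\Omega(\sum_{i=1}^{k+1}q_i-\sum_{i=k+2}^{2k+1}q_i)}$, and collapsing $(1+b^\textup{IQ})^{k+1}\bigl((1+b^\textup{IQ})^*\bigr)^{k}=(1+b^\textup{IQ})\,|1+b^\textup{IQ}|^{2k}$ recovers the stated scalar.

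The step that requires care — and the main obstacle — is the exponential phase. By the definition of $\mathbb{Q}^{2k+1}_p$ in~(\ref{eq.setQ}), every tuple in the set satisfies $\sum_{i=1}^{k+1}q_i-\sum_{i=k+2}^{2k+1}q_i=\tilde p=p+rP$, and since the $q_i$ run over $\mathbb{P}^\textup{DL}\subset\{1,\dots,P\}$ the integer $r$ genuinely takes several values, so a priori the phase is $e^{j\Omega p}e^{j\Omega rP}$ rather than simply $e^{j\Omega p}$. To close this I would invoke that the pilot is \emph{impulse-like} precisely because $\Omega$ is chosen on the DFT grid, i.e. an integer multiple of $2\pi/P$ (equivalently $x^\mathtt{imp}_m$ is a unit-magnitude time-domain impulse, cf. Fig.~\ref{fig.algo_PA}); then $e^{j\Omega rP}=1$ for every $r$, the phase is $e^{j\Omega p}$ uniformly, and the summand is independent of $(q_1,\dots,q_{2k+1})$. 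Summing the constant over $\mathbb{Q}^{2k+1}_p$ multiplies it by $|\mathbb{Q}^{2k+1}_p|$, giving the proposition.

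As a consistency check I would also note that the identity can be obtained inductively on $k$ from Lemma~1: the base case $k=0$ reads $\mathit\Phi^\mathtt{imp}_{1,m}[p]=\mathcal{X}^\textup{IQ}_m[p]=A_\textup{digi}(1+b^\textup{IQ})e^{j\Omega p}$ with $|\mathbb{Q}^1_p|=1$ on $\mathbb{P}^\textup{DL}$, and the recursion~(\ref{eq.lem1.a}), paired with the cardinality recursion of Theorem~\ref{thm.NQ}, propagates both the scalar factor $A_\textup{digi}^2|1+b^\textup{IQ}|^2$ per order and the combinatorial factor $|\mathbb{Q}^{2k+1}_p|$ in lockstep. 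Either route yields the same closed form; the direct substitution is the cleaner of the two and is the one I would write out in full.
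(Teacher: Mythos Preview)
Your proposal is correct and follows essentially the same route as the paper: direct substitution of the pilot into the IQ-imbalance model to get $\mathcal{X}^\textup{IQ}_m[p]=A_\textup{digi}(1+b^\textup{IQ})e^{j\Omega p}$, then into the IMD expansion~(\ref{eq.PhiUL}) so that every summand collapses to the same constant, followed by multiplication by $|\mathbb{Q}^{2k+1}_p|$. You are in fact more careful than the paper on one point: the paper's proof silently replaces $e^{j\Omega\tilde p}$ by $e^{j\Omega p}$ without commenting on the $rP$ wraparound in~(\ref{eq.setQ}), whereas you explicitly justify this via $\Omega\in(2\pi/P)\mathbb{Z}$; the inductive cross-check via Lemma~1 is your own addition and not in the paper.
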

\begin{proof}
As $\mathcal{X}_m^\mathtt{imp}[p]=e^{j\Omega p}$, all the components in the summation of $\mathit\Phi_{2k+1,m}$ in (\ref{eq.PhiUL}) has always same value as follows: 
\begin{equation}
\begin{aligned}
&\prod_{i=1}^{k+1}\mathcal{X}_m^\text{IQ}[q_i]\prod_{j=k+2}^{2k+1}\mathcal{X}^\text{IQ}_m[q_j]^*\\
&=\prod_{i=1}^{k+1}\left(\mathcal{X}_m[q_i]+b^\text{IQ}\mathcal{X}_m[-q_i]^*\right)\\
&\quad\quad\quad\times\prod_{j=k+2}^{2k+1}\left(\mathcal{X}_m[q_j]+b^\text{IQ}\mathcal{X}_m[-q_j]^*\right)^*\\
&=\prod_{i=1}^{k+1}A_\text{digi}\left(1+b^\text{IQ}\right)e^{j\Omega q_i}\prod_{j=k+2}^{2k+1}A_\text{digi}\left((1+b^\text{IQ})e^{j\Omega q_j}\right)^*\\
&=~A_\text{digi}^{2k+1}|1+b^\text{IQ}|^{2k}(1+b^\text{IQ})e^{j\Omega p}.
\end{aligned}
\end{equation}
As all components have the same phase and magnitude, we just multiply the size of the set we obtain in Theorem~1.
\end{proof}

To measure the PA coefficient separately, the BS utilizes analog beamforming information with $\mathbf{f}$ at $\mathtt{Tx}$ and $\mathbf{w}$ at $\mathtt{Rx}$. Both $\mathtt{Tx}$ and $\mathtt{Rx}$ maintain a fixed configuration within the BS. Once the UL/DL UE is scheduled and beam alignment is complete, the system is considered fully informed. The proposed method introduces a unique pilot pattern that enhances the algorithm for nonlinearity estimation. Generating a pilot signal waveform that resembles an impulse for OFDM is necessary due to the influence of a previous sampling time index in the cyclic prefix (CP). The system transmits $x^\mathtt{imp}_m$ at BS $\mathtt{Tx}$ and receives $y^\mathtt{imp}_m = y^\mathtt{Rx}_m[N_\text{CP}]$ at $\mathtt{Rx}$ across $|\mathbb{M}^\mathtt{imp}|$ symbols.

For the sampling time index $n < N_\text{CP}$, the signals pass through the NLOS SI channel, $h^\text{NLoS}$, and are returned as $y^\mathtt{Rx}_m[N_\text{CP}]$. If the system sufficiently suppresses $|x_m[n]|$ for all $n < N_\text{CP}$, then $\left|h^\text{NLoS} * x_m^\mathtt{Tx}[1:(N_\text{CP}-1)]\right|^2 < E\left[|z|^2\right]$, ensuring that the NLOS SI channel does not affect the SI channel estimation. Proposition 2.2 simplifies the process by calculating $\mathit\Phi_{2k+1,m}$ with constant complexity. The procedure for estimating nonlinearity is detailed from lines 1 to 7 in Algorithm~1.

\begin{algorithm} [!t]
	\caption{The Proposed SIC - Estimation Step}\label{algo.EST}
	\begin{algorithmic}[1]
	\Statex \textit{$\mathtt{T/Rx}$ nonlinearity, SI channel estimation,} 
	\Statex \textit{over} $m\in\mathbb{M}^\mathtt{train}$\textit{symbols}
		\State {\textbf{Input:} $x^\mathtt{imp}_m=x^\text{DL,CP}_m[N_\text{CP}], y^\mathtt{imp}_m=y^\mathtt{Rx}_m[N_\text{CP}],
		\mathcal{H}^\text{LoS}[p],$}
		\Statex {$\quad\quad\quad \mathcal{X}_m^\mathtt{train}[p],\mathcal{Y}_m^\mathtt{train}[p], \mathbb{Q}^{2k+1}_p, \mu_{2k+1}[p], \Gamma$}
		\State \textbf{Output:} $\hat{\mathcal{H}}^\text{SI}[p], b^\text{IQ}, \hat{\mathbf{a}}=\{a_{2k+1}\}, \mathbb{K}_p$
		\State Estimate $b^\text{IQ}$ from $\mathcal{X}_m[p], \mathcal{X}_m^*[-p],$ and $\mathcal{Y}_m[p]$ for $p\in\mathbb{P}^\text{DL}$
		\ForAll{$p\in\mathbb{P}^\text{UL}$}
		\ForAll{$m\in\mathbb{M}^\mathtt{imp}$}
		\State Obtain $\pmb{\mathit{\Phi}}_{m}^\mathtt{imp}[p]=\left[\cdots,\mathit{\Phi}_{2k+1,m}^\mathtt{imp}[p],\cdots\right]$ 
		\State from Proposition 2.2
		\State Append $\pmb{\mathit{\Phi}}_{m}^\mathtt{imp}[p], \mathcal{Y}^\mathtt{imp}_m[p]$ to $\pmb{\mathit{\Phi}}^\text{train}_p,\pmb{\mathcal{Y}}^\text{train}_p$ 
		\EndFor
		\State Obtain $\hat{\mathbf{a}}=\text{Estimator}\left(\mathcal{H}^\text{LoS}{\pmb{\mathit{\Phi}}}^\text{train}_p,\pmb{\mathcal{Y}}^\text{train}_p\right)$
		\EndFor
		\ForAll{$p\in\mathbb{P}^\text{UL}$}
		\ForAll{$m\in\mathbb{M}^\mathtt{train}$}
		\State $\mathit{\Phi}_{2k+1,m}^\mathtt{train}[p] = \text{IMD}\left(\mathcal{X}^\text{IQ}_{m},\mathit{\Phi}^\mathtt{train}_{2k-1,m}\right)$
		\State Append $\mathit{\Phi}_{2k+1,m}^\mathtt{train}[p]$ to $\pmb{\mathit\Phi}_p^\mathtt{train}$ 
		\EndFor
		\State $\hat{\mathcal{H}}^\text{SI}[p]=$Estimator$\left(\mathbf{\hat{a}}\pmb{\mathit\Phi}^\mathtt{train}_p,{\pmb{\mathcal{Y}}^\mathtt{train}_p}\right)$
		\While{$\hat{\mathcal{I}}_{2k+1}[p]>\Gamma,k<K_\text{max}$}
		\State $\hat{\mathcal{I}}_{2k+1}[p]=\hat{a}_{2k+1}^2\mu_{2k+1}[p]\left|\hat{\mathcal{H}}^\text{SI}[p]\right|^2$
		\State $k \rightarrow \mathbb{K}_p$	
		\EndWhile
		\EndFor
	\end{algorithmic}
\end{algorithm}

\begin{algorithm} []
	\caption{The Proposed SIC - Running Step}\label{algo.Run}
	\begin{algorithmic}[1]
	\Statex \textit{Running step for each sample,} $\forall m \in \mathbb{M}^\mathtt{run}, \forall p \in \mathbb{P}^\text{UL}$
		\State {\textbf{Input:} $\mathcal{Y}^\mathtt{Rx}[p],\mathcal{X}^\text{DL}[p],\hat{\mathcal{H}}^\text{SI}[p],\hat{\mathbf{a}}=\{\hat{a}_{2k+1}\}, b^\text{IQ}, \mathbb{K}_p$}
		\State \textbf{Output:} $\hat{\mathcal{Y}}^\text{UL}[p]$
		\State $\mathit{\Phi}_{1,m}[p]=\mathcal{X}^\text{IQ}_m[p]=\mathcal{X}_m[p]+b^\text{IQ}\left(\mathcal{X}_m[-p]\right)^*$
		\For{$k\in \mathbb{K}_p$}
		\State $\mathit{\Phi}_{2k+1,m}=\text{IMD}(\mathcal{X}^\text{IQ}_m,\mathit{\Phi}_{2k-1,m})$
		\EndFor
		\State $\hat{\mathcal{X}}^\mathtt{Tx}_m[p]=\left[\hat{a}_1\cdots\hat{a}_{2|\mathbb{K}_p|+1}\right]\left[\mathit{\Phi}_{1,m}[p]\cdots\mathit{\Phi}_{2|\mathbb{K}_p|+1,m}[p]\right]^\top$
				\State $\hat{\mathcal{Y}}^\text{UL}_m[p]=\mathcal{Y}^\mathtt{Rx}_m[p]-\hat{\mathcal{H}}^\text{SI}[p]\hat{\mathcal{X}}_m^\mathtt{Tx}[p]$

	\end{algorithmic}
\end{algorithm}

\subsection{Computational Complexity Analysis}

Table~\ref{table.complexity} compares the complexity per sample of our proposed SIC method with that of the conventional frequency domain SIC. As demonstrated in Algorithms 1 and 2, both the estimation and running steps require calculation of the nonlinear basis, $\mathit{\Phi}_{2k+1,m}[p]$. According to Proposition~2.2, for $m\in\mathbb{M}^\text{imp}$, the generation of $\mathit\Phi_{2k+1,m}$ can be computed with constant complexity. During the estimation of the SI coefficient, it is essential to minimize the size of the input matrix, such as $\pmb{\mathit{\Phi}}_p$. Conventional methods have complexity proportional to $P^\text{UL}(K\text{max}+1)$ because they estimate the PA coefficient and SI channel for each subcarrier simultaneously, represented as $\hat{\pmb{C}}_p$. For the method described in\cite{FDC_iter}, $R$ represents the number of iterations, with stable SIC performance expected when $R\geq3$. However, our proposed method achieves lower complexity by separately processing steps that require $\mathcal{O}(K_{\text{max}}+1)$ and $\mathcal{O}(P^\text{UL})$, when RF chain nonlinearity is frequency independent. Even when PA nonlinearity and IQ imbalance are frequency selective, they allow for a lower estimation repetititon compared to the SI channel. Consequently, only the SI channel requires frequent updates and estimation, which involves $\mathcal{O}(P^\text{UL})$.

The proposed method, utilizing two training types, $\mathbb{M}^\text{imp}$ and $\mathbb{M}^\text{train}$, achieves the lowest complexity. Even if an estimator other than LS is used, complexity increases proportionally with the size of the input matrix~\cite{K_FH}. Further reductions in complexity can be expected through adaptive selection of the nonlinearity order, $\left|\mathbb{K}_p\right|$, for each subcarrier.
	
\begin{figure*}
\begin{center}
    \subfigure[Distribution of SIC performance, IBFD]{%
      \includegraphics[width=0.66\columnwidth,keepaspectratio]
      {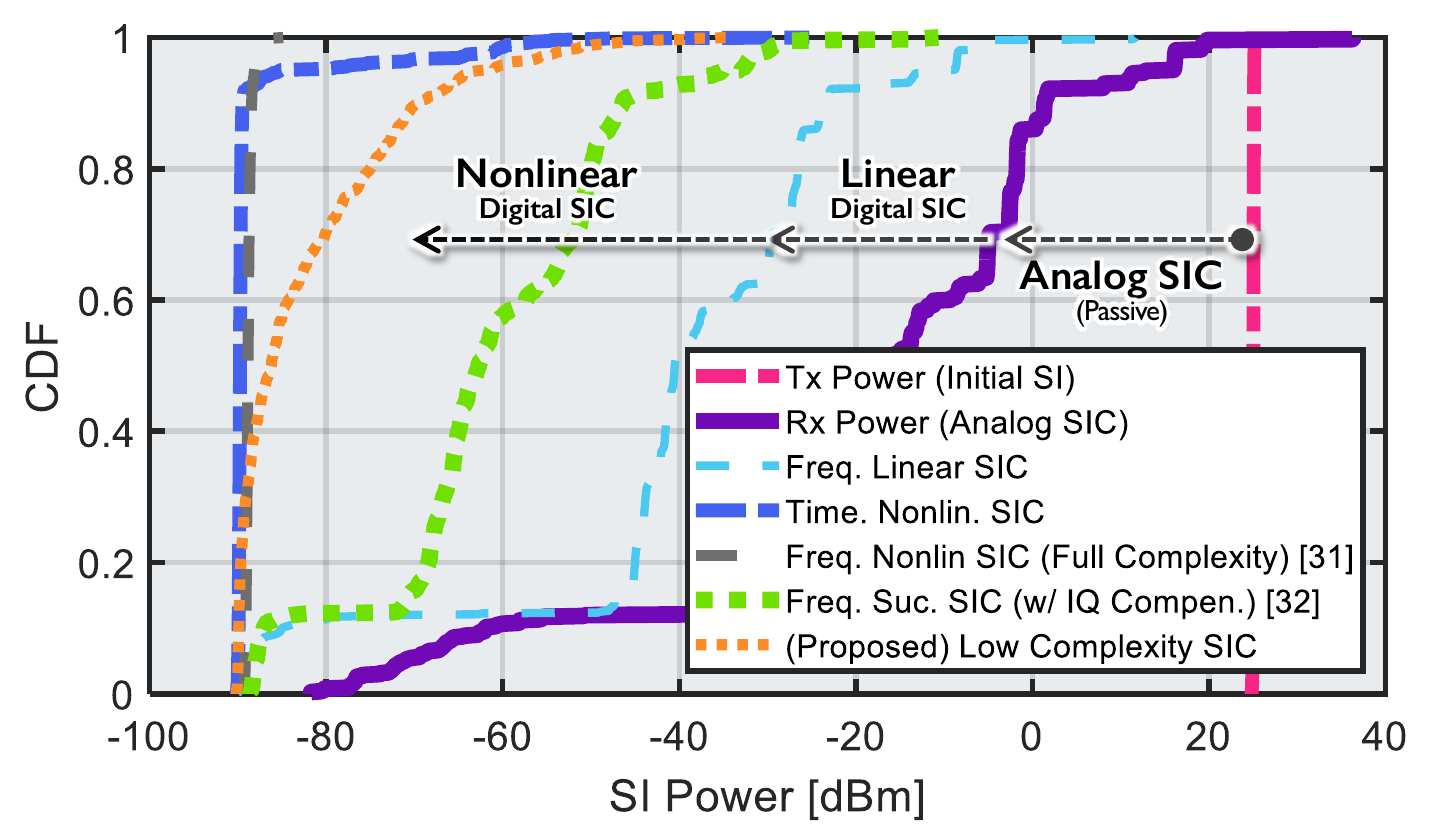}%
      \label{subfig.IB_cdf}
    }
    \subfigure[Distribution of SIC performance, SBFD]{%
      \includegraphics[width=0.66\columnwidth,keepaspectratio]
      {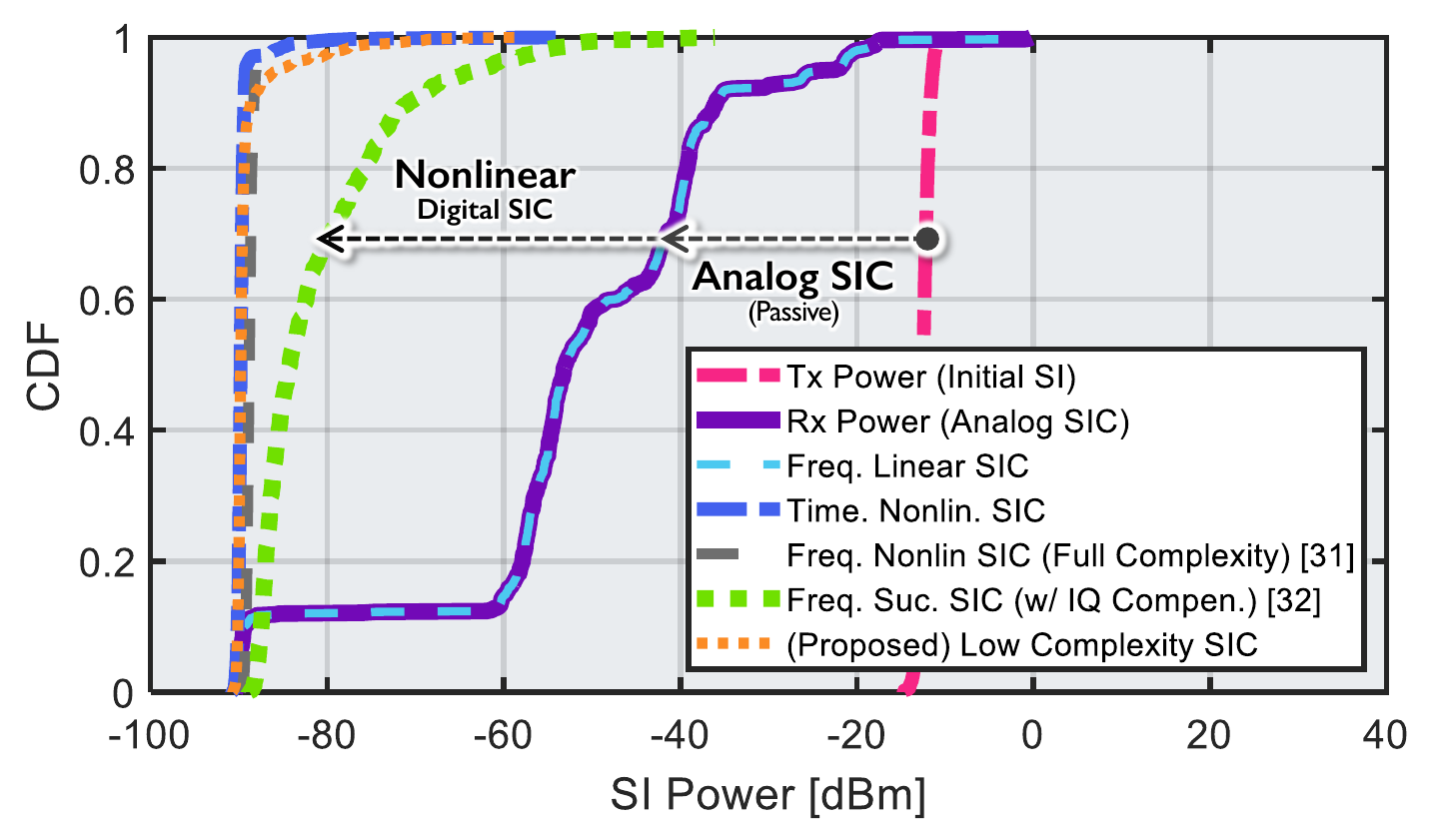}%
      \label{subfig.SB_cdf}
    }    
    \subfigure[Distribution of SIC performance, Partial overlap]{%
      \includegraphics[width=0.66\columnwidth,keepaspectratio]
      {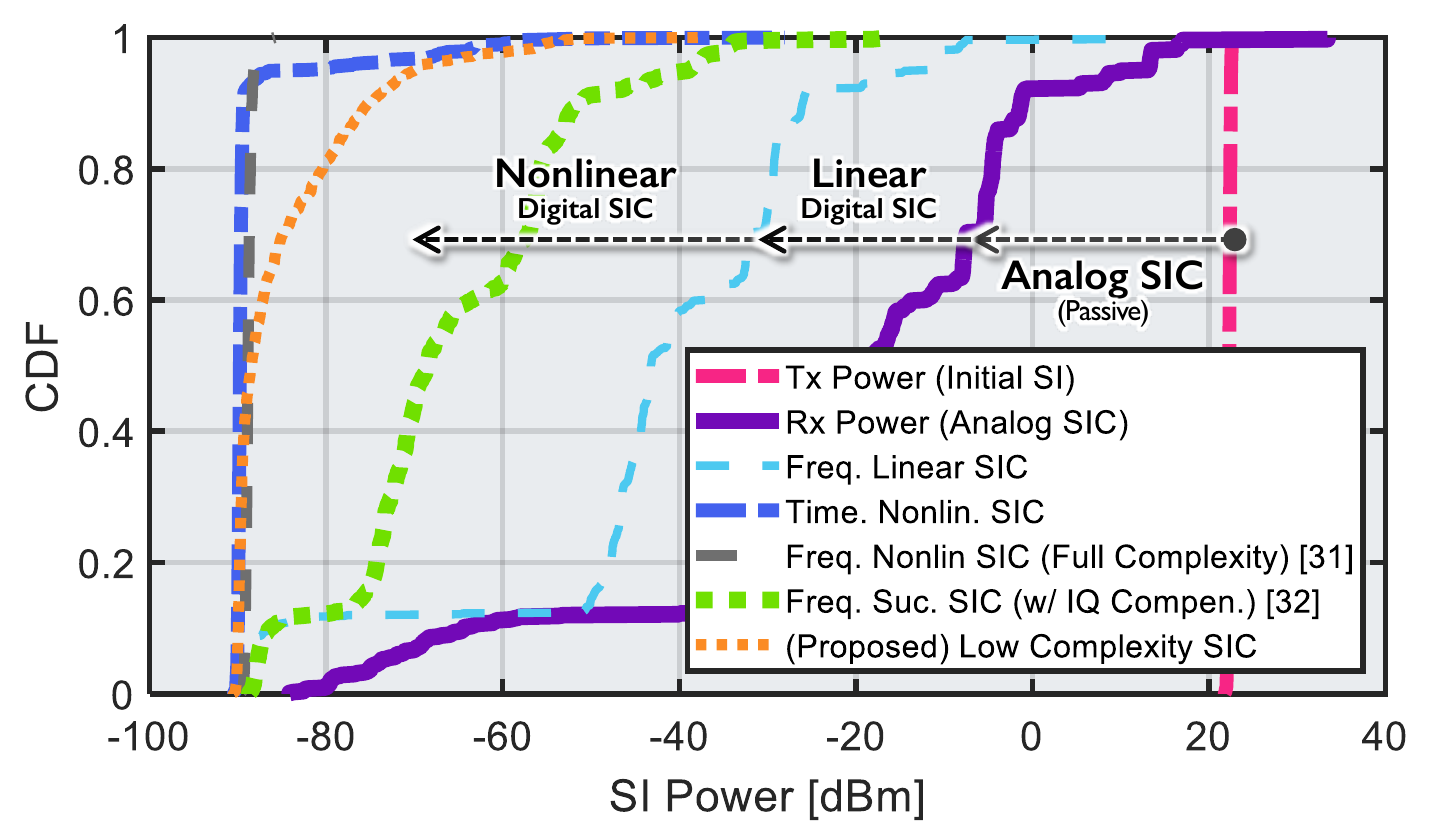}%
      \label{subfig.PO_cdf}
    }
        \subfigure[Residual SI power, IBFD]{%
      \includegraphics[width=0.66\columnwidth,keepaspectratio]
      {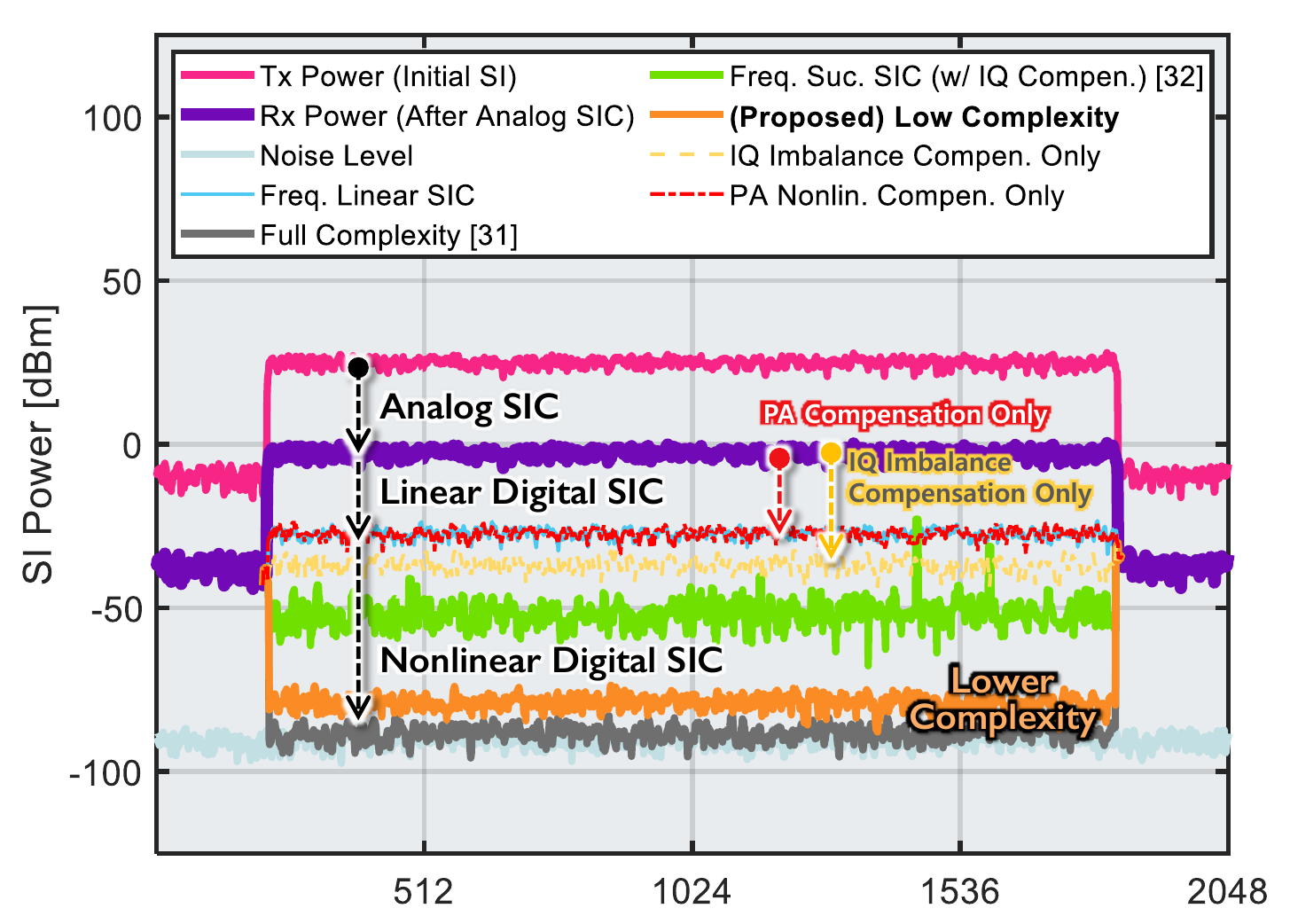}%
      \label{subfig.IB_RSI}
    }
    \subfigure[Residual SI power, SBFD]{%
      \includegraphics[width=0.66\columnwidth,keepaspectratio]
      {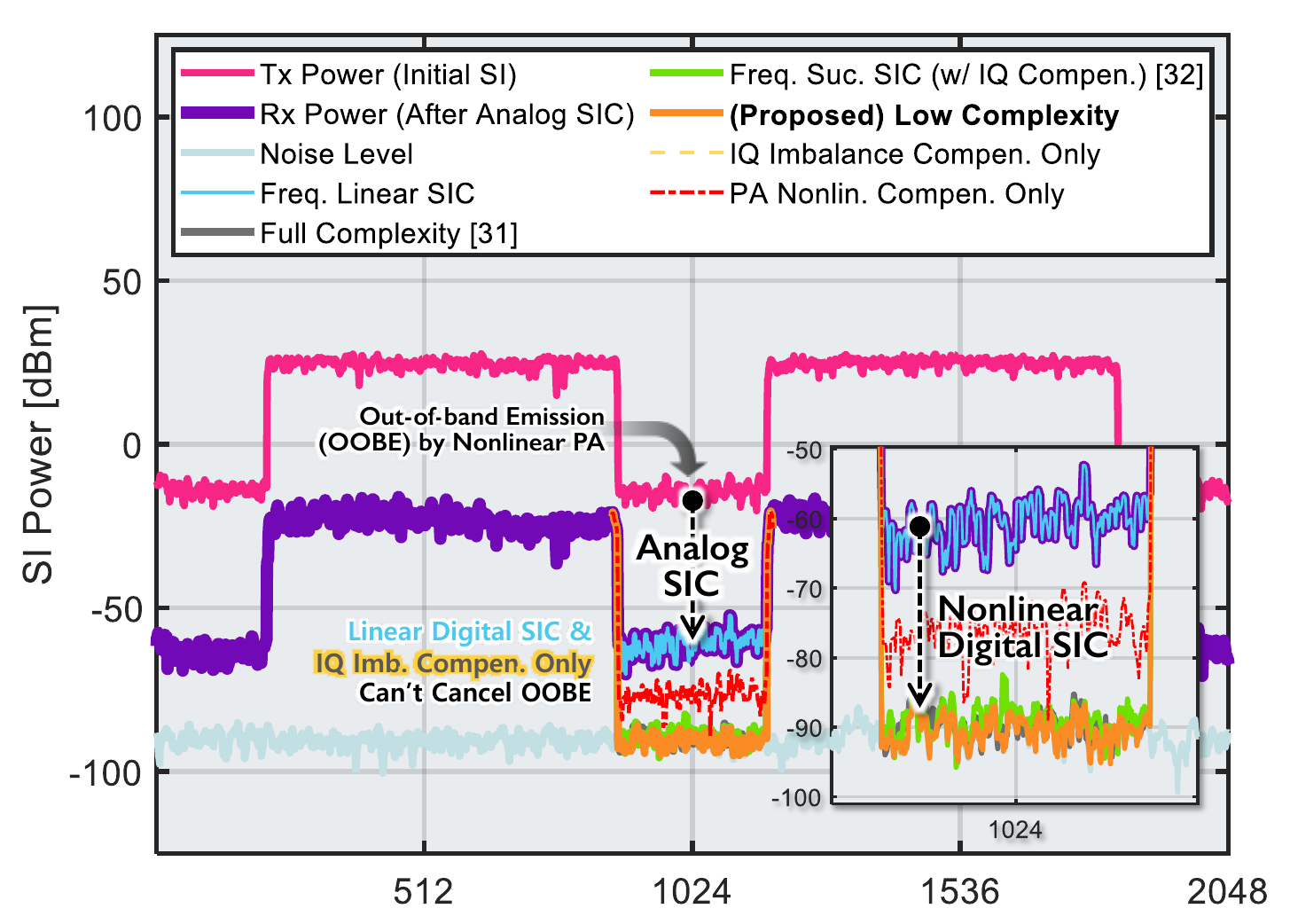}%
      \label{subfig.SB_RSI}
    }    
    \subfigure[Residual SI power, Partial overlap]{%
      \includegraphics[width=0.66\columnwidth,keepaspectratio]
      {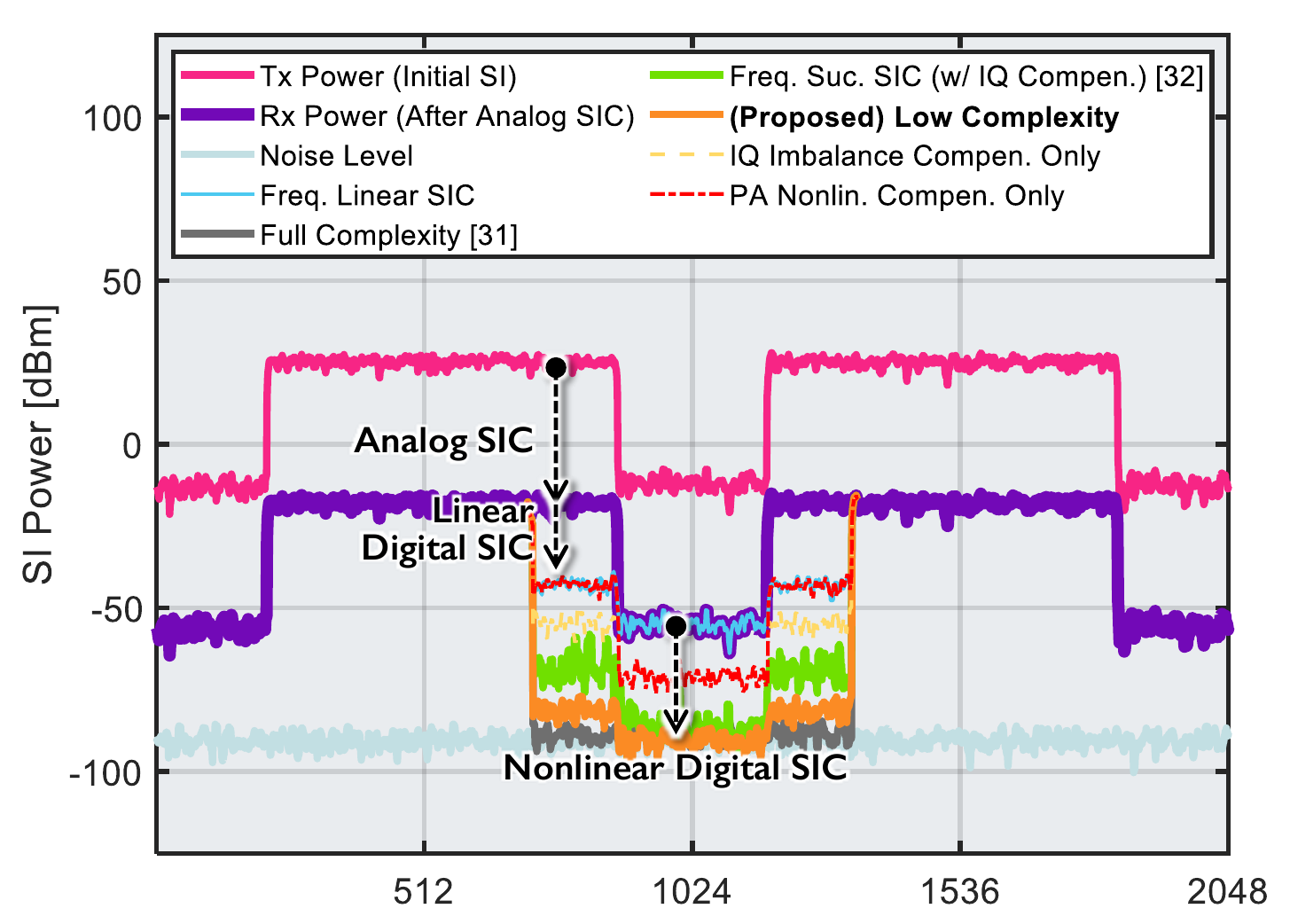}%
      \label{subfig.PO_RSI}
    }
 \end{center}
	\caption{Simulation results of nonlinear digital SIC for each duplex mode, in-band full-duplex (IBFD), sub-band full-duplex (SBFD), and UL/DL partial overlap. For each digital SIC algorithms, cumulative distribution function (CDF) in terms of the residual SI is presented in (a)--(c), and (d)--(f) depicts the residual SI in terms of the subcarrier index. }
	\label{fig.ResSIC}
\end{figure*}

\begin{center}
	\begin{table}[] 
		\caption{Simulation Parameters}
		\begin{tabular}{>{\centering } m{1.5cm} |>{\centering} m{3.5cm} |>{\centering} m{2.5cm} }
			\toprule
			\textbf{Paramet{\tiny }er} & \textbf{Description} & \textbf{Value}
			\tabularnewline
			\midrule
			\multicolumn{3}{c}{\textbf{Sub-6 GHz channel parameters  {(including SLS)}}}
			\tabularnewline\hline
			\centering		$f_\mathrm{c}$  & Carrier frequency & 3.5 (GHz)  \tabularnewline \hline
			\centering		$\Delta f$  & Subcarrier spacing & $60$ (kHz)  \tabularnewline \hline
			\centering		$E\left[|\mathcal{Z}[p]|^2\right]$  & Noise level & -90 (dBm)  \tabularnewline \hline
			\centering		$N_\text{Ray}$  & Number of rays through multipath & 25 \tabularnewline \hline
			\centering		-  &  {Maximum number of reflection for each ray} &  {10} \tabularnewline \hline
			\multicolumn{3}{c}{\textbf{BS specifications}} 
			\tabularnewline\hline
			\centering		$N^\mathtt{Tx}_\text{array}, N_\text{array}^\mathtt{Rx}$  & Antenna array  & $16$ $(4\times 4)$  UPA \tabularnewline \hline
			\centering		$P=|\mathbb{P}|$  & Number of subcarriers\\ (FFT size)& $2048$ \tabularnewline \hline
			\centering		$d_\text{ISO}$ & Isolation between $\mathtt{Tx}$ and $\mathtt{Rx}$ & $1~\text{(m)}$ \tabularnewline \hline

			\centering      $-$ & Modulation order & $ 16-\text{QAM}$\tabularnewline \hline
			\centering      $\left(a_1A_\text{digi}\right)^2$ & Tx power at BS\\(for each subcarrier)& $23$ (dBm)\tabularnewline \hline
			\centering      - &  {Tx power at UE\\ (Desired UL Signal)}&  {$15$ (dBm)}\tabularnewline \hline
			\centering      $\left|b^\text{IQ}\right|^2$ & IRR & $ 25$ (dB)\tabularnewline \hline
			\multicolumn{3}{c}{\textbf{SIC parmeters}} 
			\tabularnewline\hline
			\centering		$L$  & Number of time delay taps\\(Time domain SIC) & 20  \tabularnewline \hline
			\centering		$2K_\text{max}+1$  &  {SIC nonlinear order\\(Order of digital canceller)} & $ {5}$  \tabularnewline\hline
			\multicolumn{3}{c}{ {\textbf{Flexible Duplex Settings}}} \tabularnewline\hline
			\centering       {$\mathbb{P}^\text{UL}_\text{IB}=\mathbb{P}^\text{DL}_\text{IB}$} &  {UL/DL subcarrier set\\(IBFD)} &  {$\{1238,\cdots,2858\}$\\($135$ REs)} \tabularnewline\hline
			\centering       {$\mathbb{P}^\text{DL}_\text{SB}=\mathbb{P}^\text{DL}_\text{PO}$} &  {DL subcarrier set\\(SBFD \& Partial Overlap)} &  {$\{1238,\cdots,1904\}\cup\{2192,\cdots,2858\}$} \tabularnewline\hline
			\centering       {$\mathbb{P}^\text{UL}_\text{SB}$} &  {UL subcarrier set\\(SBFD)} &  {$\{1904,\cdots,2192\}$\\($24$ REs)} \tabularnewline\hline
			\centering       {$\mathbb{P}^\text{UL}_\text{PO}$} &  {UL subcarrier set\\(Partial overlap)} &  {$\{1742,\cdots,2354\}$\\($51$ REs)} \tabularnewline\bottomrule

		\end{tabular}
		\label{table.parameter}
	\end{table}
\end{center}

\section{Performance Results and Remarks}

\subsection{Simulation Purpose and Setups}
We conducted extensive simulations to evaluate the SIC performance of the proposed method of various flexible duplex scenarios. In the simulation, a single BS supports UL-UE and DL-UE via flexible duplex. The SI at BS, ${\mathcal{Y}}_m^\text{SI}[p]$, is occurred at the BS  $\mathtt{Rx}$, generated from $\mathcal{X}_m[p]$. In order to decode $\mathcal{Y}_m^\text{UL}[p]$, the BS employs SIC. We quantified the residual SI after operating the proposed digital SIC as $\sum_{p\in\mathbb{P}^\text{UL}}\left|{\mathcal{Y}}_m^\text{SI}[p]-\hat{\mathcal{Y}}_m^\text{SI}[p]\right|^2$. The carrier frequency was set to $3.5$~GHz. The number of OFDM subcarriers or DFT size followed the 5G NR standard~\cite{NR,NR2} with $P=2048$. Other parameters are detailed in Table~\ref{table.parameter}.

	\begin{figure}
	    \begin{center}
    		\includegraphics[width=0.9\columnwidth]{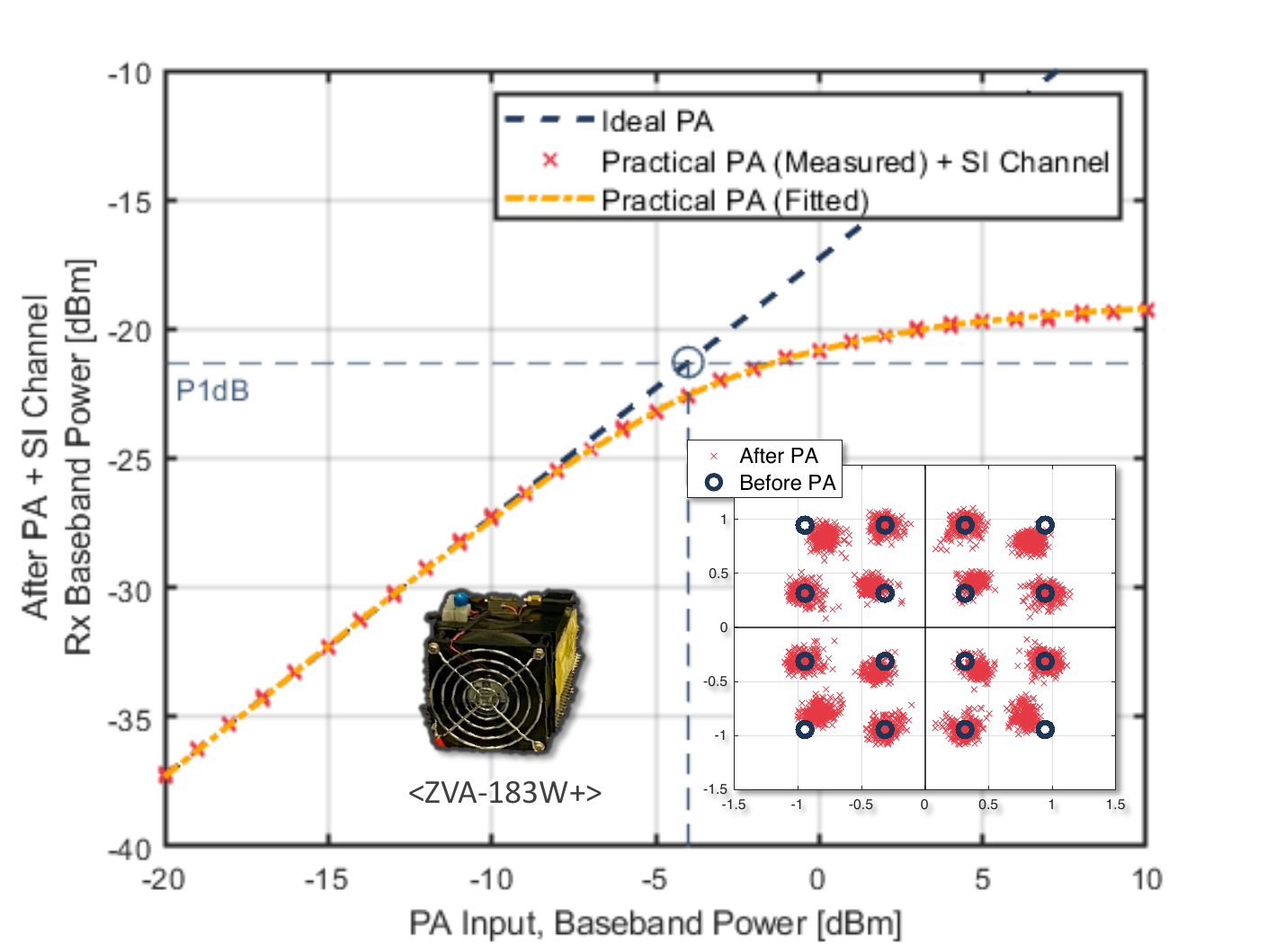}
	    \caption{ {PA nonlinearity and fitted data.}}
    	\label{fig.PAnon}
	\end{center}
	\end{figure}

 {To simulate the nonlinearity of PA, we measured the input and output of an actual PA, as illustrated in Fig.\ref{fig.testbed}, and represented it with the equation:
\begin{equation}
f_\text{PA}(x)=35.89x-2.24|x|^2x+0.0015|x|^4x.
\label{eq.PA}
\end{equation}
The nonlinear characteristics of the PA are presented in Fig.~\ref{fig.PAnon}, based on measurements from an actual PA. As the input power surpasses the P1dB point, a deviation from linear gain is observed in the Rx gain through the SI channel. The nonlinear model derived from the above equation shows a strong alignment with the measured data.
}
We generated the SI channel based on 3D ray-tracing. The time-domain SI channel, $\mathbf{H}^\text{SI}[n]$, consists of $N_\text{Ray}$ multipaths and is in length as $n\leq L$. We presumed the LoS channel, $\mathbf{H}^\text{SI}[n^\text{LoS}]=\mathbf{H}^\text{LoS}$, follows a near-field assumption~\cite{SphWave}, and that NLoS channels abide by far-field assumptions. Given these premises, we recorded the channel gain, $g_\ell$, AoA, $\theta^\text{AoA}_\ell$, and AoD, $\theta_\ell^\text{AoD}$ for observed rays, $\ell$. After collecting data for BSs in an outdoor environment, we generated a MIMO channel as in (\ref{eq.MIMOa}).

 {To evaluate the impact of the IQ imbalance $b^{\text{IQ}}$ on SIC performance and emphasize the importance of securing it in advance, we have incorporated these two items into the simulation results, as highlighted in the pilot approach discussed in Section IV.B. In the \textit{‘IQ imbalance compensation only'} case, only the IQ imbalance is addressed, with no nonlinear cancellation applied beyond the third term of the PA, resulting in only linear cancellation through compensation of  $b^{\text{IQ}}$. Conversely, in the \textit{‘PA nonlinearity compensation only’} case, compensation is applied solely to terms beyond the third term of the PA, without addressing the IQ imbalance. In this scenario, the nonlinear SI caused by $b^{\text{IQ}}\mathcal{X}^*_m[-p]$ in $\mathcal{X}^{\text{IQ}}_m[p] = \mathcal{X}_m[p] + b^{\text{IQ}} \mathcal{X}^*_m[-p]$ is left unmitigated, resulting in residual interference.
}

 {
We set the maximum value of the SIC nonlinearity order to 5. Although this value is unknown in advance at the SI canceller, we aligned it with the maximum order of the PA nonlinearity in (\ref{eq.PA}) to ensure a strict complexity comparison and avoid overestimation. We employed three duplex modes: IBFD, SBFD, and partially overlapped. Based on the 3GPP OFDM standard, we defined $\mathbb{P}^\text{UL}$ and $\mathbb{P}^\text{DL}$ according to the number of resource elements typically utilized with an FFT size of 2048.}

\begin{figure*}
\begin{center}
    \subfigure[ {Performance by the SIC complexity}]{%
      \includegraphics[width=0.9\columnwidth,keepaspectratio]
      {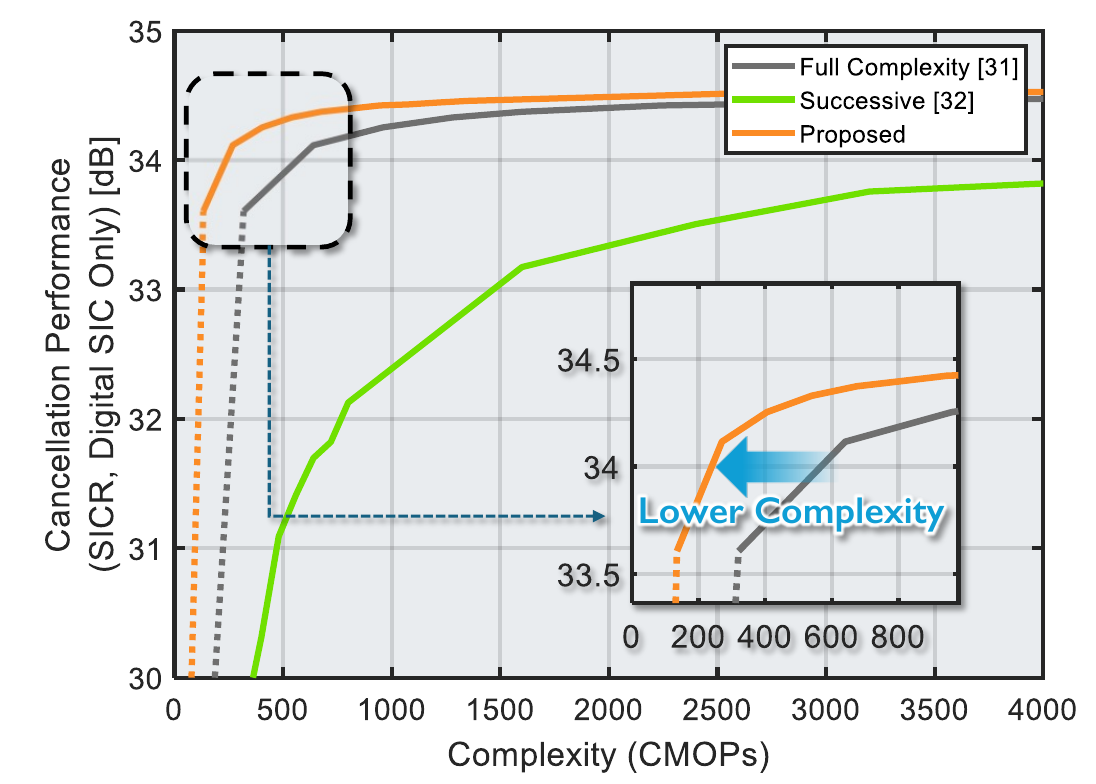}%
      \label{subfig.comp}
    }
    \subfigure[ {Performance by the number of training sample}]{%
      \includegraphics[width=0.9\columnwidth,keepaspectratio]
      {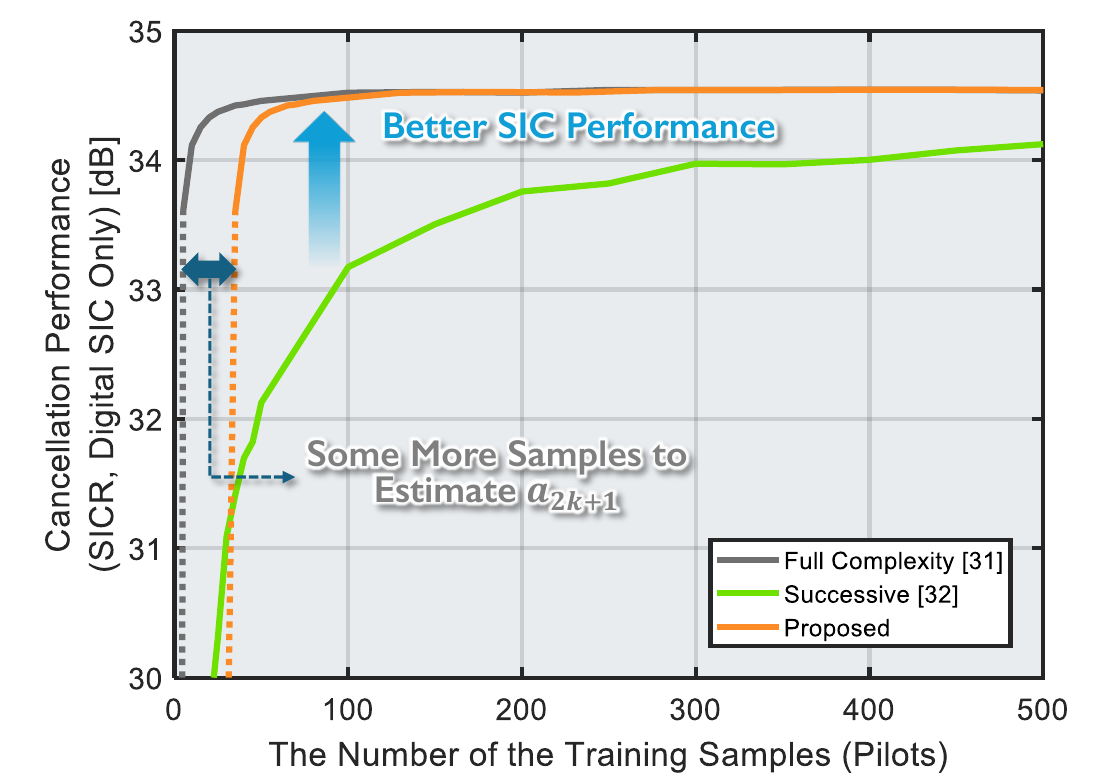}%
      \label{subfig.sample}
    }    
 \end{center}
	\caption{ {SIC performance tradeoff in terms of complexity and training samples.}}
	\label{fig.tradeoff}

\end{figure*}
 
\subsection{Performance of the Low-complexity SIC: Flexible Duplex Scenario}

Fig.\ref{fig.ResSIC} illustrates the SIC performance across different flexible duplex scenarios. The distribution of residual SI is displayed in Figs.\ref{subfig.IB_cdf}-\ref{subfig.PO_cdf}, and the residual SI for each subcarrier in the frequency domain is shown in Figs.\ref{subfig.IB_RSI}-\ref{subfig.PO_RSI}. After receiving SI following passive analog SIC, the BS operates the digital SIC in various modes. For frequency domain analysis, our proposed method is compared against linear SIC, a full LS-operating SIC~\cite{K_basis}, and a low-complexity SIC using a successive estimator~\cite{FDC_iter}.

Notably, our proposed SIC technique, despite being the least complex, approaches the ultimate performance limit: the noise level. While our method excludes some weak nonlinear basis from the estimation based on Theorem~2, which affects the marginal performance gap, this approach significantly reduces complexity. In scenarios with few DL subcarriers and no overlap with UL subcarriers, the SIC performance of our algorithm approaches the noise level. This is attributed to Theorem~2, where $\mu_{2k+1}[p]$ for $p\in\mathbb{P}^\text{UL}$ is proportional to $|\mathbb{P}^\text{DL}|$ and smaller if $p \notin \mathbb{P}^\text{DL}$, and $\hat{\mathcal{I}}_{2k+1}$ is smaller than the noise level, $E\left[|\mathcal{Z}[p]|^2\right]$.

In the IBFD case, as shown in Fig.\ref{subfig.IB_cdf}, the proposed method performs comparably to methods with higher complexity. Linear SIC, while effective, exhibits lower performance compared to nonlinear frequency domain technologies. For the SBFD scenario, where OOBE acts as SI due to nonlinearity, depicted in Fig.\ref{subfig.SB_RSI}, linear SIC fails to perform effectively unlike in the IBFD scenario.  {In the SBFD scenario, there is no linear SI component from DL to UL subcarriers, unlike in the IBFD or partial overlap cases. As a result, the initial SI in SBFD is the smallest and is solely due to nonlinear components. This is evident from the significantly lower initial SI in Fig. 5(b) compared to Figs. 5(a) and 5(c).}
The partial overlap case, depicted in Fig.\ref{subfig.PO_cdf}, shows intermediate performance improvements between IBFD and SBFD. In areas of UL/DL overlap, it performs SIC up to the OOBE level, as shown in Fig.\ref{subfig.PO_RSI}. Linear SIC does not mitigate the nonlinear SI, as observed in this scenario. The impact of this phenomenon and the performance of nonlinear SIC are discussed in PoC-based SLS.

 {Focusing on IQ imbalance, the effect of the proposed pilot pattern is as follows. In the IBFD scenario, estimating only the nonlinear coefficient of the PA without compensating for IQ imbalance performs worse than linear digital SIC that compensates only IQ imbalance. This is due to the gap in the nonlinear basis increasing as the difference between $\mathcal{X}^{\text{IQ}}_m$  and $\mathcal{X}_m$  rises to higher-order $2k + 1$ terms. This effect is illustrated in Fig. 5(d), where, if IQ imbalance is not compensated, the impact of $b^{\text{IQ}} \mathcal{X}^*_m[-p]$ remains significant, resulting in suboptimal performance for ‘PA nonlinearity compensation only’ at the linear SIC level. Conversely, in the SBFD scenario, where there is no DL linear component in the UL band, digital SIC is not feasible by compensating only IQ imbalance, as shown in Figs. 5(e) and 5(f). Here, the ‘IQ imbalance compensation only’ approach closely aligns with linear SIC, yielding minimal cancellation. However, for nonlinear SIC in the subband, compensation of IQ imbalance $b^{\text{IQ}}$ is essential to achieve cancellation down to the noise level, as demonstrated in the ‘PA nonlinearity compensation only’ case. Digital SIC can thus be effectively implemented with low complexity by adaptively considering RF chain impairment across various flexible duplex situations.}

 {\subsection{Performance Tradeoff in terms of the Complexity \& Overhead}}
 {The proposed method provides a clear complexity advantage, with only minimal overhead compared to conventional methods, requiring just a few extra training samples to estimate the PA coefficients separately. To demonstrate this, we included simulation results in Fig.~\ref{fig.tradeoff}. To focus primarily on the factors affecting digital SIC, the self-interference cancellation ratio (SICR) is defined as follows:
\begin{equation}
\text{SICR} =E\left( \frac{ \sum_{p\in\mathbb{P}^\text{UL}}\left|\mathcal{Y}_m^\text{SI}[p]\right|^2}{ \sum_{p\in\mathbb{P}^\text{UL}}\left|\Delta{\mathcal{Y}}_m^\text{SI}[p]\right|^2} \right)
\end{equation}
where $\Delta{\mathcal{Y}}_m^\text{SI}$ denotes the residual SI after each SIC step.}

 {The SICR performance in relation to the complexity of the proposed method is shown in Fig.~\ref{subfig.comp}, and the required overhead is illustrated in Fig.~\ref{subfig.sample}. In particular, this overhead arises from the need for training samples to estimate and utilize $a_{2k+1}$ in advance. However, this situation is infrequent, occurring primarily when the PA is turned on or off, and the actual overhead may be less than indicated in the simulation results in Fig.~\ref{subfig.sample}. The proposed method achieves the highest SIC performance with the least complexity because it has the lowest complexity per training sample as shown in Table~\ref{table.complexity}.
}


\begin{figure*}[t]
\centering
		\subfigure[FD PoC testbed]{\includegraphics[width=1.3\columnwidth,keepaspectratio]{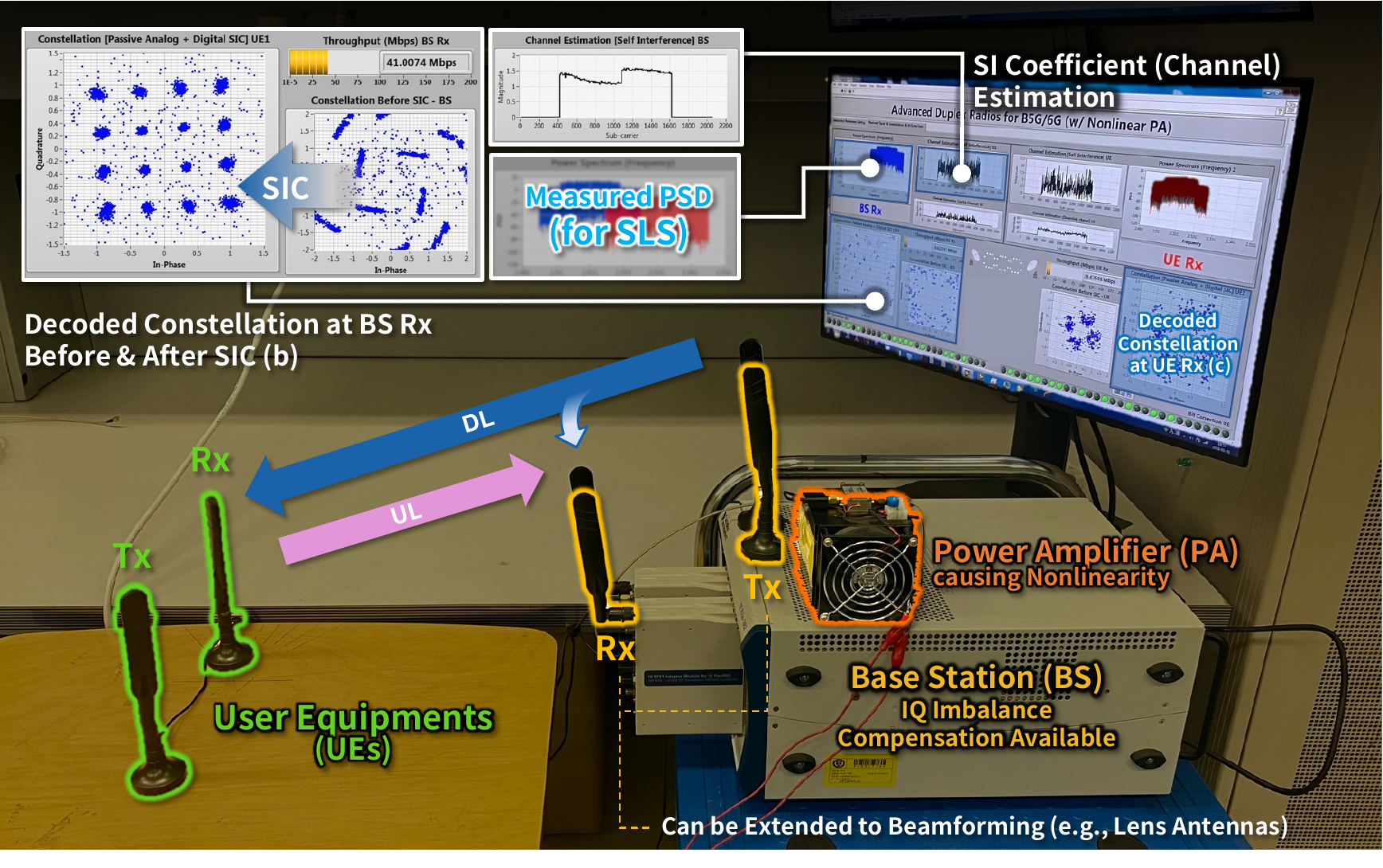}%
		\label{fig.testbed}}
		\subfigure[Received signal at each Rx antennas]{\includegraphics[width=.6\columnwidth,keepaspectratio]{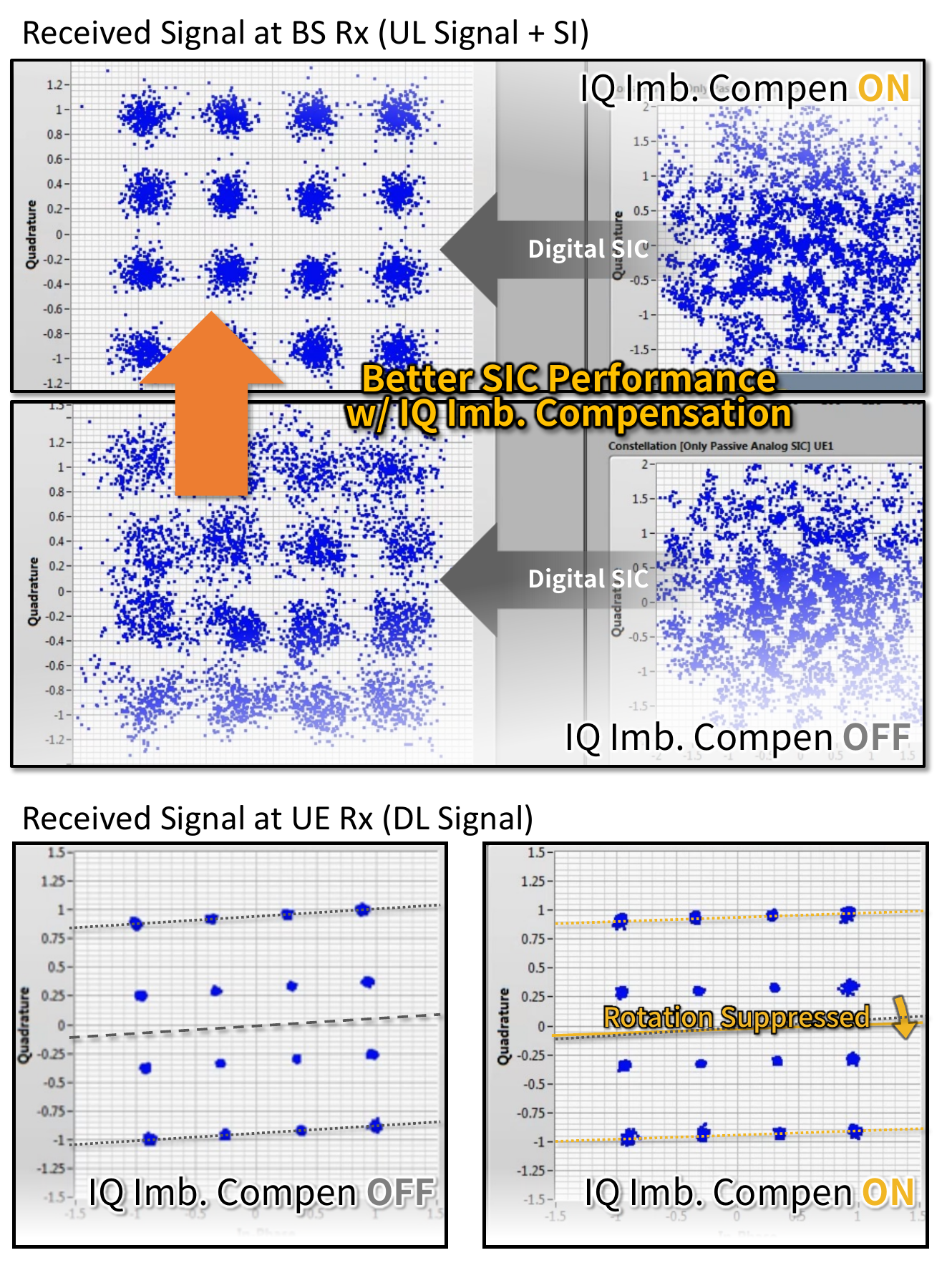}%
		\label{fig.IQsic}}
	\caption{ {PoC settings for flexible duplex and impact of PA nonlinearity and IQ imbalance.}}
\label{fig.poc}	
\end{figure*}

\begin{figure*}[t]
	\begin{center}
		\subfigure[ {The channels from BS to UE (DL) and UE to BS (UL) for system-level simulations (SLS) are measured using 3D ray-tracing.}]{\includegraphics[width=2.0\columnwidth,keepaspectratio]
			{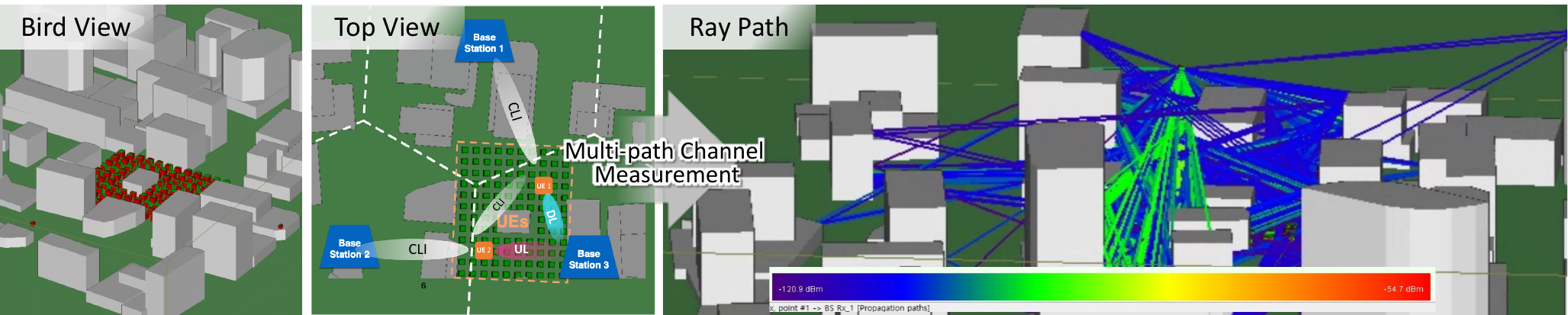}%
			\label{fig.raytracing}}
		\subfigure[ {FD SLS scenario}]{\includegraphics[width=.6\columnwidth,keepaspectratio]
			{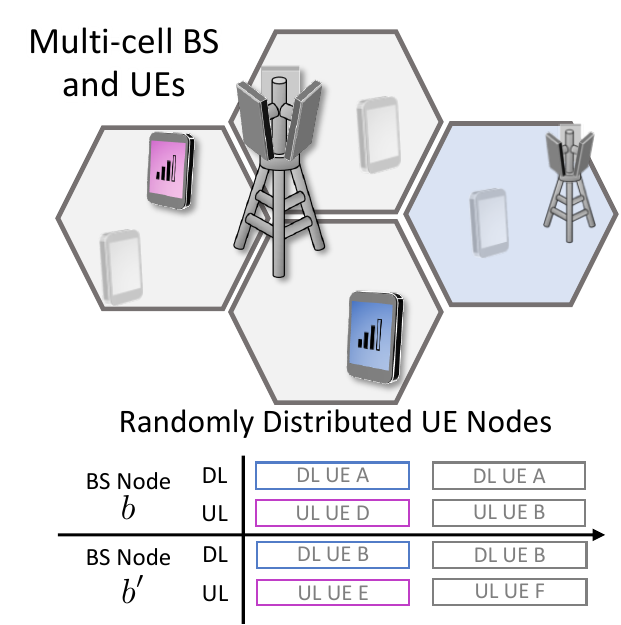}%
			\label{fig.te}}
		\subfigure[ {Throughput}]{\includegraphics[width=.7\columnwidth,keepaspectratio]
			{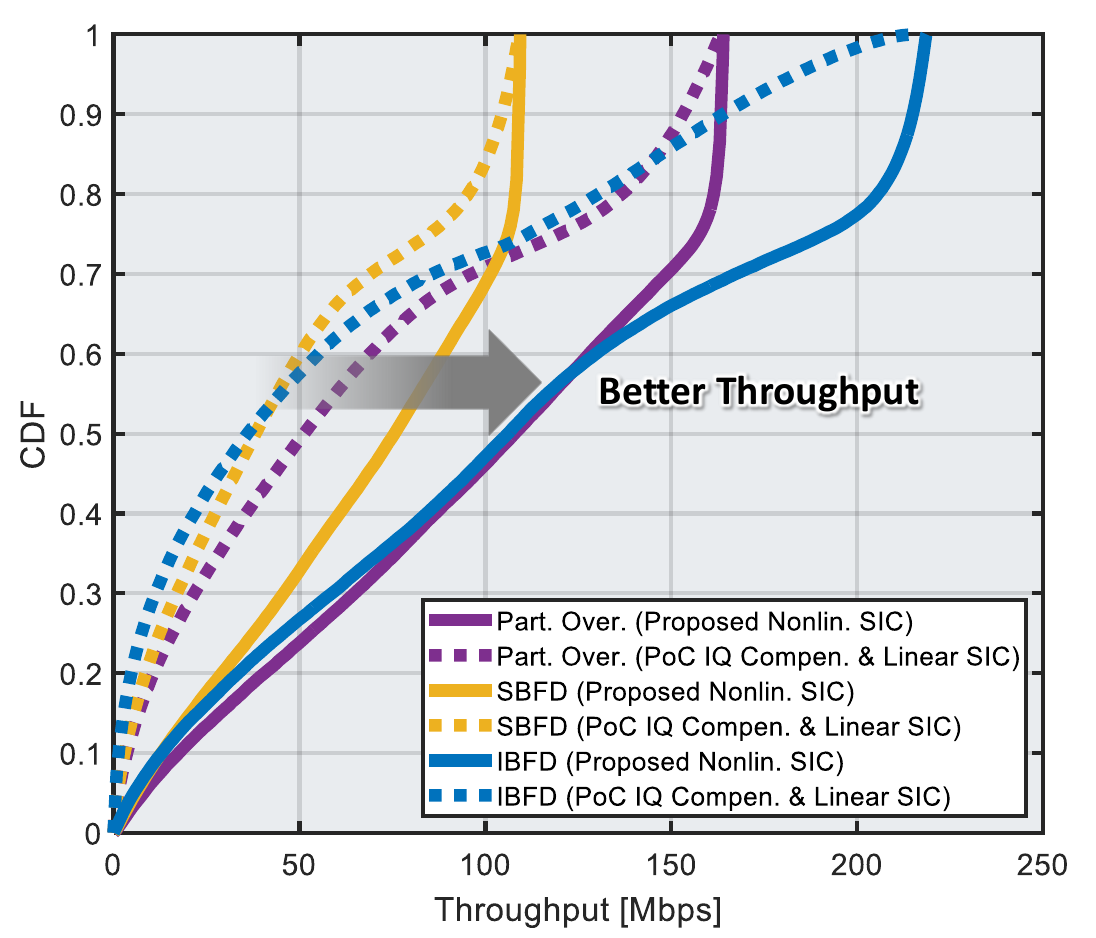}%
			\label{fig.sres1}}
		\subfigure[ {Cell coverage}]{\includegraphics[width=.7\columnwidth,keepaspectratio]
			{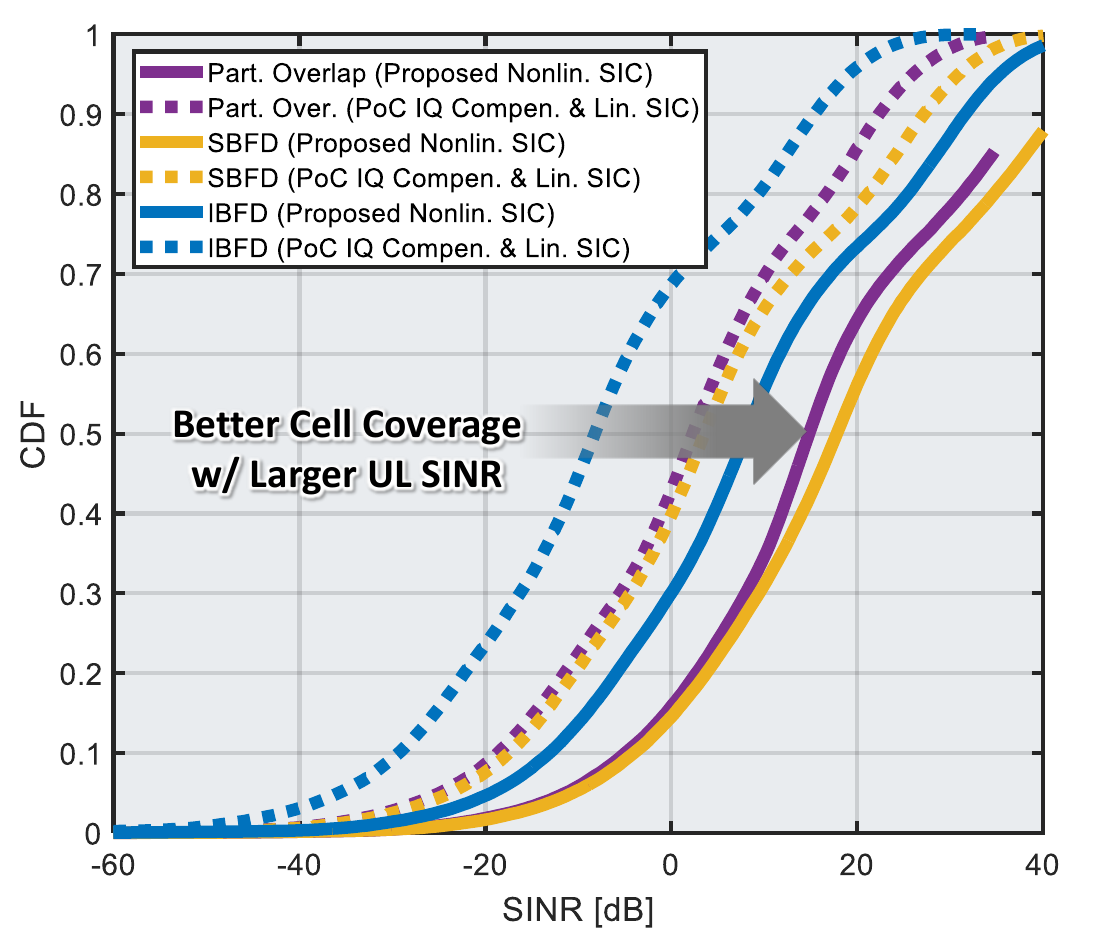}%
			\label{fig.sres2}}
		
	\end{center}
	\caption{The results for system-level simulation, considering the nonlinear SI measured by PoC, the multi-user scenario, 3D ray-tracing-based multipath channel, and the proposed digital SIC}
\label{fig.pores}	
\end{figure*}

\begin{figure*}[!b]
\hrulefill
\begin{subequations}
\begin{align}
\mathit\Phi_{2k+1,m}[p]
&=\frac{1}{P^{2k+1}}\left(\sum_{(q_1,\cdots,q_{2k+1})\in\mathbb{Q}^{2k+1}_p}\mathcal{X}_m[q_1]\mathcal{X}_m[q_2]\prod_{i=3}^{k+1}{\mathcal{X}_m[q_i]}\prod_{j=k+2}^{2k+1}{\mathcal{X}_m^*[q_j]}\right)\label{subeq.V.b}\\
&=\frac{1}{P^{2k+1}}\left\{\sum_{q_1,q_2\in\mathbb{P}^\text{DL},(q_3',\cdots,q_{2k+1}')\in \mathbb{Q}^{2k-1}_{p-(q_1+q_2)}}\mathcal{X}_m[q_1]\mathcal{X}_m[q_2]\left(\prod_{j=k+2}^{2k+1}{\mathcal{X}_m[q_j']}\prod_{i=3}^{k+1}{\mathcal{X}^*_m[q_i']}\right)^*\right\}\label{subeq.V.c}\\
&=\frac{1}{P^{2}}\left\{\sum_{q_1,q_2\in\mathbb{P}^\text{DL}}\mathcal{X}_m[q_1]\mathcal{X}_m[q_2]\left(\frac{1}{P^{2k-1}}\sum_{(q_3',\cdots,q_{2k+1}')\in \mathbb{Q}^{2k-1}_{p-(q_1+q_2)}}\prod_{j=k+2}^{2k+1}{\mathcal{X}_m[q_j']}\prod_{i=3}^{k+1}{\mathcal{X}^*_m[q_i']}\right)^*\right\}\label{subeq.V.d}\\
\mathit\Phi_{2k+1,m}[p]&=\frac{1}{P^{2k+1}}\left\{\sum_{(q_1,\cdots,q_{2k+1})\in\mathbb{Q}^{2k+1}_p}\mathcal{X}_m[q_1]\prod_{i=2}^{k+1}{\mathcal{X}_m[q_i]}\left(\mathcal{X}_m^*[q_{k+2}]\prod_{j=k+3}^{2k+1}{\mathcal{X}_m^*[q_j]}\right)\right\}\label{subeq.V2}
\end{align}
\end{subequations}
\end{figure*}

 {\subsection{PoC-based SLS Results: Impact of the Nonlinear SIC on Network Level} }

 {We evaluated the impact of the frequency domain, nonlinear SIC via SLS utilizing 3D ray-tracing and considering PoC implementation. We simulate the scheduling of DL-UE and UL-UE pairings in a multi-user scenario, confirming the throughput and the cell coverage performance across the network.} 

 {\subsubsection{PoC-based SIC measurement} Using PoC, we measured the frequency-domain SI levels under various flexible duplex PA nonlinearities, along with the residual SI after IQ imbalance correction and linear SIC, and subsequently reflected these measurements in the SLS and updated them relative to the initial submission. Additionally, we aimed to provide insights on scheduling DL-UE and UL-UE by combining UL/DL data measured through 3D ray-tracing and evaluating the proposed SIC performance.}
	
 {First, we investigated 1) whether SIC performance is affected by the presence or absence of IQ imbalance compensation, implemented as a default feature in our testbed, and 2) the extent to which IQ imbalance influences SIC performance directly. As shown in Fig.\ref{fig.IQsic}, compensating for IQ imbalance leads to an improvement in SIC performance. Additionally, we observed that the rotation of the DL signal, which can be interpreted as a signal similar to SI, is minimized at the UE Rx depending on the presence or absence of IQ imbalance compensation.}

\begin{center}
	\begin{table}[] 
		\caption{SLS Parameters}
		\centering
		\begin{tabular}{>{\centering} m{3.5cm} |>{\centering} m{2.5cm} }
			\toprule
			\textbf{Description} & \textbf{Value}
			\tabularnewline
			\midrule
			\centering		  {Inter-site distance (ISD)} &  {$150~\text{(m)}$} \tabularnewline \hline
			\centering		  {Number of UEs} &   {$100$} \tabularnewline \hline
			\centering		  {Size of UE grid} &   {$80$ (m) $\times$ $50$ (m)} \tabularnewline \bottomrule
		\end{tabular}
		\label{table.SLS}
	\end{table}
\end{center}

 {\subsubsection{3D ray-tracing-based channel generation}
Using 3D ray-tracing, we generated UL/DL multipath channels for randomly placed UEs. We measured UL/DL channels for three BSs and 100 UEs and included cross-link interference (CLI) to simulate realistic SLS, as shown in Fig.\ref{fig.raytracing}. The channel parameters are provided in Table \ref{table.parameter}. The SI, DL, and UL channels are all multipath and frequency-selective. For frequency-selective UL/DL link SINR, a representative value was determined using the effective SINR method~\cite{effSINR1}. Factors related to the spatial deployment of UEs and BSs are detailed in Table \ref{table.SLS}.}

 {Following the allocation of UL/DL, depicted in Figs.\ref{subfig.IB_RSI}-\ref{subfig.PO_RSI}, we assessed network performance after scheduling both linear SIC measured by PoC and proposed nonlinear SIC operations. As demonstrated in Fig.\ref{fig.testbed}, during the real-time operation of the flexible duplex PoC, we measured the SI received at BS $\mathtt{Rx}$ as power spectral density (PSD) in terms of frequency. These PSD measurements for IBFD, SBFD, and partial overlap were incorporated into the SLS and used to quantify the level of SI before digital SIC.}

 {\subsubsection{SLS results for FD networks}
Figs.\ref{fig.sres1} and \ref{fig.sres2} displayed improvements in throughput and UL SINR, respectively. With the proposed SIC, unlike linear SIC, the system can cancel OOBE, thus enhancing performance even in SBFD scenarios with orthogonal UL/DL allocation. While the use of only linear SIC in IBFD scenarios leaves significant residual SI, nonlinear SIC shows the best performance in these cases. Among flexible duplex scenarios, IBFD leads in throughput, and SBFD excels in cell coverage. This trade-off can be managed via partial overlap. }

 {The proposed SIC technology is low in complexity and, depending on the implementation approach, can achieve near-linear SIC complexity if $a_{2k+1}$ is secured in advance. Our SLS results illustrate the network-level performance advantages of the proposed method compared to linear SIC, particularly in terms of throughput and coverage. Specifically, the partial overlap duplex mode exhibits superior performance over IBFD and SBFD at higher frequencies. This result demonstrates that the proposed low-complexity, flexible frequency-domain nonlinear SIC approach has significant implications for enabling FD in base stations and networks, compatible with the current OFDM systems.}

\begin{figure*}[!b] 
\hrulefill
\begin{subequations}
\begin{align}
&E\left[\left|\mathit\Phi_{2k+1,m}[p]\right|^2\right]
=E\left[\mathit\Phi_{2k+1,m}[p]\left(\mathit\Phi_{2k+1,m}[p]\right)^*\right]\\
&=E\left[\frac{1}{P^{2k+1}}\left(\sum_{(q_1,\cdots,q_{2k+1})\in\mathbb{Q}^{2k+1}_p}\prod\mathcal{X}_m[q_i]\prod\mathcal{X}^*_m[q_j]\right)\frac{1}{P^{2k+1}}\left(\sum_{(q_1',\cdots,q_{2k+1}')\in\mathbb{Q}^{2k+1}_p}\prod\mathcal{X}_m[q_i']\prod\mathcal{X}^*_m[q_j']\right)^*\right]\label{subeq.E.b}\\
&=\frac{1}{P^{4k+2}}E\left[\sum_{\alpha,\beta,\gamma,\delta\in [1,2k+1]}\sum_{\tilde{q}_\alpha,\tilde{q}_\beta,\tilde{q}'_\gamma,\tilde{q}'_\delta\in\mathbb{P}^\text{DL}}
\left(\mathcal{X}_m[\tilde{q}_\alpha]\mathcal{X}_m[\tilde{q}_\beta]\mathcal{X}_m^*[\tilde{q}'_\gamma]\mathcal{X}_m^*[\tilde{q}'_\delta]\right)
\prod_{i\in[1,2k+1],i\neq \alpha,\beta}{\mathcal{X}_m[\tilde{q}_i]}\prod_{j\in[1,2k+1],j\neq \gamma,\delta}{\mathcal{X}_m^*[\tilde{q}'_j]}
\right]\label{subeq.E.c}\\
&=\frac{1}{P^4}\sum_{\gamma,\delta\in [1,2k], \gamma\neq \delta}\left(\sum_{\tilde{q}_\gamma'=q_1,\tilde{q}'_\delta=q_2}E\left[
\underbrace{\left(\mathcal{X}_m[q_1]\mathcal{X}_m[q_2]\mathcal{X}_m^*[\tilde{q}_\gamma']\mathcal{X}_m^*[\tilde{q}'_\delta]\right)}
_{E\left[\mathcal{X}_m[q_1]\mathcal{X}_m[q_2]\mathcal{X}_m^*[q_1]\mathcal{X}_m^*[q_2]\right]=A_\text{digi}^4}
\mathit\Phi_{2k-1,m}^*[\rho]\mathit\Phi_{2k-1,m}[\rho]\right]\right)\label{subeq.E.d}\\
&\quad\quad\quad+\frac{1}{P^4}\sum_{\alpha,\gamma\in[1,k+1]}\left(\sum_{\tilde{q}_\alpha=q_{2k+1},\tilde{q}_\gamma'=q_{2k+1}'}E\left[
\underbrace{\left(\mathcal{X}_m[\tilde{q}_\alpha]\mathcal{X}_m^*[q_{2k+1}]\right)}_{E[\mathcal{X}_m\mathcal{X}_m^*]=A_\text{digi}^2}\mathit\Phi_{2k-1,m}[\rho]
\underbrace{\left(\mathcal{X}_m^*[\tilde{q}_\gamma']\mathcal{X}_m[q_{2k+1}']\right)}_{E[\mathcal{X}_m^*\mathcal{X}_m]=A_\text{digi}^2}\mathit\Phi^*_{2k-1,m}[\rho]
\right]\right) \label{subeq.E.e}\\
&=\left(\frac{A_\text{digi}}{P}\right)^4\sum_{\gamma,\delta\in [1,2k], \gamma\neq \delta}\sum_{\tilde{q}_\gamma'=q_1,\tilde{q}'_\delta=q_2}E\left[\left|\mathit\Phi_{2k-1,m}[\rho]\right|^2\right]
+\left(\frac{A_\text{digi}}{P}\right)^4\sum_{\alpha,\gamma\in[1,k+1]}\sum_{\tilde{q}_\alpha=q_{2k+1},\tilde{q}_\gamma'=q_{2k+1}'}E\left[\left|\mathit\Phi_{2k-1,m}[p]\right|^2\right]
\label{subeq.E.f}\\
&=\left(\frac{A_\text{digi}}{P}\right)^4 2k(2k-1)\sum_{\rho=q_1+q_2-p}E\left[\left|\mathit\Phi_{2k-1,m}[\rho]\right|^2\right]
+\left(\frac{A_\text{digi}}{P}\right)^4(k+1)^2\left|\mathbb{P}^\text{DL}\right|^2E\left[\left|\mathit\Phi_{2k-1,m}[p]\right|^2\right]\label{subeq.E.g}
\end{align}
\label{eq.EQA}
\end{subequations} 
\end{figure*}

\section{Conclusion}
This paper proposed a nonlinear, low-complexity, frequency-domain, digital self-interference cancellation (SIC) algorithm for flexible duplex communications, covering both in-band and sub-band full-duplex scenarios. The effectiveness of this method, which decomposes features into nonlinear, linear, unchanged, and weak categories based on theorems and propositions, is demonstrated by our numerical results. Through multi-user, proof-of-concept-based simulations, we also showed the impact of implementing the proposed digital SIC with low-complexity. Furthermore, our method is proven to be compatible with OFDM systems.  {Future research will focus on network-level optimization of the flexible duplex system, including the development of scheduling algorithms, and implementing MIMO flexible duplex PoC.}

 \appendices
\section*{Acknowledgment}

The authors would like to thank Mr. Y.-T. Kim and Mr. S.-M. Kim for helpful discussions on PoC implementations.

\section{PROOF OF LEMMA 1}
The formulation of $\mathit\Phi_{2k+1,m}[p]$ is given by equation (\ref{subeq.V.b}). Moreover, equation (\ref{subeq.V.c}) presents the effective decomposition of $\mathbb{Q}^{2k+1}_p$ by $q_1, q_2\in\mathbb{P}^\text{DL}$, and $\mathbb{Q}^{2k-1}_\rho$. Subsequently, $\mathit\Phi_{2k+1,m}$ is decomposed by $\mathcal{X}_m[q_1], \mathcal{X}_m[q_2]$ and $\mathit\Phi^*_{2k-1,m}[\rho]$ as in (\ref{subeq.V.d}). It is given that $q_1+q_2-\rho=p$ as the definition of $\mathbb{Q}^{2k+1}_p$ as follows:
\begin{equation}
q_1+q_2-\left(-\sum_{i=3}^{{k+1}}q_i+\sum_{j={k+2}}^{{2k+1}}q_j\right)=q_1+q_2-\rho=p.
\end{equation}
Therefore, we can express $\rho$ as $\rho=q_1+q_2-p$ leading to the recursive form in equation (\ref{eq.lem1.a}).

For the alternative term, $\mathit\Phi_{2k+1,m}[p]$ is decomposed by $\mathcal{X}_m[q_1], \mathcal{X}_m[q_{k+2}]$ and $\mathit\Phi_{2k-1,m}[\rho]$ as in (\ref{subeq.V2}). The sum of $q_i$ fulfills $q_1-q_{k+2}+\rho=p$ for $\mathbb{Q}^{2k-1}_\rho$. By replacing $\rho$ with $p-q_1+q_{k+2}$, we arrive at equation (\ref{eq.lem1.b}), and the lemma then follows.

\section{PROOF OF THEOREM 2}
By rearranging the formula of (\ref{subeq.E.b}), we can get the equation (\ref{subeq.E.c}), where $\tilde{q}_\ell$ and $\tilde{q}'_\ell$ are as follows:
\begin{equation}
\tilde{q}_\ell=
\begin{cases}
q_\ell&\ell\in[1,k+1]\\
q_\ell'&\ell\in[k+2,2k]
\end{cases}\quad
\tilde{q}_\ell'=
\begin{cases}
q_\ell'&\ell\in[1,k+1]\\
q_\ell&\ell\in[k+2,2k]
\end{cases}.
\end{equation}
The last subcarrier index $q_{2k+1}$ and $q_{2k+1}'$, however, are automatically decided by $q_1,\cdots,q_{2k}$ and $q_1',\cdots,q_{2k}'$ to satisfy (\ref{eq.setQ}).
To calculate $\mu_{2k+1}$, we can utilize the correlation of the nonlinear mapping as follows: 
\begin{equation}
E\left[\mathit\Phi_{2k-1,m}[\rho]\mathit\Phi_{2k-1,m}^*[\rho']\right]=
\begin{cases}
\mu_{2k-1}&\forall \rho=\rho'\\
0&\forall \rho\neq\rho'
\end{cases}.
\label{eq.Erho}
\end{equation}
The multiplication value of $\mu_{2k-1}$ only needs to be known for the case where $\rho=\rho'$.
According to the results of Lemma~1, in the computation of $\mu_{2k+1}$, the effective expectation equivalent to a non-zero value can be represented as in equations (\ref{subeq.E.d}) and (\ref{subeq.E.e}). Consequently, we can consider two cases for $\mu_{2k-1}$.

In the first case, let $\tilde{q}_\gamma'=q_1,\tilde{q}'_\delta=q_2$ as in (\ref{subeq.E.d}). Given $\gamma,\delta\in [1,2k]$ ($q_{2k+1}$ fixed), and $\gamma\neq \delta$, the number of possible cases to select $\gamma,\delta$ is $2k(k-1)$. To filter the cases where 
\begin{equation}
E\left[\mathcal{X}_m[\tilde{q}_\alpha]\mathcal{X}_m[\tilde{q}_\beta]\mathcal{X}^*_m[\tilde{q}_\gamma]\mathcal{X}_m^*[\tilde{q}_\delta]\right]=~A_\text{digi}^2,
\end{equation}
the condition $q_1+q_2-\rho=p$ should be met. We can then calculate the summation with $\Lambda^\text{DL}[p+\rho]$ as in Theorem~1.

In the second case, with $q_{2k+1}$, and $q_{2k+1}'$ fixed, we have $\tilde{q}_\alpha=~q_{2k+1},\tilde{q}_\gamma'=q_{2k+1}'$ as in (\ref{subeq.E.e}). We select $\alpha,\gamma~\in[1,k+1]$, then the number of the cases is $(k+1)^2$. Moreover, each of $\tilde{q}_\alpha,\tilde{q}'_\gamma$ varies as $\tilde{q}_\alpha,\tilde{q}'_\gamma\in\mathbb{P}^\text{DL}$ and the number of pairs is $\left|\mathbb{P}^\text{DL}\right|^2$. 
Upon organizing the summation terms, we arrive at equation (\ref{subeq.E.f}), and the theorem follows.

\appendices

\bibliographystyle{IEEEtran}
\bibliography{Ref_TWC}

\end{document}